\patchcmd{\maketitle}{\@copyrightpermission}{
   \begin{minipage}{0.3\columnwidth}
     \href{https://creativecommons.org/licenses/by/4.0/}{\includegraphics[width=0.90\textwidth]{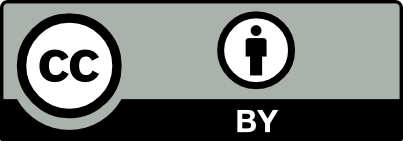}}
   \end{minipage}\hfill
   \begin{minipage}{0.7\columnwidth}
     \href{https://creativecommons.org/licenses/by/4.0/}{This work is licensed under a Creative Commons Attribution International 4.0 License.}
   \end{minipage}

   \vspace{5pt}
}{}{}
\begin{document}

\title[wCQ: A Fast Wait-Free Queue with Bounded Memory Usage]{wCQ: A Fast Wait-Free Queue with Bounded Memory Usage}

\author{Ruslan Nikolaev}
\email{rnikola@psu.edu}
\affiliation{%
        \institution{The Pennsylvania State University}
        \city{University Park}
        \state{PA}
        \country{USA}
}

\author{Binoy Ravindran}
\email{binoy@vt.edu}
\affiliation{%
        \institution{Virginia Tech}
        \city{Blacksburg}
        \state{VA}
        \country{USA}
}

\begin{abstract}
The concurrency literature presents a number of approaches for building non-blocking, FIFO, multiple-producer and multi\-ple-consu\-mer (MPMC) queues. However, only a fraction of them have high performance. In addition, many queue designs, such as LCRQ, trade memory usage for better performance. The recently proposed SCQ design achieves both memory efficiency as well as excellent performance. Unfortunately, both LCRQ and SCQ are only lock-free. On the other hand, existing wait-free queues are either not very performant or suffer from potentially unbounded memory usage. Strictly described, the latter queues, such as Yang \& Mellor-Crummey's (YMC) queue, forfeit wait-freedom as they are blocking when memory is exhausted.

We present a wait-free queue, called wCQ. wCQ is based on SCQ and uses its own variation of fast-path-slow-path methodology to attain wait-freedom and bound memory usage. Our experimental studies on x86 and PowerPC architectures validate wCQ's great performance and memory efficiency. They also show that wCQ's performance is often on par with the best known concurrent queue designs.

\end{abstract}

\begin{CCSXML}
<ccs2012>
<concept>
<concept_id>10003752.10003809.10011778</concept_id>
<concept_desc>Theory of computation~Concurrent algorithms</concept_desc>
<concept_significance>500</concept_significance>
</concept>
</ccs2012>
\end{CCSXML}

\ccsdesc[500]{Theory of computation~Concurrent algorithms}

\keywords{wait-free; FIFO queue; ring buffer}

\maketitle

\algnewcommand{\Null}{\textbf{null}}%
\algnewcommand{\Last}{\textbf{last}}
\algnewcommand{\algorithmicgoto}{\textbf{goto}}%
\algnewcommand{\Goto}[1]{\algorithmicgoto~\ref{#1}}%
\algdef{SE}[DOWHILE]{Do}{doWhile}{\algorithmicdo}[1]{\algorithmicwhile\ #1}%
\algnewcommand\Not{\textbf{!}}
\algnewcommand\AndOp{\textbf{and}\xspace}
\algnewcommand\ModOp{\textbf{mod}\xspace}
\algnewcommand\OrOp{\textbf{or}\xspace}

\SetKw{Break}{break}
\SetKw{Continue}{continue}
\SetKwIF{If}{ElseIf}{Else}{if (}{)}{else if}{else}{endif}

\SetKwRepeat{Do}{do}{while}
\SetKwProg{Fn}{}{}{}
\newcommand{\removelatexerror}{\let\@latex@error\@gobble}

\algnewcommand{\LineComment}[1]{\State \(\triangleright\) #1}

\section{Introduction}
\label{sec:intro}

The concurrency literature presents an array of efficient non-block\-ing data structures with various types of progress properties. \textit{Lock-free} data
structures, where \textit{some} thread must complete an operation after
a finite number of steps, have traditionally received substantial practical
attention. \textit{Wait-free} data structures, which provide even stronger
progress properties by requiring that \textit{all} threads complete any
operation after a finite number of steps, have been less popular
since they were harder to design and much slower than
their lock-free alternatives. Nonetheless, the design
methodologies have evolved over the years, and wait-free data structures have
increasingly gained more attention because of their strongest progress
property.

Wait-freedom is
attractive for a number of reasons: (1) lack of starvation and reduced tail latency; (2) security; (3) reliability. Even if we assume that the scheduler is non-adversarial, applications can still be buggy or malicious. When having shared memory between two entities or applications, aside from parameter verification,
bounding the number of operations in loops is desirable for security (DoS) and reliability reasons. Note that no theoretical bound exists for lock-free algorithms even if they generally have similar performance.

Creating efficient FIFO queues, let alone wait-free ones, is notoriously hard. Elimination techniques, so familiar for LIFO stacks~\cite{Hendler:2004:SLS:1007912.1007944}, are not that useful in the FIFO world. Although~\cite{Moir:2005:UEI:1073970.1074013} does describe FIFO elimination, it only works well for certain shorter queues. Additionally, it is not always possible to alter the FIFO discipline, as proposed in~\cite{Kirsch:2013:FSL:2960356.2960376,10.1145/3437801.3441583}.

FIFO queues are widely used in a variety of practical applications, which range from ordinary memory pools to programming languages and high-speed networking.
Furthermore, for building efficient user-space message passing
and scheduling, non-blocking algorithms are especially desirable
since they avoid mutexes and kernel-mode switches.

FIFO queues are instrumental for certain synchronization primitives. A number of languages, e.g., Vlang, Go, can benefit from having a fast queue for their concurrency and synchronization constructs. For example, Go needs a queue for its buffered channel implementation. Likewise, high-speed networking and storage libraries such as DPDK~\cite{DPDK} and SPDK~\cite{SPDK}
use \textit{ring buffers} (i.e., bounded circular queues) for various purposes when allocating
and transferring network frames, inter-process tracepoints, etc.
Oftentimes, straight-forward implementations of ring buffers (e.g., the one found in DPDK) are erroneously dubbed as ``lock-free'' or ``non-blocking'', whereas those
queues merely avoid \textit{explicit} locks but still lack true non-blocking
progress guarantees.

More specifically, such queues require a thread to reserve a
ring buffer slot prior to writing new data. These approaches, as previously
discussed~\cite{Feldman:2015:WMM:2835260.2835264,ringdisapp}, are technically
blocking
since one stalled (e.g., preempted) thread in the middle of an operation can
adversely affect other threads such that they will be unable to make further progress until the first thread resumes.
Although this restriction is not entirely unreasonable
(e.g., we may assume that the number of threads roughly equals the number of physical cores, and preemption is rare),
such queues leave much to be desired.
As with explicit spin locks, lack of true lock-freedom results in suboptimal performance unless preemption is disabled. (This can be undesirable or harder to do, especially in user space.)
Also, such queues cannot be safely used outside thread contexts, e.g.,
OS interrupts. Specialized libraries~\cite{liblfds} acknowledged~\cite{ringdisapp} this problem by
rolling back to Michael \& Scott's classical FIFO lock-free queue~\cite{Michael:1998:NAP:292022.292026}, which is correct and easily portable to all platforms but slow.

Typically, true non-blocking FIFO queues are implemented using \textit{Head} and \textit{Tail} references, which are updated using the compare-and-swap (CAS) instruction. However, CAS-based approaches do not scale well as the contention grows~\cite{Morrison:2013:FCQ:2442516.2442527,nikolaev:LIPIcs:2019:11335,Yang:2016:WQF:2851141.2851168} since \textit{Head} and \textit{Tail} have to be updated inside a CAS loop that can fail and repeat. Thus, previous works explored fetch-and-add (F\&A) on the contended parts of FIFO queues: \textit{Head} and \textit{Tail} references. F\&A always succeeds and consequently scales better. Using F\&A typically implies that there exist some ring buffers underneath.
Thus, prior works have focused on making these ring buffers efficient. However, ring buffer design through F\&A is not trivial when true lock- or wait-free progress is required. In fact, lock-free ring buffers historically needed CAS~\cite{10.1145/3332466.3374508,Tsigas:2001:SFS:378580.378611}.

Until recently, efficient ring buffers~\cite{Morrison:2013:FCQ:2442516.2442527,Yang:2016:WQF:2851141.2851168} were livelock-prone, and researchers attempted to workaround livelock issues by using additional layers of slower CAS-based queues. SCQ~\cite{nikolaev:LIPIcs:2019:11335} took a different approach by constructing a truly lock-free ring buffer that uses F\&A. Unfortunately, SCQ still lacks stronger wait-free progress guarantees.

The literature presents many approaches for building wait-free data structures. Kogan \& Petrank's \textit{fast-path-slow-path} methodology~\cite{Kogan:2012:MCF:2145816.2145835} uses a lock-free procedure for the fast path, taken most of the time, and falls back to a wait-free procedure if the fast path does not succeed. However, the methodology only considers CAS, and the construction of algorithms that heavily rely on F\&A for improved performance is unclear. (Substituting F\&A with a more general CAS construct would erase performance advantages.)

To that end, Yang and Mellor-Crummey's (YMC)~\cite{Yang:2016:WQF:2851141.2851168} wait-free queue implemented its own fast-path-slow-path method. But, as pointed out by Ramalhete and Correia~\cite{pedroWFQUEUEFULL}, YMC's design is flawed in its memory reclamation approach which, strictly described, forfeits wait-freedom. This means that a user still has to choose
from other wait-free queues which are slower, such as Kogan \& Petrank's~\cite{kpWFQUEUE} and CRTurn~\cite{pedroWFQUEUE} queues. These queues do not use F\&A and scale poorly.

This prior experience reveals that solving multiple problems such as creating an unbounded queue, avoiding livelocks, and attaining wait-freedom in the same algorithm may not be an  effective design strategy. Instead, a compartmentalized design approach may be more effective wherein at each step, we solve only one problem because it enables reasoning about separate components in isolation. For example, lock-free SCQ achieves great performance and memory efficiency. SCQ can be extended to attain wait-freedom. 

In this paper, we present Wait-Free Circular Queue (or wCQ). wCQ uses its own variation of fast-path-slow-path methodology (Section~\ref{sec:wfcq}) to attain wait-freedom and bound memory usage.
When falling back to the slow path after bounded number of attempts on the fast path, wCQ's threads collaborate with each other to ensure wait-freedom.
wCQ requires double-width CAS, which is nowadays widespread (i.e., x86 and ARM/AArch64). However, we also illustrate how wCQ can be implemented on architectures that lack such instructions including PowerPC and MIPS (see Section~\ref{sec:llsc}). We analyze wCQ's properties including  wait-freedom and bounded memory usage (Section~\ref{sec:correctness}). Our evaluations on x86 and PowerPC architectures validate wCQ's excellent performance and memory efficiency (Section~\ref{sec:eval}). Additionally, they show that wCQ's performance closely matches SCQ's. 

wCQ is the first fast wait-free queue with bounded memory usage. 
Armed with wCQ, the final goal of creating an unbounded wait-free queue is also realistic, as existing slower wait-free queues (e.g., CRTurn~\cite{pedroWFQUEUE}) can link faster wait-free ring buffers together.

\section{Preliminaries}
\label{sec:background}

For the sake of presentation, we assume a \textbf{sequentially consistent} memory model~\cite{Lamport:1979:MMC:1311099.1311750}, as otherwise, the pseudo-code becomes cluttered with implementation-specific barrier code. Our implementation inserts memory barriers wherever strongly-ordered shared memory writes are necessary.

\subsubsection*{Wait-Free Progress.}
In the literature, several categories of non-blocking data structures are
considered. In \textit{obstruction-free} structures,
progress is only guaranteed if one thread runs in isolation from others.
In \textit{lock-free} structures, \emph{one} thread is always guaranteed to make progress in a
finite number of steps. Finally, in \textit{wait-free} structures,
\emph{all} threads are always guaranteed to make progress in a
finite number of steps.
In lock-free structures, individual threads may
starve, whereas wait-freedom also implies starvation-freedom.
Unsurprisingly, wait-free data structures are
the most challenging to design, but they are especially useful for latency-sensitive applications which often have quality of service constraints.

Memory usage is an important property that is sometimes overlooked in the design of non-blocking data structures. All \emph{truly} non-blocking algorithms must bound memory usage. Otherwise, no further progress can be made when memory is exhausted.
Therefore, such algorithms are inherently blocking. True wait-freedom, thus, also implies
bounded memory usage. Despite many prior attempts to design wait-free FIFO queues, the design of
\emph{high-perform\-ant} wait-free queues, which also have bounded memory usage,
remains challenging -- exactly the problem that we solve in this paper.

\subsubsection*{Read-Modify-Write.}
Lock-free and wait-free algorithms typically use read-modify-write (RMW)
operations, which atomically read a memory variable, perform some operation 
on it, and write back the result.
Modern CPUs implement RMWs via
compare-and-swap (CAS) or a pair of
load-link (LL)/store-conditional (SC) instructions. For better scalability,
some architectures, such as x86-64~\cite{intel:manual} and AArch64~\cite{arm:manual},
support specialized operations such as F\&A (fetch-and-add), OR (atomic OR),
and XCHG (exchange). 
The execution time of these specialized instructions is bounded, which makes them especially useful for wait-free algorithms. Since the above operations are
unconditional, even architectures that merely implement them
via LL/SC (but not CAS) may still achieve bounded execution time depending
on whether their LL reservations occur in a wait-free manner. (Note that LL/SC can still fail due to OS events, e.g., interrupts.)
x86-64 and AArch64 also support (double-width) CAS on two
\textit{contiguous} memory words, which we refer to as CAS2.
CAS2 must not be confused with double-CAS, which updates two
\textit{arbitrary} words but is rarely
available in commodity hardware.

\subsubsection*{Infinite Array Queue.}
Past works~\cite{Morrison:2013:FCQ:2442516.2442527,nikolaev:LIPIcs:2019:11335,Yang:2016:WQF:2851141.2851168} argued that F\&A scales \emph{significantly} better than a CAS loop under contention and proposed to use ring buffers. However, building a correct and performant ring buffer with F\&A is challenging.

\begin{figure}
\begin{subfigure}{.5\columnwidth}
{\LinesNotNumbered
\begin{algorithm2e}[H]
\textbf{void} *Array[INFINITE\_SIZE]\;
\textbf{int} Tail = 0, Head = 0\;
\Fn{\textbf{void} Enqueue(\textbf{void *} p)} {
\While {\upshape True} {
    T = F\&A(\&Tail, 1)\;
	v = XCHG(\&Array[T], p)\;
	\lIf {\upshape v = $\bot$} {
		\Return
	}
}
}
\end{algorithm2e}
}
\end{subfigure}%
\hspace{-1.5em}
\begin{subfigure}{.5\columnwidth}
{\LinesNotNumbered
\begin{algorithm2e}[H]
\setcounter{AlgoLine}{7}
\Fn{\textbf{void} *Dequeue()} {
\Do (\tcp*[f]{While not empty}){\upshape{Load(\&Tail) > H + 1}} {
    H = F\&A(\&Head, 1)\;
    p = XCHG(\&Array[H], $\top$)\;
    \lIf {\upshape p $\neq$ $\bot$} {\Return p}
}
\Return \Null\;
}
\end{algorithm2e}
}
\end{subfigure}%
\vspace{-7pt}
\caption{Livelock-prone infinite array queue.}
\label{alg:infring1}
\end{figure}

To help with reasoning about this problem, literature~\cite{10.1145/78969.78972,Morrison:2013:FCQ:2442516.2442527} describe the concept of an ``infinite array queue.'' The queue~\cite{Morrison:2013:FCQ:2442516.2442527} assumes that we have an infinite array (i.e., memory is unlimited) and old array entries need not be recycled (Figure~\ref{alg:infring1}). The queue is initially empty, and
its array entries are set to a reserved $\bot$ value. When calling \textit{Enqueue}, \textit{Tail} is incremented using F\&A. The value returned by F\&A will point to an array entry where a new element can be inserted.  \textit{Enqueue} succeeds if the previous value was $\bot$. Otherwise, some dequeuer
arrived before the corresponding enqueuer, 
and thus the entry was modified to prevent the enqueuer
from using it. In the latter case, \textit{Enqueue} will attempt to execute F\&A again.
\textit{Dequeue} also executes F\&A on \textit{Head}. F\&A retrieves the position of a previously enqueued entry. To prevent this entry from being reused by a late enqueuer, \textit{Dequeue} places a reserved $\top$ value. If the previous value is not $\bot$, \textit{Dequeue} returns that value, which was previously inserted by \textit{Enqueue}. Otherwise, \textit{Dequeue} attempts to execute F\&A again.

Note that this infinite queue is clearly susceptible to livelocks since all \textit{Dequeue} calls may get well ahead of \textit{Enqueue} calls, preventing any further progress. Prior queues, such as LCRQ~\cite{Morrison:2013:FCQ:2442516.2442527}, YMC~\cite{Yang:2016:WQF:2851141.2851168}, and SCQ~\cite{nikolaev:LIPIcs:2019:11335} are all inspired by this infinite queue, and they all implement ring buffers that use finite memory. However, they diverge on how to address the livelock problem. LCRQ workarounds livelocks by ``closing'' stalled ring buffers. YMC tries to hit two birds with one stone by assisting unlucky threads while also ensuring wait-freedom (but the approach is flawed as previously discussed). Finally, SCQ uses a special ``threshold'' value to construct a lock-free ring buffer instead. In this paper, we use and extend SCQ's approach to create a wait-free ring buffer.

\subsubsection*{SCQ Algorithm.}
In SCQ~\cite{nikolaev:LIPIcs:2019:11335}, a program comprises of $k$ threads, and the ring buffer size is $n$, which is a power of two. SCQ also reasonably assumes that $k\le n$. wCQ, discussed in this paper, makes identical assumptions.

\begin{figure}
\begin{minipage}{.5\columnwidth}
{\LinesNotNumbered
\begin{algorithm2e}[H]
\Fn{\textbf{bool} Enqueue\_Ptr(\textbf{void} *p)} {
\textbf{int} index = \textbf{fq}.Dequeue()\;
\If (\tcp*[f]{Full}) {\upshape index = $\varnothing$} {\Return False\;}
data[index] = p\;
\textbf{aq}.Enqueue(index)\;
\Return True\tcp*{Done}
}
\end{algorithm2e}
}
\end{minipage}%
\begin{minipage}{.5\columnwidth}
{\LinesNotNumbered
\begin{algorithm2e}[H]
\setcounter{AlgoLine}{7}
\Fn{\textbf{void} *Dequeue\_Ptr()} {
\textbf{int} index = \textbf{aq}.Dequeue()\;
\If (\tcp*[f]{Empty}) {\upshape index = $\varnothing$} {\Return \Null\;}
\textbf{void} *p = data[index]\;
\textbf{fq}.Enqueue(index)\;
\Return p\tcp*{Done}
}
\end{algorithm2e}
}
\end{minipage}%
\vspace{-7pt}
\caption{Storing pointers via indirection.}
\label{alg:ringptr}
\end{figure}

One of SCQ's key ideas is indirection. SCQ uses two ring buffers which store
``allocated'' and ``freed'' indices in \textbf{aq} and \textbf{fq}, respectively.
Data entries (which are pointers or any fixed-size data) are stored in a separate array
which is referred to by these indices. For example, to store pointers (Figure~\ref{alg:ringptr}), \textit{Enqueue\_Ptr} dequeues an index
from {\bf fq}, initializes an array entry with a pointer, and inserts
the index into {\bf aq}. A counterpart \textit{Dequeue\_Ptr} dequeues this index from {\bf aq},
reads the pointer, and inserts the index back into {\bf fq}.

SCQ's \textit{Enqueue} (for either \textbf{aq} or \textbf{fq}) need not check if the corresponding queue is full since the maximum number of indices is $n$. This greatly simplifies SCQ's design since only \textit{Dequeue} needs to be handled specially for an empty queue.
SCQ is \textit{operation-wise} lock-free: at least one enqueuer and one dequeuer both succeed after a finite number of steps.

Figure~\ref{alg:ringadv} presents the SCQ algorithm. The algorithm provides a practical implementation of the infinite array queue that was previously discussed. Since memory is finite in practice, the same array entry can be used over and over again in a cyclic fashion. Thus, to distinguish recycling attempts, SCQ records \textit{cycles}. XCHG is replaced with CAS as the same entry can be accessed through different cycles. LCRQ~\cite{Morrison:2013:FCQ:2442516.2442527} uses a very similar idea.

Each ring buffer in SCQ keeps its {\it Head} and {\it Tail} references. Using these references, \textit{Enqueue} and \textit{Dequeue} can
determine a position in the ring buffer ($j=Tail\mod n$) and a current cycle ($cycle=Tail\div n$).
Each \textit{entry} in SCQ ring buffers records \textit{Cycle} and \textit{Index} for the inserted entry. Since each entry has an implicit position, only the \textit{Cycle} part needs to be recorded.
(Note that \textit{Index}'s bit-length matches the position bit-length since $n$ is the maximum value for both; thus \textit{Index} and \textit{Cycle} still fit in a single word.) SCQ also reserves one bit in each entry: \textit{IsSafe} handles corner cases when an entry is still occupied by some old cycle but some newer dequeuer needs to mark the entry unusable.

All entries are updated sequentially. To reduce false sharing, SCQ permutes queue
positions by using {\it Cache\_Remap}, which places two adjacent entries
into different cache lines; the same
cache line is not reused as long as possible.

When \textit{Enqueue} successfully inserts a new index, it resets a special \textit{Threshold} value to the maximum. The threshold value is decremented by \textit{Dequeue} when the latter is unable to make progress. A combination of these design choices, justified in~\cite{nikolaev:LIPIcs:2019:11335}, allows SCQ to achieve lock-freedom directly inside the ring buffer itself.

\begin{figure*}
\begin{subfigure}{.32\textwidth}
\begin{algorithm2e}[H]
\textbf{int} Threshold = -1\tcp*{Empty SCQ $ $ }
\textbf{int} Tail = 2n, Head = 2n\;
\tcp {Init entries: \{.Cycle=0,}
\tcp {  .IsSafe=1, .Index=$\bot$\}}
\textbf{entry\_t} Entry[2n]\;
\BlankLine
\BlankLine
\BlankLine
\BlankLine
\BlankLine
\BlankLine
\BlankLine
\BlankLine
\BlankLine
\BlankLine
\Fn{\textbf{void} Enqueue\_SCQ(\textbf{int} index)} {
\While {\upshape try\_enq(index) $\ne$ OK} {
	\tcp{Try again}
}
}
\BlankLine
\Fn{\textbf{int} Dequeue\_SCQ()} {
\If {\upshape Load(\&Threshold) $<$ 0} {
	\Return $\varnothing$\tcp*{Empty}
}
\While {\upshape try\_deq(\&idx) $\ne$ OK}{
\tcp{Try again}
}
\Return idx\;
}
\end{algorithm2e}
\end{subfigure}%
\hspace{-1.5em}
\begin{subfigure}{.35\textwidth}
\begin{algorithm2e}[H]
\setcounter{AlgoLine}{10}
\Fn {\textbf{void} consume(\textbf{int} h, \textbf{int} j, \textbf{entry} e)} {
	OR(\&Entry[j], \{ 0, 0, $\bot_c$ \})\;
}
\Fn {\textbf{void} catchup(\textbf{int} tail, \textbf{int} head)} {
\While {\upshape \Not CAS(\&Tail, tail, head)} {
	head = Load(\&Head)\;
	tail = Load(\&Tail)\;
	\lIf {\upshape tail $\ge$ head} {\textbf{break}}
} 
}
\Fn{\textbf{int} try\_enq(\textbf{int} index)} {
T = F\&A(\&Tail, 1)\;
j = Cache\_Remap(T \ModOp 2n)\;
E = Load(\&Entry[j])\;\label{ealoop}
	\If {\upshape E.Cycle < Cycle(T) $\AndOp$ (E.IsSafe \OrOp Load(\&Head) $\le$ T) $\AndOp$ (E.Index = $\bot$ \OrOp $\bot_c$)} {
New = \{ Cycle(T), 1, index \}\;
\If {\upshape \Not CAS(\&Entry[j], E, New)} {\Goto{ealoop}}
	\If {\upshape Load(\&Threshold) $\ne$ 3n-1} {Store(\&Threshold, $3n - 1$)}
\Return OK\tcp*{Success$ $ }
}
\Return T\tcp*{Try again$ $ }
}
\end{algorithm2e}
\end{subfigure}%
\begin{subfigure}{.32\textwidth}
\begin{algorithm2e}[H]
\setcounter{AlgoLine}{29}
\Fn{\textbf{int} try\_deq(\textbf{int} *index)} {
H = F\&A(\&Head, 1)\;
j = Cache\_Remap(H \ModOp 2n)\;
E = Load(\&Entry[j])\;\label{daloop}
\If {\upshape E.Cycle = Cycle(H)} {
	consume(H, j, E)\;
	*index = E.Index\;
	\Return OK\tcp*{Success}
}
New = \{ E.Cycle, 0, E.Index \}\;
\If {\upshape E.Index = $\bot$ \OrOp $\bot_c$} {
	New = \{ Cycle(H), E.IsSafe, $\bot$\}\;
}
\If {\upshape E.Cycle < Cycle(H)} {
\lIf {\upshape \Not CAS(\&Entry[j], E, New)} {\Goto{daloop}}
}
T = Load(\&Tail)\tcp*{Exit if}
\If (\tcp*[f]{empty}) {\upshape T $\le$ H + 1} {
	catchup(T, H + 1)\;
	F\&A(\&Threshold, -1)\;
	*index = $\varnothing$\tcp*{Empty}
	\Return OK\tcp*{Success}
}
\If {\upshape F\&A(\&Threshold, -1) $\le$ 0} {
	*index = $\varnothing$\tcp*{Empty}
	\Return OK\tcp*{Success}
}
\Return H\tcp*{Try again}
}
\end{algorithm2e}
\end{subfigure}%
\vspace{-7pt}
\caption{Lock-free circular queue (SCQ): \textit{Enqueue} and \textit{Dequeue} are identical for both \textbf{aq} and \textbf{fq}.}
\label{alg:ringadv}
\end{figure*}

The SCQ paper~\cite{nikolaev:LIPIcs:2019:11335} provides a justification for
the maximum threshold value when using the
original infinite array queue
as well as when using a practical finite queue (SCQ). For the infinite queue,
the threshold is $2n-1$, where $n$ is the maximum number of entries.
This value is obtained by observing that the last dequeuer is at most $n$ slots
away from the last inserted entry; additionally there may be at most $n-1$ dequeuers that precede the last dequeuer.

Finite SCQ, however, is more intricate. To retain lock-freedom, it additionally
requires to
\textbf{double the capacity} of ring buffers, i.e., it
allocates  $2n$ entries while only ever using $n$ entries at any point of time.
Consequently, SCQ also needs to increase the threshold value to $3n-1$ since the last dequeuer can be $2n$ slots away from the last inserted entry. We \textbf{retain same threshold and capacity} in wCQ. 

Although not done in the original SCQ algorithm, for convenience, we distinguish two cases for \textit{Dequeue} in Figure~\ref{alg:ringadv}. When \textit{Dequeue} arrives prior to its \textit{Enqueue} counterpart, it places $\bot$. If \textit{Dequeue} arrives on time, when the entry can already be \textit{consumed}, it inserts $\bot_c$. The distinction between $\bot_c$ and $\bot$ will become useful when discussing wCQ. SCQ assigns $\bot_c=2n-1$ so that all lower bits of the number are ones ($2n$ is a power of two). This is convenient when consuming elements in Line~12, which can simply execute an atomic OR operation to replace the index with $\bot_c$ while keeping all other bits intact. We further assume that $\bot=2n-2$. Neither $\bot_c$ nor $\bot$  overlaps with any actual indices $[0..n-1]$.

\subsubsection*{Kogan \& Petrank's method.}
This wait-free method~\cite{Kogan:2012:MCF:2145816.2145835} implies
that a fast path algorithm (ideally with a performance similar to a lock-free algorithm) is attempted multiple times (MAX\_PATIENCE
in our paper). When no progress is made, a slow path algorithm will
guarantee completion after a finite number of steps, though with a higher performance cost. The slow path expects collaboration from other threads. Periodically, when performing data structure operations, \textit{every} thread checks if any other thread needs helping.

The original methodology dynamically allocates slow path helper descriptors, which need to be reclaimed. But dynamic memory allocation makes it trickier to guarantee bounded memory usage, as experienced by YMC. Also, it is not clear how to apply the methodology when using specialized (e.g., F\&A) instructions.
In this paper, we address these issues for SCQ.

\section{Wait-Free Circular Queue (wCQ)}
\label{sec:wfcq}

wCQ's key insight is to avoid memory reclamation issues altogether. Because wCQ
only needs per-thread descriptors and the ring buffer itself, it does not need
to deal with dynamic memory allocation.
The original Kogan \& Petrank's fast-path-slow-path methodology cannot be used as-is due to memory reclamation concerns as well as lack of F\&A support. Instead, wCQ uses a variation of this methodology specifically designed for SCQ. All threads collaborate to guarantee wait-free progress.

\subsubsection*{Assumptions.}

Generally speaking, wCQ requires double-width CAS (CAS2) to properly synchronize concurrent queue alterations. wCQ also assumes hardware-implemented F\&A and atomic OR to guarantee wait-freedom. However, these requirements are not very strict. (See Section~\ref{sec:hwsupport} and Section~\ref{sec:llsc} for more details.)

\subsubsection*{Data Structure.}

Figure~\ref{alg:wfcqstruct} shows changes to the SCQ structures, described in Section~\ref{sec:background}. Each entry is extended to a \textit{pair}. A pair comprises of the original entry \textit{Value} and a special \textit{Note}, discussed later. Each thread maintains per-thread state, the \textbf{thrdrec\_t} record. Its private fields are only used by the thread when it attempts to \textit{assist} other threads. In contrast, its shared fields are used to \textit{request help}.

\subsection{Bird's-Eye View}

Figure~\ref{alg:wfcq} shows the \textit{Enqueue\_wCQ} and \textit{Dequeue\_wCQ} procedures.
\textit{Enqueue\_wCQ} first checks if any other thread needs help by calling \textit{help\_threads}, after which it attempts to use the fast path to insert an entry. The fast path is identical to \textit{Enqueue\_SCQ}. When exceeding MAX\_PATIENCE iterations, \textit{Enqueue\_wCQ} takes
the slow path. In the slow path, \textit{Enqueue\_wCQ}  requests help by recording its
last \textit{Tail} value that was tried (in \textit{initTail} and \textit{localTail}) and the \textit{index} input parameter. The \textit{initTail} and \textit{localTail} are initially identical and only diverge later (see \textit{slow\_F\&A} below). Additionally,
\textit{Enqueue\_wCQ} sets extra flags to indicate an active enqueue help request. Since the entire request is to be read atomically, we use \textit{seq1} and \textit{seq2} to verify the integrity of reads. If \textit{seq1} does not match \textit{seq2}, no valid request exists for the thread. Each time a slow path is complete, \textit{seq1} is incremented (Line~28). Each time a new request is made, \textit{seq2} is set to \textit{seq1} (Line~24). Subsequently, \textit{enqueue\_slow} is called to take the slow path.

\begin{figure}
\begin{subfigure}{.55\columnwidth}
{\LinesNotNumbered
\begin{algorithm2e}[H]
\textbf{struct} phase2rec\_t \{

    \hspace{.5em}\textbf{int} seq1\tcp*{= 1}
    \hspace{.5em}\textbf{int} *local\;
    \hspace{.5em}\textbf{int} cnt\;
    \hspace{.5em}\textbf{int} seq2;\hspace{3em}\}\tcp*{= 0}
\BlankLine
\textbf{struct} entpair\_t \{

    \hspace{.5em}\textbf{int} Note\tcp*{ = -1}
    \hspace{.5em}\textbf{entry\_t} Value\tcp*{= \{ .Cycle=0,}
    \}\tcp*{.IsSafe=1, .Enq=1, .Index=$\bot$ \}}
    \BlankLine
\textbf{entpair\_t} Entry[2n]\;
\textbf{thrdrec\_t} Record[NUM\_THRDS]\;
\BlankLine
\textbf{int} Threshold = -1\tcp*{Empty wCQ}
\textbf{int} Tail = 2n, Head = 2n\;
\end{algorithm2e}}
\end{subfigure}%
\hspace{-2em}
\begin{subfigure}{.5\columnwidth}
{\LinesNotNumbered
\begin{algorithm2e}[H]
\textbf{struct} thrdrec\_t \{
    
    \hspace{.5em}\tcp{=== Private Fields ===}
    \hspace{.5em}\textbf{int} nextCheck\tcp*{= HELP\_DELAY}
    \hspace{.5em}\textbf{int} nextTid\tcp*{Thread ID}
    \hspace{.5em}\tcp{=== Shared Fields ===}
    \hspace{.5em}\textbf{phase2rec\_t} phase2\tcp*{Phase 2}
    \hspace{.5em}\textbf{int} seq1\tcp*{= 1}
    \hspace{.5em}\textbf{bool} enqueue\;
    \hspace{.5em}\textbf{bool} pending\tcp*{= false}
    \hspace{.5em}\textbf{int} localTail, initTail\;
    \hspace{.5em}\textbf{int} localHead, initHead\;
    \hspace{.5em}\textbf{int} index\;
    \hspace{.5em}\textbf{int} seq2\hspace{3em}\}\tcp*{= 0}
\end{algorithm2e}}
\end{subfigure}%
\vspace{-7pt}
\caption{Auxiliary structures.}
\label{alg:wfcqstruct}
\end{figure}

A somewhat similar procedure is used for \textit{Dequeue\_wCQ}, with the exception that \textit{Dequeue\_wCQ} also needs to check if the queue is empty. After completing the slow path, the output result needs to be gathered. In SCQ, output is merely consumed by using atomic OR. In wCQ, \textit{consume} is extended to mark all pending enqueuers, as discussed below.

\subsubsection*{Helping Procedures.}
Figure~\ref{alg:wfcqhelp} shows the \textit{help\_threads} code, which is inspired by the method of 
Kogan \& Petrank~\cite{Kogan:2012:MCF:2145816.2145835}. Each thread skips HELP\_DELAY iterations using \textit{nextCheck} to amortize the cost of \textit{help\_threads}. The procedure circularly iterates across all threads, \textit{nextTid}, to find ones with a pending helping request. Finally, \textit{help\_threads} calls \textit{help\_enqueue} or \textit{help\_dequeue}. A help request is atomically loaded and passed to \textit{enqueue\_slow} and \textit{dequeue\_slow}.

\begin{figure}
\begin{subfigure}{.53\columnwidth}
\begin{algorithm2e}[H]
\Fn {\textbf{void} consume(\textbf{int} h, \textbf{int} j, \textbf{entry\_t} e)} {
        \lIf {\Not e.Enq} {finalize\_request(h)}
        OR(\&Entry[j].Value, \{ 0, 0, 1, $\bot_c$ \})\;
}
\Fn {\textbf{void} finalize\_request(\textbf{int} h)} {
                i = (TID + 1) \ModOp NUM\_THRDS\;
                \While {\upshape i != TID} {
                        \textbf{int} *tail = \&Record[i].localTail\;
                        \If {\upshape Counter(*tail) = h} {
                                CAS(tail, h, h | FIN)\;
                                \Return\;
                        }
                        i = (i + 1) \ModOp NUM\_THRDS\;
                }
}
\Fn{\textbf{void} Enqueue\_wCQ(\textbf{int} index)} {
help\_threads()\;
\tcp{== Fast path (SCQ) ==}
\textbf{int} count = MAX\_PATIENCE\;
\While {\upshape $--$count $\ne$ 0}{
tail = try\_enq(index)\;
\lIf {\upshape tail = OK} {\Return}
}
\tcp{== Slow path (wCQ) ==}
\textbf{thrdrec\_t} *r = \&Record[TID]\;
\textbf{int} seq = r->seq1\;
r->localTail = tail\;
r->initTail = tail\;
r->index = index\;
r->enqueue = true\;
r->seq2 = seq\;
r->pending = true\;
enqueue\_slow(tail, index, r)\;
r->pending = false\;
r->seq1 = seq + 1\;
}
\end{algorithm2e}
\end{subfigure}%
\hspace{-1.5em}
\begin{subfigure}{.5\columnwidth}
\begin{algorithm2e}[H]
\setcounter{AlgoLine}{28}
\Fn{\textbf{int} Dequeue\_wCQ()} {
\If {\upshape Load(\&Threshold) $<$ 0} {
        \Return $\varnothing$\tcp*{Empty}
}
help\_threads()\;
\tcp{== Fast path (SCQ) ==}
\textbf{int} count = MAX\_PATIENCE\;
\While {\upshape $--$count $\ne$ 0}{
\textbf{int} idx\;
head = try\_deq(\&idx)\;
\lIf {\upshape head = OK} {\Return idx}
}
\tcp{== Slow path (wCQ) ==}
\textbf{thrdrec\_t} *r = \&Record[TID]\;
\textbf{int} seq = r->seq1\;
r->localHead = head\;
r->initHead = head\;
r->enqueue = false\;
r->seq2 = seq\;
r->pending = true\;
dequeue\_slow(head, r)\;
r->pending = false\;
r->seq1 = seq + 1\;
\tcp{Get slow-path results}
h = Counter(r->localHead)\;
j = Cache\_Remap(h \ModOp 2n)\;
Ent = Load(\&Entry[j].Value)\;
\If {\upshape Ent.Cycle = Cycle(h) \AndOp Ent.Index $\ne$ $\bot$} {
        consume(h, j, Ent)\;
        \Return Ent.Index\tcp*{Done}
}
\Return $\varnothing$
}
\end{algorithm2e}
\end{subfigure}%
\vspace{-7pt}
\caption{Wait-free circular queue (wCQ).}
\label{alg:wfcq}
\end{figure}

\subsection{Slow Path}

\label{sec:slowpathwcq}

Either a helpee or its helpers eventually call \textit{enqueue\_slow} and \textit{dequeue\_slow}. wCQ's key idea is that eventually all active threads assist a thread that is stuck if progress is not made. One of these threads will eventually succeed due to the underlying SCQ's lock-free guarantees. However, all helpers should repeat \textit{exactly} the same procedure as the helpee. This can be challenging since the ring buffer keeps changing.

More specifically, multiple \textit{enqueue\_slow} calls are to avoid inserting the same element multiple times into different positions. Likewise, \textit{dequeue\_slow} should only consume one element.
Figure~\ref{alg:wfcqslow} shows wCQ's approach for this problem.

In Figure~\ref{alg:wfcqslow}, a special \textit{slow\_F\&A} operation
substitutes F\&A from the fast path. The key idea is that for any
given helpee and its helpers, the global \textit{Head} and \textit{Tail}
values need to be changed only once per each iteration across all cooperative
threads (i.e., a helpee and its helpers). To support this, each
thread record maintains \textit{initTail}, \textit{localTail}, \textit{initHead}, and \textit{localHead} values. These values are initialized from the
last tail and head values from the fast path accordingly. In the beginning,
the init and local values are identical. The init value is a starting point for \textit{all} helpers (Lines~17~and~23, Figure~\ref{alg:wfcqhelp}). The local value represents the last value in \textit{slow\_F\&A}.
To support \textit{slow\_F\&A}, we redefine the
global \textit{Head} and \textit{Tail} values
to be \emph{pairs} of counters with pointers rather than just counters. (The pointer component is initially \Null.) Fast-path
procedures only use hardware F\&A on counters leaving pointers intact. However, slow-path procedures use the pointer component to store the second phase request (see below). We use the fact that paired counters are monotonically increasing to prevent the ABA problem for pointers.

\begin{figure}
\begin{subfigure}{.5\columnwidth}
\begin{algorithm2e}[H]
\Fn{\textbf{void} help\_threads()} {
\textbf{thrdrec\_t} *r = \&Record[TID]\;
\If {\upshape $--$r->nextCheck $\ne$ 0} {
        \Return\;
}
thr = \&Record[r->nextTid]\;
\If {\upshape thr->pending} {
        \If {\upshape thr->enqueue} {
                help\_enqueue(thr)\;
        }
        \Else {
                help\_dequeue(thr);\
        }
}
r->nextCheck = HELP\_DELAY\;
r->nextTid = (r->nextTid + 1) \ModOp NUM\_THRDS\;
}
\end{algorithm2e}
\end{subfigure}%
\hspace{-1.5em}
\begin{subfigure}{.54\columnwidth}
\begin{algorithm2e}[H]
\setcounter{AlgoLine}{12}
\Fn{\textbf{void} help\_enqueue(\textbf{thrdrec\_t} *thr)} {
\textbf{int} seq = thr->seq2\;
\textbf{bool} enqueue = thr->enqueue\;
\textbf{int} idx = thr->index\;
\textbf{int} tail = thr->initTail\;
\If {\upshape enqueue \AndOp thr->seq1 = seq} {
        enqueue\_slow(tail, idx, thr)\;
}
}
\Fn{\textbf{void} help\_dequeue(\textbf{thrdrec\_t} *thr)} {
\textbf{int} seq = thr->seq2\;
\textbf{bool} enqueue = thr->enqueue\;
\textbf{int} head = thr->initHead\;
\If {\upshape \Not enqueue \AndOp thr->seq1 = seq} {
        dequeue\_slow(head, thr)\;
}
}
\end{algorithm2e}
\end{subfigure}%
\vspace{-7pt}
\caption{wCQ's helping procedures.}
\label{alg:wfcqhelp}
\end{figure}

The local value is updated to the global counter (Line~25, Figure~\ref{alg:wfcqslow}) by one of the cooperative threads.
This value is compared
against the prior value stored on stack (\textit{T} or \textit{H}) such
that one and only one thread updates the local value.
Since the global value also needs to be consistently changed, we use a special protocol.
In the first phase, the local value is set to the global value with the
INC flag (Line~25). Then, the global value is incremented (Line~32).
In the second phase, the local value resets the INC flag in Line~34 (unless
it was changed by a concurrent thread in Line~86 already).

Several corner cases for \textit{try\_enq\_slow} need to be considered. One case is when an element is already inserted by another thread (Line~19, Figure~\ref{alg:wfcqslow}). Another case is when the condition in Line~6 is true. If one helper skips the entry, we want other helpers to do the same since the condition can become otherwise false at some later point. To that end, \textit{Note} is advanced to the current tail cycle, which allows Line~5 to skip the entry for later helpers.

Finally, we want to terminate future helpers if the entry is already consumed and reused for a later cycle. For this purpose, entries are inserted using the two-step procedure. We reserve an additional \textit{Enq} bit in entries. First, the entry is inserted with Enq=0. Then the helping request is finalized by setting FIN in Line~14. Any concurrent dequeuer that observes Enq=0 will help setting FIN (Line~2, Figure~\ref{alg:wfcq}). FIN is set directly to the thread record's \textit{localTail} value by stealing one bit to prevent any future increments with \textit{slow\_F\&A}. Finally, either of the two threads will flip Enq to 1, at which point the entry can be consumed.

Similar corner cases exist in \textit{try\_deq\_slow}, where \textit{Note} prevents reusing a previously invalidated entry. \textit{try\_deq\_slow} also uses FIN to terminate all dequeuers when the result is detected.
\textit{slow\_F\&A} takes care of synchronized per-request increments (by using another bit, called INC) and also terminates helpers when detecting FIN.

Since thresholds are to be decremented only once, \textit{slow\_F\&A} decrements the threshold when calling it from \textit{try\_deq\_slow}. (Decrementing \textit{prior} to dequeueing is still acceptable.)

\begin{figure*}
\begin{subfigure}{.5\textwidth}
\begin{algorithm2e}[H]
\Fn{\textbf{bool} try\_enq\_slow(\textbf{int} T, \textbf{int} index, \textbf{thrdrec\_t} *r)} {
j = Cache\_Remap(T \ModOp 2n)\;
Pair = Load(\&Entry[j])\;\label{slweloop}
Ent = Pair.Value\;
\If {\upshape Ent.Cycle < Cycle(T) $\AndOp$ Pair.Note < Cycle(T)} {
\If {\upshape \Not (Ent.IsSafe \OrOp Load(\&Head) $\le$ T) \OrOp $ $ $ $ $ $ $ $ (Ent.Index $\ne$ $\bot$, $\bot_c$)} {
N.Value = Ent\tcp*{Avert helper enqueuers from using it}
N.Note = Cycle(T)\;
\lIf {\upshape \Not CAS2(\&Entry[j], Pair, N)} {\Goto{slweloop}}
\Return False\tcp*{Try again}
}
\tcp{Produce an entry: .Enq=0}
N.Value = \{ Cycle(T), 1, 0, index \}\;
N.Note = Pair.Note\;
\lIf (\tcp*[f]{The entry has changed}) {\upshape \Not CAS2(\&Entry[j], Pair, N)} {\Goto{slweloop}}
\If (\tcp*[f]{Finalize the help request}) {\upshape CAS(\&r->localTail, T, T | FIN)} {
Pair = N\;
N.Value.Enq = 1\;
CAS2(\&Entry[j], Pair, N)\;
}
\lIf {\upshape Load(\&Threshold) $\ne$ 3n-1} {Store(\&Threshold, 3n-1)}
}
\lElseIf (\tcp*[f]{Not yet inserted by another thread}) {\upshape\textbf{(}\xspace Ent.Cycle $\ne$ Cycle(T)} {\Return False}
\Return True\tcp*{Success}
}
\Fn{\textbf{bool} slow\_F\&A(\textbf{intpair} *globalp, \textbf{int} *local,
	\textbf{int} *v, \textbf{int} *thld)} {
\textbf{phase2rec\_t} *phase2 = \&Record[TID].phase2\;
\tcp{Global Tail/Head are replaced with \{.cnt, .ptr\} pairs:}
\tcp{use only .cnt for fast paths; .ptr stores `phase2'.}
\tcp{INVARIANT: *local, *v < *global (`global' has the next value)}
\tcp{VARIABLES: `global', thread-record (`local'), on-stack (`v'):}
\tcp{`local' syncs helpers so that `global' increments only once}
\tcp{Increment `local' (with INC, Phase 1), then `global',}
\Do (\tcp*[f]{finally reset INC (Phase 2). `thld' is only used for Head.}){\upshape{\Not CAS2(globalp, \{ cnt, \Null \hspace{1pt}\}, \{ cnt + 1, phase2 \})}} {
\textbf{int} cnt = load\_global\_help\_phase2(globalp, local)\;
\If (\tcp*[f]{Phase 1}) {\upshape cnt = $\varnothing$ \OrOp \Not CAS(local, *v, cnt | INC)} {
*v = *local\;\label{faafin}
\DontPrintSemicolon
\lIf {\upshape *v \& FIN} {\Return False;\tcp*[f]{Loop exits if FIN=1}}
\lIf {\upshape \Not (*v \& INC)} {\Return True;\tcp*[f]{Already incremented}}
\PrintSemicolon
cnt = Counter(*v)\;
}
\DontPrintSemicolon
\lElse { *v = cnt | INC;\tcp*[f]{Phase 1 completes}}
\PrintSemicolon
prepare\_phase2(phase2, local, cnt)\tcp*{Prepare help request}
}
\tcp{Loops are finite, all threads eventually help the thread that is stuck}
\lIf {\upshape thld} {F\&A(thld, -1)}
  CAS(local, cnt | INC, cnt)\;
  CAS2(globalp, \{ cnt + 1, phase2 \}, \{ cnt + 1, \Null \hspace{1pt}\})\;
*v = cnt\tcp*{Phase 2 completes}
\Return True\tcp*{Success}
}
\BlankLine
\Fn{\textbf{void} prepare\_phase2(\textbf{phase2rec\_t} *phase2, \textbf{int} *local, \textbf{int} cnt)} {
\textbf{int} seq = ++phase2->seq1\;
phase2->local = local\;
phase2->cnt = cnt\;
phase2->seq2 = seq\;
}
\end{algorithm2e}
\end{subfigure}%
\begin{subfigure}{.5\textwidth}
\begin{algorithm2e}[H]
\setcounter{AlgoLine}{42}
\Fn{\textbf{int} try\_deq\_slow(\textbf{int} H, \textbf{thrdrec\_t} *r)} {
j = Cache\_Remap(H \ModOp 2n)\;
Pair = Load(\&Entry[j])\;\label{swdloop}
Ent = Pair.Value\;
\tcp{Ready or consumed by helper ($\bot_c$ or value)}
\If {\upshape Ent.Cycle = Cycle(H) \AndOp Ent.Index $\ne$ $\bot$} {
		CAS(\&r->localHead, H, H | FIN)\tcp*{Terminate helpers}
	\Return True\tcp*{Success}
}
N.Note = Pair.Note\;
Val = \{ Cycle(H), Ent.IsSafe, 1, $\bot$ \}\;
\If {\upshape Ent.Index $\ne$ $\bot$, $\bot_c$} {
\If (\tcp*[f]{Avert helper dequeuers}) {\upshape Ent.Cycle < Cycle(H) \AndOp $ $ Pair.Note < Cycle(H)} {
N.Value = Pair.Value\tcp*{from using it}
N.Note = Cycle(H)\;
r = CAS2(\&Entry[j], Pair, N)

\lIf {\upshape \Not r} {\Goto{swdloop}}
}
Val = \{ Ent.Cycle, 0, Ent.Enq, Ent.Index \}\;
}
N.Value = Val\;
\If {\upshape Ent.Cycle < Cycle(H)} {
\If {\upshape \Not CAS2(\&Entry[j], Pair, N)} {\Goto{swdloop}\;}
}
T = Load(\&Tail)\tcp*{Exit if queue}
\If (\tcp*[f]{is empty}){\upshape T $\le$ H + 1} {
	catchup(T, H + 1)\;
}
\If {\upshape Load(\&Threshold) $<$ 0} {
	CAS(\&r->localHead, H, H | FIN)\;
	\Return True\tcp*{Success}
}
\Return False\tcp*{Try again}
}
\BlankLine
\Fn{\textbf{void} enqueue\_slow(\textbf{int} T, \textbf{int} index, \textbf{thrdrec\_t} *r)} {
\While {\upshape slow\_F\&A(\&Tail, \&r->localTail, \&T, \Null)} {
\lIf {\upshape try\_enq\_slow(T, index, r)} {\Break}
}
}
\BlankLine
\Fn{\textbf{void} dequeue\_slow(\textbf{int} H, \textbf{thrdrec\_t} *r)} {
\textbf{int} *thld = \&Threshold\;
\While {\upshape slow\_F\&A(\&Head,\&r->localHead,\&H, thld)} {
\lIf {\upshape try\_deq\_slow(H, r)} {\Break}
}
}
\BlankLine
\Fn{\textbf{int} load\_global\_help\_phase2(\textbf{intpair} *globalp, \textbf{int} *mylocal)} {
		\Do (\tcp*[f]{Load globalp \& help complete Phase 2 (i.e., make .ptr=null)}){\upshape{\Not CAS2(globalp, \{ gp.cnt, phase2 \}, \{ gp.cnt, \Null \hspace{1pt}\})}} {
	\tcp{This loop is finite, see the outer loop and FIN}
\lIf (\tcp*[f]{The outer loop exits}) {\upshape *mylocal \& FIN}{\Return $\varnothing$}
  \textbf{intpair} gp = Load(globalp)\tcp*{the thread that is stuck}
  \textbf{phase2rec\_t} *phase2 = gp.ptr\;
  \lIf (\tcp*[f]{No help request, exit}) {\upshape phase2 = \Null} {\Break}
  \textbf{int} seq = phase2->seq2\;
  \textbf{int} *local = phase2->local\;
  \textbf{int} cnt = phase2->cnt\;
  \tcp{Try to help, fails if `local' was already advanced}
  \lIf {\upshape phase2->seq1 = seq} {CAS(local, cnt | INC, cnt)}
  \tcp{No ABA problem (on .ptr) as .cnt increments monotonically}
}
\Return gp.cnt\tcp*{Return the .cnt component only}
}
\end{algorithm2e}
\end{subfigure}%
\vspace{-10pt}
\caption{wCQ's slow-path procedures.}
\label{alg:wfcqslow}
\end{figure*}

\begin{figure*}[ht]
\includegraphics[width=\textwidth]{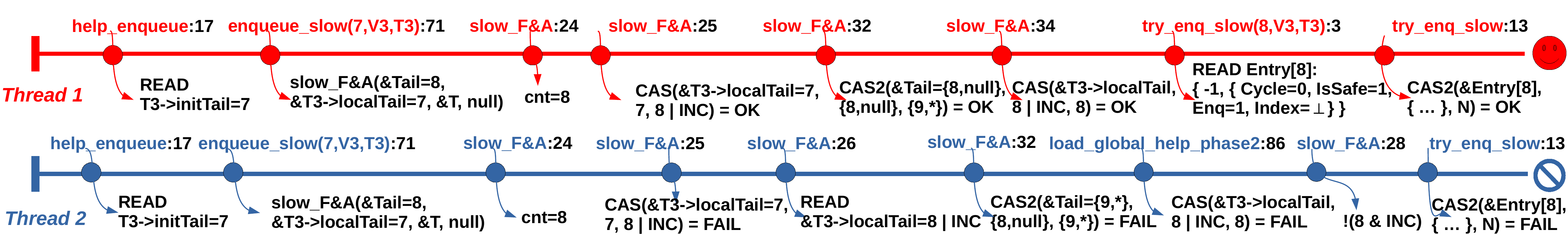}
		\caption{Enqueue slow-path execution example (ignoring CacheRemap's permutations).}
\label{fig:exec}
\end{figure*}

\subsubsection*{Second Phase Request}
When incrementing the global counter in slow\_F\&A, a thread must update
the local value to the previous global value. The thread tentatively sets
the counter in Phase 1, but the INC flag must be cleared in Phase 2, at which
point all cooperative threads will use the same local value. We request help
from cooperative threads by using the \textit{phase2} pointer. This pointer is
set simultaneously when the global value increments (Line~32, Figure~\ref{alg:wfcqslow}). The previous
value will be recorded in the \textit{phase2} request.

One corner case arises when the fast-path procedure unconditionally increments
the global value while \textit{phase2} is already set. This may cause Line~87 to
sporadically fail even though Line~86 succeeds. In this case, Line~86 is
repeated without any side effect until Line~87 succeeds or the \textit{phase2} request is finalized by some thread. All fast-path threads will
eventually be converging to help the thread that is stuck in the slow path.

\subsubsection*{Decrementing \textit{Threshold}.} One important difference with \textit{try\_deq} is that \textit{try\_deq\_slow} decrements \textit{Threshold} for every iteration inside
\textit{slow\_F\&A}. The
reason for that is that \textit{try\_deq\_slow} can also be executed concurrently by helpers. Thus, to retain the original threshold bound ($3n-1$), we must make
sure that only \textit{one} cooperative thread decrements the threshold value.
The global \textit{Head} value is an ideal source for such synchronization
since it only changes once per \textit{try\_deq\_slow} iteration across all
cooperative threads. Thus, we decrement \textit{Threshold}
\textit{prior} to the actual dequeue attempt. Note that \textit{try\_deq}
is doing it after a \textit{failure}, which is merely a performance optimization
since the $3n-1$ bound is computed in the original SCQ algorithm regardless
of the status of an operation. This performance optimization is obviously irrelevant for
the slow path.

\subsubsection*{Bounding \textit{catchup}.}
Since catchup merely improves performance in SCQ by reducing contention, there is no problem in explicitly limiting the maximum number of iterations. We do so to avoid unbounded loops in catchup.

\subsubsection*{Example.}
Figure~\ref{fig:exec} shows an enqueue scenario with three threads. Thread 3 attempts to insert index V3 but is stuck and requested help. Thread 3 does not make any progress on its own. (The corresponding thread record is denoted as T3.) Thread~1 and Thread 2 eventually observe that Thread 3 needs helping. This example shows that slow\_F\&A for both Thread~1 and Thread 2 converges to the same value (8). The global Tail value is only incremented once (by Thread 1). The corresponding entry is also only updated once (by Thread 1). For simplicity, we omit Cache\_Remap's permutations.

\subsection{Hardware Support}
\label{sec:hwsupport}
Our algorithm implies hardware F\&A and atomic OR for wait-freedom. However, they are not strictly necessary since failing F\&A in the fast path can simply force us to fall back to the slow path. There, threads eventually collaborate, bounding the execution time of F\&A, which is still used for \textit{Threshold}. Likewise, output can be consumed inside \textit{dequeue\_slow}, making it possible to emulate atomic OR with CAS while bounding execution time. We omit the discussion of these changes due to their limited practical value, as both wait-free F\&A and OR are available on x86-64 and AArch64.

\section{Architectures with ordinary LL/SC}
\label{sec:llsc}

Outside of the x86(-64) and ARM(64) realm, CAS2
may not be available. In this section, we present an
approach that we have used to implement wCQ on
PowerPC~\cite{ppc:manual} and MIPS~\cite{mips:manual}. Other architectures may also adopt this approach depending upon what operations are permitted between their LL and SC instructions.

We first note that CAS2 for global \textit{Head} and \textit{Tail} pairs can
simply be replaced with regular CAS by packing a small thread index (in lieu of the \textit{phase2} pointer) with a reduced (e.g., 48-bit rather than 64-bit) counter. However, this approach is less suitable for CAS2 used for entries
(using 32-bit counters is risky due to a high chance of an overflow).
Thus, we need an approach to store two words.

Typical architectures implement only a weak version of LL/SC, which has
certain restrictions, e.g., memory alterations between LL
and SC are disallowed. Memory reads in between, however, are still allowed for architectures such as PowerPC and MIPS. Furthermore, LL's reservation granularity is larger than just a memory word (e.g., L1 cache line for PowerPC~\cite{Sarkar:2012:SCP:2254064.2254102}).
This typically creates a problem known as
``false sharing,'' when concurrent LL/SC on unrelated variables residing in
the same cache line cause SC to succeed only for one variable.
This often requires careful consideration by the programmer to properly align
data to avoid false sharing.

In wCQ's slow-path procedures, both \textit{Value} and
\textit{Note} components
of entries need to be atomically loaded, but we only update one or
the other variable
at a time. We place two variables in the same
reservation granule by aligning the first variable on the double-word
boundary, so that only one LL/SC
pair succeeds at a time. We use a regular Load operation between LL and SC to
load the other variable atomically. We also construct an implicit memory fence
for Load by introducing artificial data dependency, which prevents reordering
of LL and Load. For the SC to succeed, the other variable from the same
reservation granule must remain intact.

In Figure~\ref{alg:llsc}, we present two replacements of CAS2.
We use corresponding versions that modify \textit{Value} or
\textit{Note} components of entries. These replacements implement weak CAS semantics (i.e., sporadic failures are possible). Also, only single-word load atomicity is guaranteed when CAS fails. Both restrictions are acceptable for wCQ.

\begin{figure}
\begin{subfigure}{.52\columnwidth}
\begin{algorithm2e}[H]
\Fn{\textbf{bool} CAS2\_Value(\textbf{entpair\_t} *Var,
\textbf{entpair\_t} Expect, \textbf{entpair\_t} New)} {
Prev.Value = LL(\&Var->Value)\;
Prev.Note = Load(\&Var->Note)\;
\lIf{\upshape Prev $\neq$ Expect} {\Return False}
\Return SC(\&Var->Value, New.Value)\;
}
\end{algorithm2e}
\end{subfigure}%
\hspace{-1.7em}
\begin{subfigure}{.52\columnwidth}
\begin{algorithm2e}[H]
\setcounter{AlgoLine}{5}
\Fn{\textbf{bool} CAS2\_Note(\textbf{entpair\_t} *Var,
\textbf{entpair\_t} Expect, \textbf{entpair\_t} New)} {
Prev.Note = LL(\&Var->Note)\;
Prev.Value = Load(\&Var->Value)\;
\lIf{\upshape Prev $\neq$ Expect} {\Return False}
\Return SC(\&Var->Note, New.Note)\;
}
\end{algorithm2e}
\end{subfigure}%
\caption{CAS2 implementation for wCQ using LL/SC.}
\label{alg:llsc}
\end{figure}

\section{Correctness}
\label{sec:correctness}

\begin{lemma}
wCQ's fast paths in \textit{Enqueue\_wCQ} and \textit{Dequeue\_wCQ} have a finite number of iterations for any thread.
\label{lemma:fast}
\end{lemma}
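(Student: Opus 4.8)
The plan is to split the fast path of each operation into two nested parts and bound each separately: the outer \emph{patience loop} that repeatedly invokes \texttt{try\_enq} (resp.\ \texttt{try\_deq}), and the inner retry loop inside a single such invocation (the jump back to the \texttt{Load} of \texttt{Entry[j]} after a failed \texttt{CAS}/\texttt{CAS2}). For the outer loop the argument is immediate: \texttt{count} is a thread-local variable initialized to the fixed constant \texttt{MAX\_PATIENCE}, decremented exactly once per iteration, and the loop exits as soon as it reaches zero; hence it runs at most \texttt{MAX\_PATIENCE} times, each iteration performing one call to \texttt{try\_enq}/\texttt{try\_deq} plus $O(1)$ extra work. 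So it suffices to show that a single invocation of \texttt{try\_enq} (resp.\ \texttt{try\_deq}) returns after finitely many steps.

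First I would note that, outside the retry loop, \texttt{try\_enq} and \texttt{try\_deq} are straight-line code over a constant number of $O(1)$ primitives (\texttt{Load}, \texttt{Store}, \texttt{F\&A}, \texttt{CAS}, \texttt{CAS2}, \texttt{Cache\_Remap}), together with the calls to \texttt{consume} and \texttt{catchup}. The routine \texttt{consume} iterates \texttt{finalize\_request} over the fixed set of \texttt{NUM\_THRDS} thread records and then issues one atomic \texttt{OR}, so it is bounded; \texttt{catchup} is explicitly capped at a constant number of iterations in wCQ (the ``Bounding \texttt{catchup}'' remark), so it too is bounded. Hence the whole statement reduces to bounding the number of times control re-enters the retry loop after a failed entry \texttt{CAS}.

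The key step is that the pair stored in \texttt{Entry[j]} evolves along a well-founded order, which forbids an infinite spin. A reiteration occurs only when the compare-and-swap on \texttt{Entry[j]} fails, i.e.\ a concurrent thread modified that entry since the preceding \texttt{Load}. By SCQ's invariants --- which apply unchanged here because wCQ's fast path is literally \texttt{Enqueue\_SCQ}/\texttt{Dequeue\_SCQ} --- the \texttt{Cycle} component of an entry is monotonically non-decreasing, while the retry is reached only under the guard \texttt{E.Cycle} $<$ \texttt{Cycle(T)} (resp.\ $<$ \texttt{Cycle(H)}), a bound that is fixed for the call once \texttt{T} (resp.\ \texttt{H}) has been read from the \texttt{F\&A}. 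Once the entry's cycle reaches that bound the guard fails and the routine returns; while it stays strictly below, it ranges over finitely many integer values, and within any fixed cycle the remaining fields can change only a bounded number of times --- \texttt{IsSafe} only ever flips $1\!\to\!0$, the stored index only moves toward $\bot$/$\bot_c$, and the wCQ-specific \texttt{Note} is only ever advanced upward while \texttt{Enq} only ever flips $0\!\to\!1$. Therefore only finitely many modifications of \texttt{Entry[j]} are possible before either our \texttt{CAS} succeeds or the guard becomes false, so the retry loop executes finitely often and \texttt{try\_enq}/\texttt{try\_deq} returns in finitely many steps; combined with the \texttt{MAX\_PATIENCE} bound on the outer loop this proves the lemma.

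I expect the main obstacle to be this last step: rigorously transferring SCQ's ``each entry advances monotonically, so each \texttt{try\_enq}/\texttt{try\_deq} terminates'' argument to wCQ's setting, where \texttt{Entry[j]} is now a \texttt{Value}/\texttt{Note} pair rather than a bare SCQ word. One has to verify that the new \texttt{Note} and \texttt{Enq} components are themselves monotone (so they cannot manufacture fresh entry states indefinitely) and that every auxiliary routine reachable from the fast path --- \texttt{consume}, \texttt{finalize\_request}, and \texttt{catchup} --- is individually bounded; once these are checked, the SCQ analysis goes through verbatim.
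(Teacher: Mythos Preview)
Your proof is correct and follows the same overall decomposition as the paper: bound the outer loop by \texttt{MAX\_PATIENCE}, bound the auxiliary routines (the paper singles out \texttt{catchup}; you additionally and correctly note that \texttt{consume}/\texttt{finalize\_request} are bounded), and bound the inner CAS-retry loop by arguing that \texttt{Entry[j]} can only change finitely many times before the guard $E.\mathit{Cycle} < \mathit{Cycle}(T)$ (resp.\ $H$) fails.

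The one genuine difference is in \emph{how} you bound that inner loop. The paper counts the concurrent operations that can touch \texttt{Entry[j]}: at most $k\le n$ contending enqueuers can advance the cycle before $E.\mathit{Cycle}\ge \mathit{Cycle}(T)$, and at most $k+(3n-1)\le 4n-1$ dequeuer steps (via the threshold mechanism) can interfere, plus the one-shot \texttt{IsSafe} and \texttt{Index} transitions per cycle---yielding an explicit bound in terms of $n$ alone. You instead argue structurally that the entry state lives in a well-founded order below the fixed value $\mathit{Cycle}(T)$, so only finitely many modifications are possible. That is valid for the lemma as stated (which asks only for ``finite''), but it yields a bound that scales with $\mathit{Cycle}(T)$, i.e.\ with how long the queue has been running, rather than with $n$. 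The paper's thread-counting argument is what one would reach for if a uniform bound independent of history were later needed; for this lemma either route suffices, and your explicit treatment of the wCQ-specific \texttt{Note}/\texttt{Enq} monotonicity is a useful addition the paper leaves implicit.
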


\begin{proof}
The number of iterations with loops containing
\textit{try\_enq} or \textit{try\_deq} are limited by MAX\_PATIENCE.
Each \textit{try\_enq} has only one loop in Line~25, Figure~\ref{alg:ringadv}.
That loop will continue as long as \textit{Entry[j]} is changed elsewhere.
If several enqueuers contend for the same entry (wrapping around), we
at most have $k\le n$ concurrent enqueuers, each of which may cause a retry.
Since each retry happens when another enqueuer succeeds and modifies the
cycle, \textit{Entry[j].Cycle} will eventually change after at most $k$
unsuccessful attempts such that the loop is terminated
(i.e., $E.Cycle < Cycle(T)$ is no longer true).
Likewise, with respect to contending dequeuers, \textit{Entry[j].Cycle} can
change for at most $k+(3n-1)\le 4n-1$ iterations (due to the threshold), at
which point all contending dequeuers will change \textit{Entry[j]} such that the loop in \textit{try\_enq} terminates (either $E.Cycle \ge Cycle(T)$ or $Load(\&Head) > T$ for non-safe entries). Additionally,
while dequeuing, for each given cycle, \textit{Index} can change \textit{once} to $\bot$ or $\bot_c$, and \textit{IsSafe} can be reset (once).
The argument is analogous for \textit{try\_deq}, which also has a similar loop
in Line~42. \textit{try\_deq} also calls \textit{catchup}, for which
we explicitly limit the number of iterations in wCQ (it is merely an
optimization in SCQ).
\end{proof}

\begin{lemma}
Taken in isolation from slow paths, wCQ's fast paths are linearizable.
\label{lemma:linfast}
\end{lemma}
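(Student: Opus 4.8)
The plan is to show that when the slow paths are absent, $\textit{Enqueue\_wCQ}$ and $\textit{Dequeue\_wCQ}$ behave exactly like $\textit{Enqueue\_SCQ}$ and $\textit{Dequeue\_SCQ}$, and then invoke the linearizability proof for SCQ from~\cite{nikolaev:LIPIcs:2019:11335}. First I would observe that in the fast path the only code reached is $\textit{help\_threads}$ followed by a bounded loop of $\textit{try\_enq}$ / $\textit{try\_deq}$ calls (plus the empty-queue check in $\textit{Dequeue\_wCQ}$, and $\textit{consume}$ on a successful dequeue). With no thread ever taking the slow path, no $\textbf{thrdrec\_t}$ record is ever marked $\textit{pending}$, so every $\textit{help\_threads}$ call returns after merely decrementing $\textit{nextCheck}$ and (eventually) advancing $\textit{nextTid}$; it touches no shared queue state. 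Hence $\textit{help\_threads}$ can be erased from the fast-path code without affecting its observable behavior.

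Next I would argue that, restricted to fast-path executions, the wCQ data layout collapses onto the SCQ layout. The global $\textit{Head}$ and $\textit{Tail}$ are pairs $\{\textit{cnt},\textit{ptr}\}$, but fast-path code only ever applies hardware F\&A to the $\textit{cnt}$ component and never reads or writes $\textit{ptr}$; since no slow path runs, $\textit{ptr}$ remains $\Null$ forever, so $\textit{Head}$ and $\textit{Tail}$ are effectively plain counters exactly as in SCQ. Similarly each $\textbf{entpair\_t}$ has a $\textit{Note}$ field and an extra $\textit{Enq}$ bit in $\textit{Value}$; fast-path $\textit{try\_enq}$ always writes $\textit{Enq}=1$ and never touches $\textit{Note}$, and $\textit{try\_deq}$ preserves whatever $\textit{Enq}$ bit it reads and never touches $\textit{Note}$. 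So $\textit{Note}$ is invariantly its initial value, $\textit{Enq}$ is invariantly $1$, and the CAS2 on $\textbf{entpair\_t}$ reduces to a single-word CAS on $\textit{Value}$ — precisely the SCQ entry CAS. The only remaining discrepancy is that wCQ's $\textit{consume}$ first calls $\textit{finalize\_request}$ when $\textit{e.Enq}=0$; but since $\textit{Enq}$ is always $1$ on the fast path, that branch is dead, and the atomic OR it performs, $\{0,0,1,\bot_c\}$, agrees with SCQ's $\{0,0,\bot_c\}$ once the constant $\textit{Enq}$ bit is dropped. I would also note that in $\textit{Dequeue\_wCQ}$, when the slow path never runs, control never reaches the ``Get slow-path results'' block with a meaningful $\textit{localHead}$, so only the fast-path return inside the $\textit{try\_deq}$ loop is exercised, matching $\textit{Dequeue\_SCQ}$.

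Having established this code-and-state correspondence, the linearization points carry over verbatim from SCQ: an enqueue linearizes at its successful entry CAS, a dequeue returning a value linearizes at its $\textit{consume}$, and an empty-returning dequeue linearizes at the $\textit{Load}(\&\textit{Threshold})$ (or the F\&A on $\textit{Threshold}$, as in the SCQ proof). The FIFO ordering and the correctness of the empty-check all follow from the SCQ analysis, in particular the threshold bound of $3n-1$ with $2n$ allocated entries, which the excerpt explicitly states wCQ retains. I expect the main obstacle to be rigorously discharging the claim that $\textit{Note}$ and the $\textit{Enq}$ bit are genuinely invariant under all fast-path interleavings — one must check every CAS2 site in $\textit{try\_enq}$ and $\textit{try\_deq}$ (Figure~\ref{alg:ringadv}) and confirm that the $\textit{New}$ pair it writes leaves $\textit{Note}$ equal to the loaded $\textit{Pair.Note}$ and $\textit{Value.Enq}$ equal to the loaded bit (or to the fixed value $1$ on insertion), so that no fast-path step can ever make the extra fields diverge from their SCQ-equivalent constants. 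Once that bookkeeping is done, the reduction to SCQ is immediate and the lemma follows.
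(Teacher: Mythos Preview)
Your proposal is correct and follows the same reduce-to-SCQ strategy as the paper: both argue that the fast paths are (modulo inert additions) identical to SCQ and then invoke SCQ's linearizability result. Your treatment is considerably more detailed than the paper's very terse proof, which merely asserts near-identity with SCQ, notes that the bounded \textit{catchup} is a harmless optimization, and observes that \textit{consume}'s \textit{finalize\_request} branch (triggered only when $\textit{Enq}=0$) touches only per-thread state and is unreachable without slow paths---points you cover (and extend), though you might add a sentence about the \textit{catchup} bound for completeness.
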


\begin{proof}
Fast paths are nearly identical to the SCQ algorithm, which is linearizable.
\textit{catchup} is merely an optimization: it limits the number of
iterations and does not affect correctness arguments. \textit{consume} in wCQ, called from \textit{try\_deq}, internally calls \textit{finalize\_request} when Enq=0 (Line~2, Figure~\ref{alg:wfcq}). The purpose of that function is to
merely set the FIN flag for the corresponding enqueuer's local tail pointer (thread state) in the slow path, i.e., does not affect any global state. Enq=0 is only possible when involving slow paths.
\end{proof}

\begin{lemma}
If one thread is stuck, all other threads eventually converge to help it.
\label{lemma:stuck}
\end{lemma}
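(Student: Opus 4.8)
The plan is to argue in two parts: (1) a thread $t$ that is "stuck" must have set its slow-path request fields so that \textit{help\_threads} can recognize it, and (2) once that request is visible, every other thread performs at most a bounded amount of work before it either helps $t$ or $t$'s request is finalized. Throughout I will lean on Lemma~\ref{lemma:fast} (fast paths are finite) and on the SCQ lock-freedom guarantee, which is the engine that forces eventual progress once everyone is cooperating.

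\textbf{Step 1: A stuck thread publishes a well-formed request.} I would first make precise what "stuck" means: a thread $t$ that does not return from \textit{Enqueue\_wCQ} or \textit{Dequeue\_wCQ} must, by Lemma~\ref{lemma:fast}, have exhausted MAX\_PATIENCE on the fast path and entered the slow path. I then inspect the slow-path prologue (Lines~20--27 of \textit{Enqueue\_wCQ}, the analogous lines of \textit{Dequeue\_wCQ}) and observe the ordering: \textit{initTail}/\textit{localTail} (resp. \textit{initHead}/\textit{localHead}) and \textit{index}, then \textit{enqueue}, then \textit{seq2 = seq}, then \textit{pending = true}, all before \textit{enqueue\_slow}/\textit{dequeue\_slow} is called. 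Under the assumed sequentially consistent model, if $t$ is still inside \textit{enqueue\_slow}/\textit{dequeue\_slow} then \textit{pending} is already \textbf{true}, \textit{seq1 = seq2}, and the tail/head and index fields hold a consistent snapshot of the request. Hence any thread that reads $t$'s record sees a valid, non-stale request with matching sequence numbers.

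\textbf{Step 2: Every other thread reaches $t$ in its round-robin scan in bounded time.} Each helper thread $r$ decrements \textit{nextCheck} on every queue operation and, once per HELP\_DELAY operations, examines the single thread \textit{nextTid} and then advances \textit{nextTid} by one mod NUM\_THRDS. So within at most $\text{NUM\_THRDS}\cdot\text{HELP\_DELAY}$ of $r$'s own operations, $r$'s \textit{nextTid} equals $t$. At that point $r$ sees \textit{pending = true} (Step 1), branches on \textit{enqueue}, and calls \textit{help\_enqueue}$(t)$ or \textit{help\_dequeue}$(t)$, which re-reads \textit{seq2}, checks \textit{seq1 = seq2}, and invokes \textit{enqueue\_slow}/\textit{dequeue\_slow} with $t$'s \textit{initTail}/\textit{initHead}. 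The only subtlety is that a helper $r$ might itself be executing operations and never "finish" one; but each of $r$'s own operations is itself bounded on the fast path (Lemma~\ref{lemma:fast}), and if $r$ in turn enters its own slow path, then \emph{its} request also becomes visible and the same argument recursively applies to $r$ — there are only NUM\_THRDS threads, so this regress is finite and at least the set of threads that are not themselves blocked will, after a bounded number of their own steps, all be calling into the slow path on behalf of some stuck thread.

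\textbf{Step 3: "Converge to help it" — the request persists until completion.} I would then argue that once $t$'s request is live it stays live until $t$'s slow path completes: \textit{seq1} is only bumped to \textit{seq + 1} at Line~28 of \textit{Enqueue\_wCQ} (resp. the corresponding dequeue line), which happens only \emph{after} \textit{enqueue\_slow}/\textit{dequeue\_slow} returns, and \textit{pending} is cleared only there as well. So for the entire duration that $t$ is stuck, the guard \textit{thr->seq1 = seq} in \textit{help\_enqueue}/\textit{help\_dequeue} holds and helpers keep re-entering the slow path on $t$'s behalf. Combined with Steps~1--2 this gives: after a bounded number of steps of every live thread, all live threads are repeatedly executing $t$'s slow-path operation. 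The \textbf{main obstacle} I anticipate is the mutual-helping regress in Step~2 — making airtight that the helpers do not all deadlock helping \emph{each other} while $t$ languishes. The clean way to handle it is to observe that the round-robin scan is monotone and wrap-around, so the relation "is blocked waiting inside the slow path of" is acyclic in the sense that every thread eventually scans \emph{every} other thread (including $t$) regardless of whom it is currently helping, and each individual slow-path invocation it performs is bounded by Lemma~\ref{lemma:fast}-style reasoning plus the explicit \textit{catchup} bound; therefore $t$ specifically is reached by each live thread within the stated bound, which is exactly "all other threads eventually converge to help it." This sets up the subsequent lemma, where SCQ lock-freedom among the now-cooperating threads is used to conclude one of them completes $t$'s operation.
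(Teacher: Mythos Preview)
Your Steps~1 and~3 correctly analyze the request-publishing and request-persistence mechanics, and these are more detailed than what the paper spells out. The gap is in Step~2, precisely at the point you flag as the ``main obstacle.'' Your resolution asserts that ``each individual slow-path invocation [a helper] performs is bounded by Lemma~\ref{lemma:fast}-style reasoning plus the explicit \textit{catchup} bound,'' but this is circular: Lemma~\ref{lemma:fast} bounds only the fast path, and boundedness of a slow-path invocation is exactly what Lemma~\ref{lemma:slowwait} and Theorem~\ref{theorem:wcq} establish later, \emph{using} the present lemma. Concretely, \textit{help\_threads} is invoked only at the top of \textit{Enqueue\_wCQ}/\textit{Dequeue\_wCQ}, so a helper $r$ that is itself inside a slow path (its own or one it is executing on behalf of another thread) is not advancing \textit{nextTid} at all; your claim that ``every thread eventually scans every other thread regardless of whom it is currently helping'' therefore presupposes the very boundedness you are trying to derive.

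The paper breaks this circularity by invoking SCQ's operation-wise lock-freedom \emph{inside} this lemma, not deferring it to the next one. Because the slow path mimics SCQ operations, even when several threads are simultaneously in slow paths, at least one of them (helper or helpee) completes; the set of threads stuck in slow paths therefore shrinks one by one until a single stuck thread remains, at which point every other active thread is free to finish its current operation and advance its round-robin scan to that thread. Your mechanical analysis of the publishing and scanning protocol is sound and complements the paper's terse argument, but the mutual-helping regress cannot be closed without a global-progress appeal (SCQ lock-freedom) at this stage; that appeal is the missing ingredient in your Step~2.
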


\begin{proof}
If the progress is not being made by one thread, all active threads will eventually attempt to help it (finishing their prior slow-path operations first). Some thread always succeeds since the underlying SCQ algorithm is (operation-wise) lock-free. Thus, the number of active threads in the slow path will reduce one by one until the very last stuck thread remains. All active helpers will then attempt to help that last thread, and one of them will succeed.
\end{proof}

\begin{lemma}
\textit{slow\_F\&A} does not alter local or global head/tail values as soon as a helpee or its helpers produce an entry.
\label{lemma:coop}
\end{lemma}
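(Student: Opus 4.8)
The goal is to show that once some cooperative thread (the helpee or one of its helpers) has successfully produced an entry in \textit{try\_enq\_slow} or finalized the result in \textit{try\_deq\_slow}, no cooperative thread can subsequently change either the thread record's local head/tail value (beyond setting FIN) or the global \textit{Head}/\textit{Tail} counter on behalf of this request. The plan is to reason about the two synchronization mechanisms that \textit{slow\_F\&A} uses — the FIN bit stored in the thread record's \textit{localTail}/\textit{localHead}, and the INC-flagged two-phase counter protocol — and to track when FIN becomes visible relative to the entry production.

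First I would establish the "entry production implies FIN" direction. In \textit{try\_enq\_slow}, the successful CAS2 that writes the new entry with \textit{Enq}=0 (Line~13) is immediately followed by the CAS on \textit{r->localTail} that attempts to stamp it with FIN (Line~14); if that CAS fails, it is because some concurrent dequeuer already set FIN via \textit{finalize\_request} (Line~2, Figure~\ref{alg:wfcq}) after observing \textit{Enq}=0 — but \textit{Enq}=0 is only observable \emph{after} the entry was produced, so in all cases FIN is set on \textit{localTail} no later than the first moment any thread returns \textbf{True} from \textit{try\_enq\_slow}. A symmetric argument applies to \textit{try\_deq\_slow}: every path that returns \textbf{True} (the ready/consumed case in Line~47 and the empty case in Line~71) performs a CAS stamping \textit{r->localHead} with FIN before returning. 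So I would phrase the invariant as: if any cooperative thread has returned \textbf{True} from \textit{try\_enq\_slow}/\textit{try\_deq\_slow} for this request, then \textit{localTail}/\textit{localHead} already has the FIN bit set.

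Next I would show that once FIN is set, \textit{slow\_F\&A} is inert for every cooperative thread. The key observations: (i) the outer do--while loop of \textit{slow\_F\&A} performs \texttt{CAS2(globalp, \{cnt, Null\}, \{cnt+1, phase2\})} only after reading \texttt{cnt} from \textit{load\_global\_help\_phase2}, which first checks \texttt{*mylocal \& FIN} and returns $\varnothing$ when FIN is set; a $\varnothing$ return forces the \texttt{cnt = $\varnothing$} branch (Line~34), which sets \texttt{*v = *local}, detects \texttt{*v \& FIN}, and returns \textbf{False} — so the thread never reaches the CAS2 that would bump the global counter, and never reaches the Phase-2 cleanup that resets INC on the local value. (ii) If instead the thread had already passed the FIN check and is mid-protocol, the \texttt{CAS(local, *v, cnt | INC)} in Line~34 or the \texttt{CAS(local, cnt | INC, cnt)} in Line~41 both fail once \textit{local} carries the FIN bit, since the expected value lacks FIN and \textit{local} is monotone with FIN sticky; any such failure routes to \texttt{*v = *local} and the FIN-triggered \textbf{False} return. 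Hence no cooperative thread writes a new (non-FIN) value to \textit{local}, and none advances \textit{global} past the value it held when FIN was stamped. Combining with the first part, as soon as an entry is produced the FIN bit is already set, so from that point \textit{slow\_F\&A} alters neither the local nor the global head/tail value — which is exactly the claim.

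The main obstacle I anticipate is the interleaving window between producing the entry and stamping FIN, together with the fast-path interference on the global counter. A concurrent cooperative thread could be sitting inside \textit{slow\_F\&A} having just executed the global-incrementing CAS2 in the same window, so I need to argue carefully that at most \emph{one} global increment per request can happen after the entry is produced — and ideally zero. The cleanest route is to not claim "zero increments after production" literally but to pin the linearization point of the helped operation to the entry-producing CAS2 itself (for enqueue) or to the FIN-stamping CAS (for dequeue), and then show that any straggler global increment is harmless because (a) it cannot cause a second entry to be produced for this request — \textit{try\_enq\_slow} at a new $T$ would find \texttt{Ent.Cycle $\neq$ Cycle(T)} or be blocked by \textit{Note}, and the FIN check in the next \textit{slow\_F\&A} iteration terminates the loop before \textit{try\_enq\_slow} is even called with a stale value — and (b) the \textit{phase2} handoff plus the monotone counter rules out ABA so the straggler's Phase-2 cleanup only ever restores \textit{global} to a \texttt{\{cnt+1, Null\}} state consistent with the value already committed. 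I expect most of the work to be in making the "FIN becomes visible before \textit{try\_\{enq,deq\}\_slow} can be re-entered with a fresh counter" timing argument fully rigorous, leaning on Lemma~\ref{lemma:stuck} for the eventual convergence of helpers.
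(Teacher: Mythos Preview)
Your plan is correct and follows the paper's approach: entry production forces FIN onto \textit{localTail}/\textit{localHead} (either by the producer's CAS at Line~14 or, if that CAS fails, because a consumer's \textit{finalize\_request} already stamped FIN after observing \textit{Enq}=0), and every subsequent path through \textit{slow\_F\&A} detects FIN---via the check in \textit{load\_global\_help\_phase2} or via the failed CAS on \textit{local}---and returns \textbf{False} before it can modify either the local or global counter. The paper's own proof is a terse version of exactly this: it asserts that Line~14 sets FIN, that Line~27 checks FIN before any further modification, and that in-flight helpers still working on earlier slots are neutralized by the \textit{Note} field; your write-up is a faithful expansion rather than a different argument.

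One remark on your obstacle section: the retreat to ``at most one straggler increment, argued harmless'' is unnecessary and would actually weaken the statement you are asked to prove. The window you worry about cannot open. For a cooperative thread to be inside \textit{slow\_F\&A} with on-stack $*v = T$ matching the current \textit{local}, it must have previously exited \textit{slow\_F\&A} with that value, executed \textit{try\_enq\_slow}$(T)$, and returned \textbf{False}; but once the entry at $T$ has been written by the CAS2 at Line~13, the only way \textit{try\_enq\_slow}$(T)$ returns \textbf{False} is after the entry is consumed and overwritten at a higher cycle---and the consuming dequeuer's \textit{finalize\_request} stamps FIN on \textit{localTail} \emph{before} that recycling is possible. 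Hence FIN is already in place by the time any such helper re-enters \textit{slow\_F\&A}, and you can claim zero post-production modifications outright. The paper compresses this step by simply identifying ``entry produced'' with Line~14 rather than Line~13.
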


\begin{proof}
A fundamental property of \textit{slow\_F\&A} is that it terminates the slow
path in any remaining thread that attempts to insert the same entry as
soon as the entry is produced (Lines~71~and~75, Figure~\ref{alg:wfcqslow}). As soon as the entry is
produced (Line~14), FIN is set. FIN is checked in Line~27 prior
to changing either local or global head/tail further.
In-flight threads, can still attempt to access previous ring buffer
locations but this is harmless since previous locations are invalidated
through the \textit{Note} field.
\end{proof}

\begin{lemma}
\textit{slow\_F\&A} is wait-free bounded.
\label{lemma:slowwait}
\end{lemma}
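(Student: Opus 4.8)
The plan is to bound the total number of primitive steps executed by a single invocation of \textit{slow\_F\&A}, counting both the outer \textbf{do-while} loop (Lines~27--39, Figure~\ref{alg:wfcqslow}) and every call it makes to \textit{load\_global\_help\_phase2}. I would first argue that each inner loop in \textit{load\_global\_help\_phase2} is finite: its CAS2 target is the global \textit{Head}/\textit{Tail} pair, and the loop is trying to clear the \textit{phase2} pointer while leaving the counter intact. The loop can only spin if some concurrent thread keeps re-installing a non-null pointer \emph{with the same counter value}; but fast-path F\&A strictly increments the counter (monotonically), and a slow-path thread only installs a new \textit{phase2} together with a counter increment as well. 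Hence after at most a constant number of interfering operations per contending thread the counter has advanced past the value our loop cares about, or the \texttt{FIN} / null-pointer exit conditions fire. Using $k\le n$ threads, this gives an $O(n)$ bound per call to \textit{load\_global\_help\_phase2}.

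Next I would bound the iterations of the outer loop. The loop variable of interest is \textit{*local} (the thread-record head/tail value), which is advanced by Phase~1 CAS operations. The key invariant, already relied upon in the surrounding text, is that for a given helpee and its helpers the global counter — and therefore \textit{*local} — is incremented \emph{exactly once per cooperative iteration}. An invocation of \textit{slow\_F\&A} returns as soon as it either (i) observes \texttt{FIN} in \textit{*local} (Lemma~\ref{lemma:coop} guarantees \texttt{FIN} is set the moment the entry is produced), (ii) observes that \textit{*local} was already incremented past \textit{*v} by a cooperative thread (the ``Already incremented'' exit in Line~36), or (iii) successfully performs the Phase~1 $\to$ Phase~2 transition itself. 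So a given \textit{slow\_F\&A} call performs at most one ``own'' increment plus a constant amount of CAS-retry work caused by the corner case where a fast-path F\&A bumps the global counter while \textit{phase2} is still installed (the Line~86/87 mismatch described just above the statement); each such spurious retry corresponds to a distinct fast-path operation, of which at most $O(n)$ can be concurrently in flight. Combining, each \textit{slow\_F\&A} invocation runs $O(n)$ outer iterations, each doing $O(1)$ RMW steps plus one $O(n)$ call to \textit{load\_global\_help\_phase2}, for an overall $O(n^2)$ step bound — in particular finite, i.e. wait-free bounded.

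The main obstacle I anticipate is ruling out a circular wait between the outer loop and the inner helper loop: the outer loop of \textit{slow\_F\&A} depends on cooperative threads clearing \textit{phase2} so that its Line~39 CAS2 can succeed, while \textit{load\_global\_help\_phase2} is itself the mechanism that clears \textit{phase2}. I would break this by a careful accounting that every CAS2 on the global pair that does \emph{not} make progress for our thread is chargeable to a distinct successful counter-incrementing operation (fast-path F\&A or another slow-path thread's Phase~1), and no operation can be charged twice because the counter is strictly monotone and the pair width precludes ABA on the pointer (as the code comments note). A secondary subtlety is the interaction with \texttt{FIN} being set on \textit{localHead}/\textit{localTail} by a \emph{different} thread (via \textit{finalize\_request} or \textit{try\_deq\_slow}): I must check that \texttt{FIN}, once set, is observed within one further iteration, which follows because \textit{*v} is refreshed from \textit{*local} at the top of Line~34 on every pass and the \texttt{FIN} test in Line~35 precedes any further increment attempt.
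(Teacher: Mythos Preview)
Your charging scheme is sound---each failed CAS2 on the global pair can indeed be charged to a distinct counter increment, and the monotone counter precludes double-charging---but the bound you put on the number of such increments is where the argument breaks. You write that ``at most $O(n)$ can be concurrently in flight,'' treating the number of interfering fast-path operations as bounded by the number of threads. It is not: a \emph{single} thread on the fast path can complete arbitrarily many F\&A operations on the global counter while your \textit{slow\_F\&A} call is in progress, each one invalidating your CAS2 in Line~32 (and likewise each one invalidating the CAS2 in Line~87 of \textit{load\_global\_help\_phase2}, since that loop re-reads \texttt{gp.cnt} every iteration and does not ``care about'' any fixed counter value as you suggest). So neither the inner nor the outer loop is bounded by $O(n)$ interference events; your $O(n^2)$ overall bound does not follow.

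The paper closes this gap not by a tighter per-operation count but by invoking Lemma~\ref{lemma:stuck}: if our \textit{slow\_F\&A} is stuck, every other active thread---after finishing its current operation and cycling through \textit{help\_threads}---eventually converges to the slow path of the same helpee. Once all threads are cooperating on the same request, there are no more fast-path F\&A increments to interfere; SCQ's underlying lock-freedom then guarantees one helper succeeds in producing the entry, \texttt{FIN} gets set, and every thread still looping in \textit{slow\_F\&A} (inner or outer loop) exits on the \texttt{FIN} check. The resulting bound is finite but depends on \texttt{MAX\_PATIENCE}, \texttt{HELP\_DELAY}, and $k$, not just $n$. Your detailed decomposition of exit conditions and your handling of the \texttt{FIN} observation are fine; what is missing is precisely this convergence step, without which the interference is not bounded at all.
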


\begin{proof}
CAS loops in \textit{slow\_F\&A} and \textit{load\_global\_help\_phase2} (Lines~23-32 and Lines~78-87, Figure~\ref{alg:wfcqslow}) are bounded since all threads will eventually converge to the slow path of one thread if no progress is being made according to Lemma~\ref{lemma:stuck}. At that point, some helper will succeed and set FIN. After that, all threads stuck in \textit{slow\_F\&A} will exit.
\end{proof}

\begin{lemma}
\textit{slow\_F\&A} decrements threshold only once per every global
head change.
\label{lemma:faa}
\end{lemma}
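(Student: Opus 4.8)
The only place \textit{slow\_F\&A} touches the threshold is its single \textit{F\&A}$(\textit{thld},-1)$ call, and \textit{thld} is non-\Null{} only when \textit{slow\_F\&A} runs on behalf of \textit{dequeue\_slow} over the global \textit{Head} pair (it receives \Null{} from \textit{enqueue\_slow}). So it suffices to show: for each increment of the counter component of \textit{Head} produced by the slow path, exactly one cooperative thread executes that \textit{F\&A}. The plan is to read this correspondence off the control flow of \textit{slow\_F\&A} and then pin it down with the monotonic-counter invariant that the paper already uses to exclude ABA on the pointer component of \textit{Head} and \textit{Tail}.

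First I would identify which executions of \textit{slow\_F\&A} reach the decrement. Its loop is left in exactly two ways: (i) the guarding CAS2 succeeds --- the one that fires only when \textit{Head}'s counter equals some value $c$ and its pointer component is \Null, atomically replacing them by $c{+}1$ and the caller's own \textit{phase2} record --- after which the thread falls through to \textit{F\&A}$(\textit{thld},-1)$ and then to the Phase~2 cleanup; or (ii) the loop body takes one of its two early \textbf{return}s near Line~\ref{faafin}: \textbf{false} when the shared local value carries FIN, or \textbf{true} when it no longer carries INC (i.e., another cooperative thread already finished this increment). Type-(ii) exits bypass the decrement. Hence a thread decrements \textit{Threshold} inside \textit{slow\_F\&A} if and only if it is the unique thread whose guarding CAS2 succeeded, and that CAS2 is exactly the operation that advances the \textit{Head} counter from $c$ to $c{+}1$; since its success makes the loop exit and the remaining code is straight-line, the winner decrements the threshold exactly once.

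Next I would argue that the advance $c \to c{+}1$ cannot be charged two decrements, i.e., the guarding CAS2 succeeds at most once for any fixed counter value $c$. The counter component of \textit{Head} is monotonically nondecreasing: fast-path operations only \textit{F\&A} it (leaving the pointer intact), the slow-path guarding CAS2 only increments it, and both the Phase~2 cleanup CAS2 and the helper CAS2 inside \textit{load\_global\_help\_phase2} keep the counter fixed (they only reset the pointer to \Null). Hence once \textit{Head}'s counter reaches $c{+}1$ it never again equals $c$, so no later guarding CAS2 with expected counter $c$ and expected pointer \Null{} can succeed; this is precisely the ABA-freedom of the paired counters, reused here to get uniqueness of the increment. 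Together with the previous step, the advance $c \to c{+}1$ triggers exactly one \textit{F\&A}$(\textit{thld},-1)$, performed by the unique winner of that CAS2, and since every slow-path change of the \textit{Head} counter goes through such a CAS2, the threshold is decremented exactly once per head change.

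The step I expect to be the main obstacle is the Phase~2 cleanup corner case, plus making precise what ``global head change'' is meant to count. A concurrent fast-path \textit{F\&A} can push the \textit{Head} counter past $c{+}1$ while the winner's \textit{phase2} pointer is still installed, so the winner's cleanup CAS2 may fail and be redone --- possibly by helpers going through \textit{load\_global\_help\_phase2}. I must show these retries are harmless for \textit{Threshold}: \textit{F\&A}$(\textit{thld},-1)$ is executed once, strictly before and outside any cleanup retry, and \textit{load\_global\_help\_phase2} never writes \textit{Threshold}, so a thread that merely helps another's Phase~2 never decrements it. I would also state explicitly that a ``head change'' here means a change of the \emph{counter} component of \textit{Head} (the logical dequeue position), of which there is exactly one per successful guarding CAS2; pointer-only transitions (same counter, \textit{phase2} $\to$ \Null) do not advance \textit{Head} and are not counted. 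With those two points settled, the lemma follows from the case analysis on \textit{slow\_F\&A} together with monotonicity of the paired counters; it needs neither Lemma~\ref{lemma:stuck} nor Lemma~\ref{lemma:slowwait}, only the invariant underlying them.
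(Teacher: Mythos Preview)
Your proposal is correct and follows essentially the same approach as the paper: both arguments observe that the threshold decrement in \textit{slow\_F\&A} is reached only via the code path where the guarding CAS2 on the global pair succeeds, and that successful CAS2 is precisely the operation that increments the global counter by one. Your treatment is considerably more thorough than the paper's two-line proof---you explicitly handle the early-return exits, argue monotonicity to rule out duplicate decrements for the same counter value, and address the Phase~2 cleanup corner case---but the underlying idea is identical.
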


\begin{proof}
In Figure~\ref{alg:wfcqslow}, the threshold value is changed in Line~33.
This is only possible after the global value was incremented by one (Line~32).
\end{proof}

\begin{lemma}
Taken in isolation from fast paths, wCQ's slow paths are linearizable.
\label{lemma:linslow}
\end{lemma}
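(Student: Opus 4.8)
The plan is to exhibit a refinement mapping from any slow\nobreakdash-path\nobreakdash-only execution of wCQ onto an execution of the underlying lock\nobreakdash-free SCQ ring buffer, and then invoke SCQ's linearizability~\cite{nikolaev:LIPIcs:2019:11335}. The central observation is that a helpee together with all of its helpers behaves, with respect to the \emph{global} state (the $\mathit{Tail}$/$\mathit{Head}$ counters, the $\mathit{Entry}$ array, and $\mathit{Threshold}$), exactly like a single SCQ thread running its retry loop: \textit{slow\_F\&A} replaces the hardware F\&A so that across the whole cooperative group the global counter advances once per loop iteration and every cooperative thread reads back the \emph{same} sequence of local $T$ (resp.\ $H$) values, while the $\mathit{Entry}$ update at each step is installed by exactly one winning CAS2. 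The first step is to make this precise: using Lemma~\ref{lemma:faa} (the global counter and threshold move once per iteration), the INC/FIN protocol on \textit{localTail}/\textit{localHead}, and the monotonicity of the paired counters (which rules out an ABA on the \textit{phase2} pointer), I would show that the multiset of RMW effects a cooperative group applies to the global state equals the multiset of effects of one SCQ \textit{try\_enq}/\textit{try\_deq} sequence that terminates in success, while every other attempt by a helper is an idempotent no\nobreakdash-op (its CAS2/CAS either fails or re\nobreakdash-installs an already\nobreakdash-present value).

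Next I would pin down the linearization point of each slow\nobreakdash-path operation inside its execution interval. For an enqueue it is the step that produces the entry (Line~14 of Figure~\ref{alg:wfcqslow}, later completed by flipping \textit{Enq} to $1$, possibly by a concurrent dequeuer via \textit{finalize\_request}/\textit{consume}); by Lemma~\ref{lemma:coop} FIN is set at that instant so no cooperative thread advances the counters afterwards, and since a stuck helpee stays inside \textit{enqueue\_slow} until FIN is observed, this instant lies within its interval (and within every helper's interval as well --- harmless, since helpers return nothing). For a dequeue returning an index, the linearization point is the step at which some cooperative thread detects the ready entry and sets FIN on \textit{localHead} (Line~48 of Figure~\ref{alg:wfcqslow}, or the post\nobreakdash-slow\nobreakdash-path ``get results'' block of \textit{Dequeue\_wCQ}); the \textit{consume} that removes the element and stamps $\bot_c$ then runs exactly once. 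For an operation returning $\varnothing$ the linearization point is the observation of $\mathit{Threshold}<0$ or $T\le H{+}1$ after the (bounded) \textit{catchup}, exactly mirroring SCQ's empty\nobreakdash-case linearization. With these choices the order of linearization points coincides with the order of the $T$/$H$ values the operations commit to, which is the total order an SCQ execution would exhibit; real\nobreakdash-time order is therefore respected because each operation's committed counter value lies between counter reads it performs within its own interval.

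The remaining step is the case analysis showing the constructed SCQ execution is legal, i.e.\ every RMW we keep is one that SCQ's \textit{try\_enq}/\textit{try\_deq} could perform from the corresponding state, and all SCQ invariants (cycle monotonicity, safe/unsafe entry handling, the $3n-1$ threshold bound, exactly one $\bot$/$\bot_c$ write per cycle) are preserved. Here one matches each wCQ\nobreakdash-specific branch to its SCQ analogue: the \textit{Note}-advancement branch (Lines~5--10 of Figure~\ref{alg:wfcqslow}) is SCQ's ``skip this entry'' behavior --- it never touches \textit{Value} and only monotonically raises \textit{Note}, so later helpers take the same skip; the ``already inserted by another thread'' branch (Line~19) is the point where a helper observes the group's entry is in place, adding no new effect; the threshold decrement moved into \textit{slow\_F\&A} is sound by Lemma~\ref{lemma:faa} (once per head change, exactly as SCQ counts it); and \textit{catchup} is, as argued for the fast paths, a contention optimization whose bounding does not affect the state the linearization relies on. Since \textit{slow\_F\&A} is wait\nobreakdash-free bounded (Lemma~\ref{lemma:slowwait}) and helpers converge on a stuck thread (Lemma~\ref{lemma:stuck}), every slow\nobreakdash-path operation terminates, so the mapping is total; applying SCQ's linearizability to the image execution yields linearizability of the slow paths.

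I expect the main obstacle to be the ``cooperative group $=$ one SCQ thread'' equivalence in the presence of all corner cases: the two\nobreakdash-step \textit{Enq} insertion and its FIN handshake with a concurrent dequeuer, the \textit{Note}-based invalidation of stale ring\nobreakdash-buffer locations, and the inherited unsafe\nobreakdash-entry logic. In particular one must verify that no helper --- whether still in flight at an older location or arriving after FIN --- can perturb the global state in a way SCQ would not; this is precisely where Lemma~\ref{lemma:coop} together with the \textit{Note}-invalidation argument carries the weight, and checking it exhaustively over the branch structure of \textit{try\_enq\_slow} and \textit{try\_deq\_slow} is the most delicate part of the proof.
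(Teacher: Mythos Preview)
Your proposal is correct and follows essentially the same approach as the paper: reduce linearizability of the slow paths to linearizability of the underlying SCQ by arguing that a helpee together with its helpers acts on the global state exactly like a single SCQ thread, relying on Lemma~\ref{lemma:faa} for the once-per-iteration threshold/counter advance, Lemma~\ref{lemma:coop} for termination of global updates once FIN is set, and the \textit{Note}-based invalidation for the late-helper corner case. The paper's own proof is considerably terser---it simply enumerates the three salient differences from the fast path (threshold placement, FIN-triggered stop, \textit{Note} synchronization) and asserts each is handled---whereas you spell out an explicit refinement mapping and name concrete linearization points; your version is more detailed, but the underlying argument is the same.
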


\begin{proof}
There are several critical differences in the slow path when compared
to the fast path. Specifically, \textit{try\_deq\_slow} decrements
the threshold value only once per an iteration across all cooperative threads (i.e.,
a helpee and its helpers). That follows from Lemma~\ref{lemma:faa} since global \textit{Head}
changes once per such iteration -- the whole point of \textit{slow\_F\&A}.

Cooperative threads immediately stop updating global \textit{Head} and \textit{Tail} pointers when the result is produced (Lines~71~and~75, Figure~\ref{alg:wfcqslow}), as it follows from Lemma~\ref{lemma:coop}.

Any cooperative thread always retrieves a value (the \textit{v} variable)
from \textit{slow\_F\&A}, which is either completely new, or was
previously observed by some other cooperative thread. As discussed in Section~\ref{sec:slowpathwcq}, a corner case arises when one cooperative thread
already skipped a slot with a given cycle but a different cooperative thread (which is late) attempts to use that skipped slot with the same cycle. That happens
due to the \textit{IsSafe} bit or other conditions in Line~6, Figure~\ref{alg:wfcqslow}. To guard against this scenario, the slow-path procedure maintains \textit{Note}, which guarantees that late cooperative threads will skip the slot
that was already skipped by one thread. (See Section~\ref{sec:slowpathwcq} for
more details.)
\end{proof}

\begin{theorem}
wCQ's memory usage is bounded.
\end{theorem}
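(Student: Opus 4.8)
The plan is to show that wCQ allocates no dynamic memory at all, so that the total memory footprint is fixed at initialization time as a function of $n$ and NUM\_THRDS. First I would enumerate every object that wCQ touches: the ring buffer \texttt{Entry[2n]} of \texttt{entpair\_t} pairs, the per-thread records \texttt{Record[NUM\_THRDS]} of \texttt{thrdrec\_t} (which in turn contain the embedded \texttt{phase2rec\_t} structures), and the global scalars \textit{Tail}, \textit{Head}, \textit{Threshold}. Each of these is statically sized: \texttt{entpair\_t} is two machine words (\textit{Note} plus \textit{Value}), \texttt{thrdrec\_t} is a constant number of words, and there are exactly $2n$ entries and NUM\_THRDS records. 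Hence the static allocation is $\Theta(n + \text{NUM\_THRDS})$ words, and since wCQ assumes $k \le n$ this is $\Theta(n)$.

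Next I would argue that \emph{no} code path performs allocation. The fast paths (\textit{Enqueue\_wCQ}, \textit{Dequeue\_wCQ}) are the SCQ fast paths plus calls to \textit{help\_threads}, none of which allocates. The slow-path procedures (\textit{enqueue\_slow}, \textit{dequeue\_slow}, \textit{try\_enq\_slow}, \textit{try\_deq\_slow}, \textit{slow\_F\&A}, \textit{load\_global\_help\_phase2}, \textit{prepare\_phase2}, \textit{finalize\_request}) only read and write already-existing fields: the \textit{phase2} request is the \emph{embedded} \texttt{phase2rec\_t} inside the caller's own \texttt{thrdrec\_t} (\texttt{\&Record[TID].phase2}), and the pointer components of the \textit{Head}/\textit{Tail} pairs only ever point at such embedded records. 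Crucially, unlike Kogan \& Petrank's original method (and YMC), wCQ never heap-allocates a helper descriptor, so there is nothing to reclaim and hence no reclamation-driven growth — this is precisely the design choice highlighted in Section~\ref{sec:wfcq}.

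The one subtlety I would then dispatch is that wCQ, like SCQ, stores only indices in the ring buffers and keeps the actual data in a separate \texttt{data[]} array of size $n$ (Figure~\ref{alg:ringptr}); so the indirection layer adds only $\Theta(n)$ further words and the "double capacity" $2n$ (inherited unchanged from SCQ, as the excerpt states) keeps everything at $\Theta(n)$. Finally, I would appeal to the already-established progress results — Lemma~\ref{lemma:fast} (fast paths have finitely many iterations), Lemma~\ref{lemma:slowwait} (\textit{slow\_F\&A} is wait-free bounded), and the bounded \textit{catchup} — to note that no procedure loops forever accumulating state, and that $aq$ and $fq$ together hold at most $n$ live indices at any time, so the logical occupancy never exceeds the statically reserved space. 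Putting these together, every reachable state of wCQ uses a fixed $\Theta(n)$ amount of memory.

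The main obstacle is not any deep argument but the bookkeeping of being \emph{exhaustive}: I must check every helper structure and every pointer field (in particular the \textit{phase2} pointer threaded through \textit{Head}/\textit{Tail}) to confirm it always refers into pre-allocated per-thread storage and never escapes or is duplicated, so that the absence of dynamic allocation is genuinely airtight rather than merely apparent.
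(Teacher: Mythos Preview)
Your proposal is correct and takes essentially the same approach as the paper: the paper's proof is a terse two-sentence version of exactly your argument, namely that wCQ is a statically allocated circular queue that never allocates extra memory, and that the thread-record descriptors are bounded by the number of threads. Your exhaustive inventory and the appeal to the progress lemmas go beyond what the paper bothers to write (and the latter is not strictly needed, since even a non-terminating loop that never calls an allocator keeps memory bounded), but the core reasoning is identical.
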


\begin{proof}
wCQ is a statically allocated circular queue, and
it never allocates any extra memory. Thread record descriptors are bounded by the total number of threads.
\end{proof}

\begin{theorem}
wCQ is linearizable.
\end{theorem}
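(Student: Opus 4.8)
The strategy is a reduction: show that any concurrent execution of \textit{Enqueue\_wCQ}/\textit{Dequeue\_wCQ} is indistinguishable, as far as the state visible to the SCQ invariants is concerned, from some execution of the pure SCQ algorithm, and then invoke the known linearizability of SCQ together with Lemmas~\ref{lemma:linfast} and~\ref{lemma:linslow}. The first step is to argue that the only state that matters for SCQ's correctness is the counter halves of \textit{Head}/\textit{Tail} and the \textit{Value} fields of the entries, and that wCQ's extra machinery does not perturb it: the \textit{phase2} pointer lives exclusively in the pointer half of the \textit{Head}/\textit{Tail} pair, which fast paths never read or write; the \textit{Note} field only ever gates reuse of a slot \emph{within} one cooperative group and is monotone in the cycle, so from the outside it merely causes a retry, exactly as a lost CAS would in plain SCQ; and the \textit{Enq} bit and \textit{FIN} flag only delay the point at which an already-produced entry becomes consumable or an already-decided result is reported, without changing \emph{which} entry is produced or \emph{which} result is reported.

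The second step is to assign linearization points. A fast-path enqueue linearizes at its successful \texttt{CAS} on \textit{Entry[j]} (or, when it reports a failure that forces the slow path, it does not linearize at all on that attempt); a fast-path dequeue linearizes at its \textit{consume}/\texttt{OR}, at the \textit{catchup}/empty branch, or at the threshold test, exactly as in SCQ. A slow-path enqueue linearizes at the single \texttt{CAS2} in Line~14 that produces the entry, performed by whichever cooperative thread wins it; a slow-path dequeue linearizes at the moment the result (or the \texttt{Threshold}\(<0\) emptiness) is first detected. Lemma~\ref{lemma:coop} (together with Lemmas~\ref{lemma:slowwait} and~\ref{lemma:faa}) is what makes this well defined: a helpee together with all of its helpers advances the global counter exactly once per iteration, produces at most one entry, consumes at most one element, and decrements \texttt{Threshold} at most once per iteration, so the high-level operation has exactly one linearization point even though several threads execute the body. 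That point lies inside the helpee's execution interval because the helpee spins until its \textit{pending} flag is cleared, and it lies inside the executing helper's interval trivially; helping is just additional work charged to the helper's own (separately linearized) operation. The SCQ threshold/capacity accounting ($3n-1$, $2n$ slots) is inherited unchanged because, per Lemma~\ref{lemma:linslow}, the slow path reproduces the same per-iteration threshold decrement that SCQ performs.

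The third step, and the main obstacle, is the interaction between a slow-path enqueuer and a concurrent dequeuer around the two-step publication \textit{(Enq=0, then Enq=1)} and \textit{finalize\_request}. I need to verify that: (i) the value is neither lost nor duplicated — only one cooperative thread wins the Line~14 \texttt{CAS2}, and \textit{FIN} (set on the helpee's \textit{localTail}) then prevents any further \textit{slow\_F\&A} increment, by Lemma~\ref{lemma:coop}; (ii) a dequeuer that reads \textit{Enq=0} behaves consistently — it calls \textit{finalize\_request} to help flip \textit{FIN}, and one of the two threads then sets \textit{Enq=1}, after which the slot is an ordinary SCQ-occupied slot; and (iii) the window in which \textit{Enq=0} holds is consistent with \emph{some} choice of linearization order — here I place the enqueue's linearization point at the Line~14 \texttt{CAS2} and argue that a dequeuer reaching that slot in the matching cycle either already sees the value (reads \textit{Index}\(\neq\bot\) and consumes it, as in \textit{try\_deq\_slow} Line~47 or the post-slow-path block of \textit{Dequeue\_wCQ}) or is forced to retry by \textit{Note}, so no dequeuer is erroneously told the queue is empty while a linearized-earlier enqueue's element is pending. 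Discharging (iii) — ruling out a "phantom empty" across the \textit{Enq=0} window while respecting FIFO order — is the delicate part; everything else reduces to the component lemmas plus SCQ's original argument.
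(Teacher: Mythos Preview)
Your proposal follows essentially the same decomposition as the paper: invoke Lemmas~\ref{lemma:linfast} and~\ref{lemma:linslow} for the fast and slow paths in isolation, and then argue that the two paths interact correctly, with the crux being the two-step publication (\textit{Enq}$=0$ then \textit{Enq}$=1$) and the role of \textit{finalize\_request}. The paper's own proof is considerably terser than yours: it does not spell out explicit linearization points, does not frame the argument as a reduction to an SCQ execution, and does not isolate your ``phantom empty'' concern; it simply observes that the fast path is ``fully compatible'' with the slow path because (a) the fast path does not touch \textit{Note}, (b) it uses the one-step \textit{Enq}$=1$ publication, and (c) any fast-path dequeuer that sees \textit{Enq}$=0$ calls \textit{finalize\_request} to set \textit{FIN}, so the dequeuer need not wait for the slow-path enqueuer to complete the \textit{Enq}$=0\to1$ transition.

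One small wording issue: you say ``the helpee spins until its \textit{pending} flag is cleared,'' but the helpee clears \textit{pending} itself after \textit{enqueue\_slow}/\textit{dequeue\_slow} returns; the reason the linearization point lies in the helpee's interval is that those slow routines do not return until \textit{FIN} is observed (via \textit{slow\_F\&A}), and \textit{FIN} is set only after the producing \texttt{CAS2}. This is a phrasing fix, not a logical gap. Your step~(iii) goes beyond what the paper actually argues; the paper leaves the absence of a spurious empty report during the \textit{Enq}$=0$ window implicit in the inherited SCQ threshold reasoning rather than discharging it explicitly.
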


\begin{proof}
Linearizability of the fast and slow paths in separation follows from Lemmas~\ref{lemma:linfast}~and~\ref{lemma:linslow}.

The fast path is fully compatible with the slow path. There are just some
minor differences described below. The one-step procedure (Enq=1 right away) is used
in lieu of the two-step procedure (Enq=0 and then Enq=1) when producing new
entries. Likewise, the \textit{Note} field is not altered on the fast
path since it is only needed to synchronize cooperative threads, and
we only have one such thread on the fast path.
However, differences in Enq must be properly supported.
To the end, the fast path dequeuers fully support the semantics expected
by slow-path enqueuers with respect to the Enq bit. More specifically,
\textit{consume} always calls \textit{finalize\_request} for Enq=0 when
consuming an element, which helps to set the corresponding FIN
bit (Line~2, Figure~\ref{alg:wfcq}) such that it does not need to
wait until the slow-path enqueuer completes the Enq=0 to Enq=1 transition (Lines~14-17, Figure~\ref{alg:wfcqslow}).
\end{proof}

\begin{theorem}
\textit{Enqueue\_wCQ}/\textit{Dequeue\_wCQ} are wait-free.
\label{theorem:wcq}
\end{theorem}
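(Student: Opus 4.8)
The plan is to bound the total number of steps any thread performs inside \textit{Enqueue\_wCQ} or \textit{Dequeue\_wCQ}, by splitting the execution into the fast-path prefix, the helping work, and the slow-path body, and bounding each piece in turn. First I would invoke Lemma~\ref{lemma:fast}: the fast-path loop runs at most \textsc{max\_patience} iterations, each of which is a single \textit{try\_enq}/\textit{try\_deq} call that Lemma~\ref{lemma:fast} already shows to be finite (bounded by $O(n)$ inner retries). So the fast-path prefix contributes $O(n\cdot\textsc{max\_patience})$ steps. Next, every call to \textit{help\_threads} either returns immediately (when \textit{nextCheck} has not yet counted down) or performs at most one \textit{help\_enqueue}/\textit{help\_dequeue} call before resetting \textit{nextCheck} to \textsc{help\_delay} and advancing \textit{nextTid}; so a single invocation of \textit{help\_threads} does only a bounded amount of work plus at most one slow-path assist, and I must fold that assist into the slow-path bound below.

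The heart of the argument is bounding a single slow-path operation (either as helpee or as helper), i.e.\ one run of \textit{enqueue\_slow}/\textit{dequeue\_slow}. Each iteration of the outer \texttt{while} consists of one \textit{slow\_F\&A} call followed by one \textit{try\_enq\_slow}/\textit{try\_deq\_slow} call. By Lemma~\ref{lemma:slowwait}, \textit{slow\_F\&A} is wait-free bounded; \textit{try\_enq\_slow}/\textit{try\_deq\_slow} have finitely many inner CAS2 retries by exactly the reasoning of Lemma~\ref{lemma:fast} (at most $O(n)$ contending cooperative threads can force a \texttt{goto}, since each forced retry corresponds to another thread advancing the entry's \textit{Cycle} or \textit{Note}, and the bounded \textit{catchup} is limited explicitly). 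So each outer iteration is $O(n)$ steps, and it remains to bound the \emph{number} of outer iterations. Here I would argue: by Lemma~\ref{lemma:stuck}, once some thread is stuck, every other active thread finishes its current slow-path operation and then converges to help it, and by the underlying SCQ lock-freedom some cooperative thread succeeds in producing/consuming the entry; the moment that happens, Lemma~\ref{lemma:coop} (together with Lemmas~\ref{lemma:slowwait} and the \textsc{fin} mechanism) guarantees that \textit{slow\_F\&A} returns \textbf{false} in every remaining cooperative thread, so each of them exits its outer loop after at most one more iteration. Thus the number of outer iterations for a fixed slow-path operation is bounded by (roughly) the number of successful entry-productions by cooperative threads that can occur before its own \textsc{fin} is set, which is $O(k)=O(n)$: each such success is attributable to a distinct cooperative thread, there are at most $k\le n$ of them, and \textit{Note}/\textsc{fin} prevent the same helpee's request from being served twice. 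This gives an $O(n^2)$ bound per slow-path operation.

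Finally I would assemble the pieces. A thread executing \textit{Enqueue\_wCQ}/\textit{Dequeue\_wCQ} does: $O(n\cdot\textsc{max\_patience})$ fast-path work; at most one own slow-path operation at $O(n^2)$; plus, via \textit{help\_threads}, at most one helping assist per operation, also $O(n^2)$ — and crucially a helper is triggered only for a thread whose request is currently \textit{pending}, so a helper's assist terminates by the same \textsc{fin}/Lemma~\ref{lemma:coop} argument that bounds the helpee. Summing, every queue operation completes in $O(n^2 + n\cdot\textsc{max\_patience})$ steps, i.e.\ a finite number of steps independent of the scheduling, which is precisely wait-freedom. The main obstacle I expect is the middle step: rigorously arguing that the outer slow-path loop terminates, i.e.\ that \textsc{fin} is set after a bounded number of cooperative successes and that every in-flight helper then observes it promptly (Line~27 and the \textit{load\_global\_help\_phase2} corner case where a fast-path increment races with a set \textit{phase2} pointer). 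This is where Lemmas~\ref{lemma:stuck}, \ref{lemma:coop}, and \ref{lemma:slowwait} must be combined carefully, and where the monotonicity of the paired counters (ruling out ABA on the \textit{phase2} pointer) is doing real work.
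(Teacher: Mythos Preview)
Your proposal is correct and follows essentially the same route as the paper: bound the fast path via Lemma~\ref{lemma:fast}, then argue that the slow-path loop terminates because \textit{slow\_F\&A} is wait-free (Lemma~\ref{lemma:slowwait}) and eventually returns \textbf{false} once \textsc{fin} is set, which in turn is guaranteed by the convergence argument of Lemma~\ref{lemma:stuck} combined with SCQ's underlying lock-freedom. Your write-up is considerably more thorough than the paper's own proof---you explicitly account for \textit{help\_threads}, bound the inner \texttt{goto} retries of \textit{try\_enq\_slow}/\textit{try\_deq\_slow}, and attempt quantitative $O(n^2)$ estimates---whereas the paper argues termination only qualitatively; the one place to tighten is your $O(k)$ justification for the number of outer slow-path iterations (``each such success is attributable to a distinct cooperative thread'' is not quite the right counting, since only \emph{one} success sets \textsc{fin}), but the paper does not give a sharper bound either, so this does not constitute a gap.
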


\begin{proof}
The number of iterations on the fast path (Figure~\ref{alg:wfcq}) is finite
according to Lemma~\ref{lemma:fast}.
\textit{enqueue\_slow} (Line~26, Figure~\ref{alg:wfcq}) terminates when \textit{slow\_F\&A} (which is wait-free according to Lemma~\ref{lemma:slowwait}) returns false, which only happens when FIN is set for the tail pointer (Line~27, Figure~\ref{alg:wfcqslow}). FIN is always set (Line~14, Figure~\ref{alg:wfcqslow}) when some thread from the same helping request succeeds.

We also need to consider a case when a thread always gets unlucky and
is unable to make progress inside \textit{try\_enq\_slow} since entries
keep changing. According to Lemma~\ref{lemma:stuck}, in this situation,
all other threads will eventually converge to help the thread that is stuck.
Thus, the ``unlucky'' thread will either succeed by itself or the result will
be produced by some helper thread, at which point the \textit{slow\_F\&A} loop
is terminated.
(\textit{Dequeue\_wCQ} is wait-free by analogy.)
\end{proof}

\section{Evaluation}
\label{sec:eval}

Since in Section~\ref{sec:correctness} we already determined upper bounds for the
number of iterations and they are reasonable,
the primary goal of our evaluation is to make sure that this wCQ's stronger
wait-free progress property does not come at a significant cost, as it
is the case in present (true) wait-free algorithms such as CRTurn~\cite{pedroWFQUEUEFULL,pedroWFQUEUE}. Thus, our practical evaluation focuses
on key performance attributes that can be quantitatively and fairly
evaluated in \textit{all} existing queues -- memory efficiency and throughput.
Specifically, our goal is to demonstrate that wCQ's performance
in that respect is on par with that of SCQ since wCQ is based on SCQ.

We used the benchmark of~\cite{Yang:2016:WQF:2851141.2851168}
and extended it with the SCQ~\cite{nikolaev:LIPIcs:2019:11335} and CRTurn~\cite{pedroWFQUEUEFULL, pedroWFQUEUE} implementations. We implemented wCQ and integrated it into the test framework.

\begin{figure*}[ht]
\begin{subfigure}{.33\textwidth}
\includegraphics[width=\textwidth]{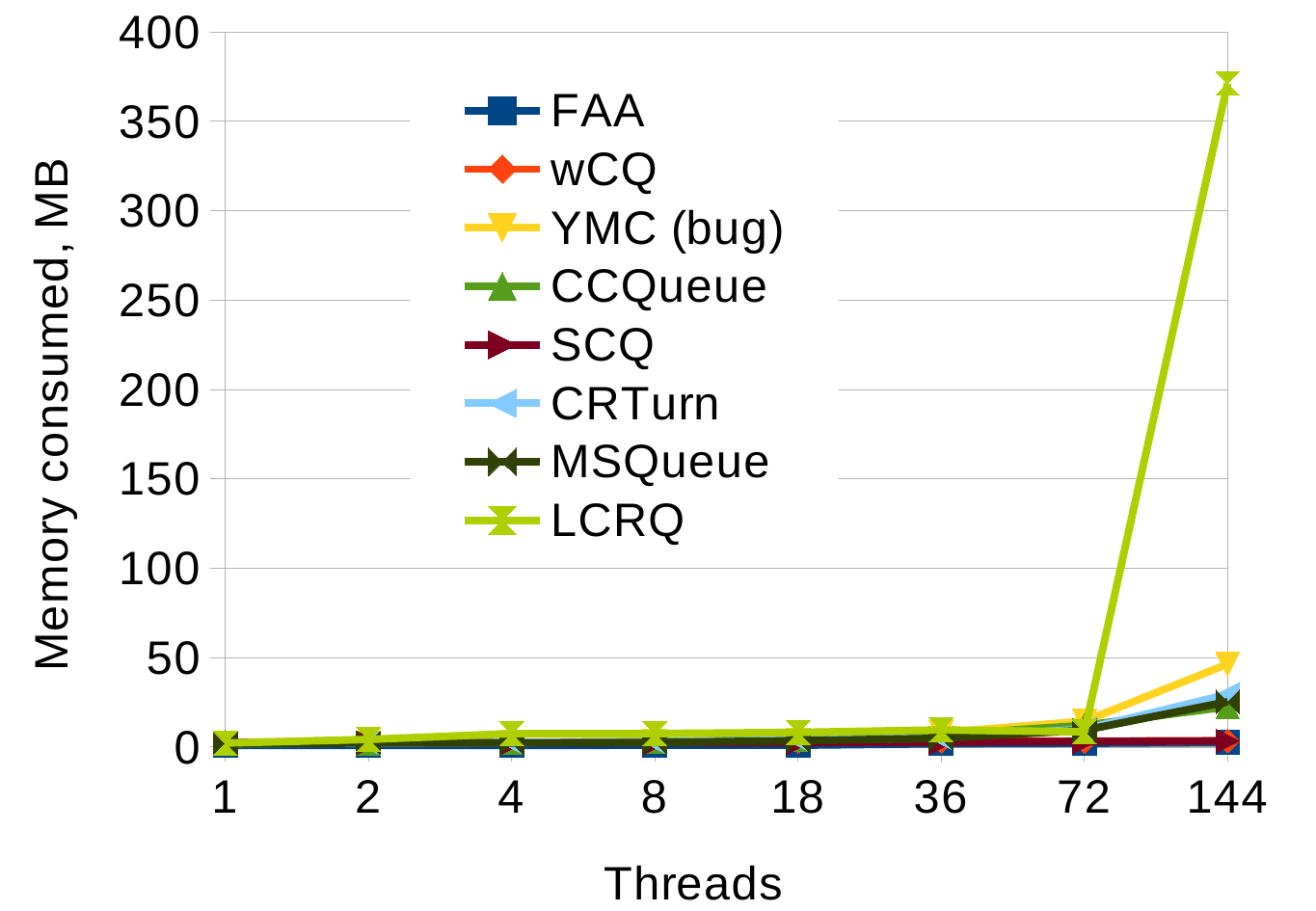}
\caption{Memory usage}
\label{fig:halfhalfmem}
\end{subfigure}%
\begin{subfigure}{.33\textwidth}
\includegraphics[width=\textwidth]{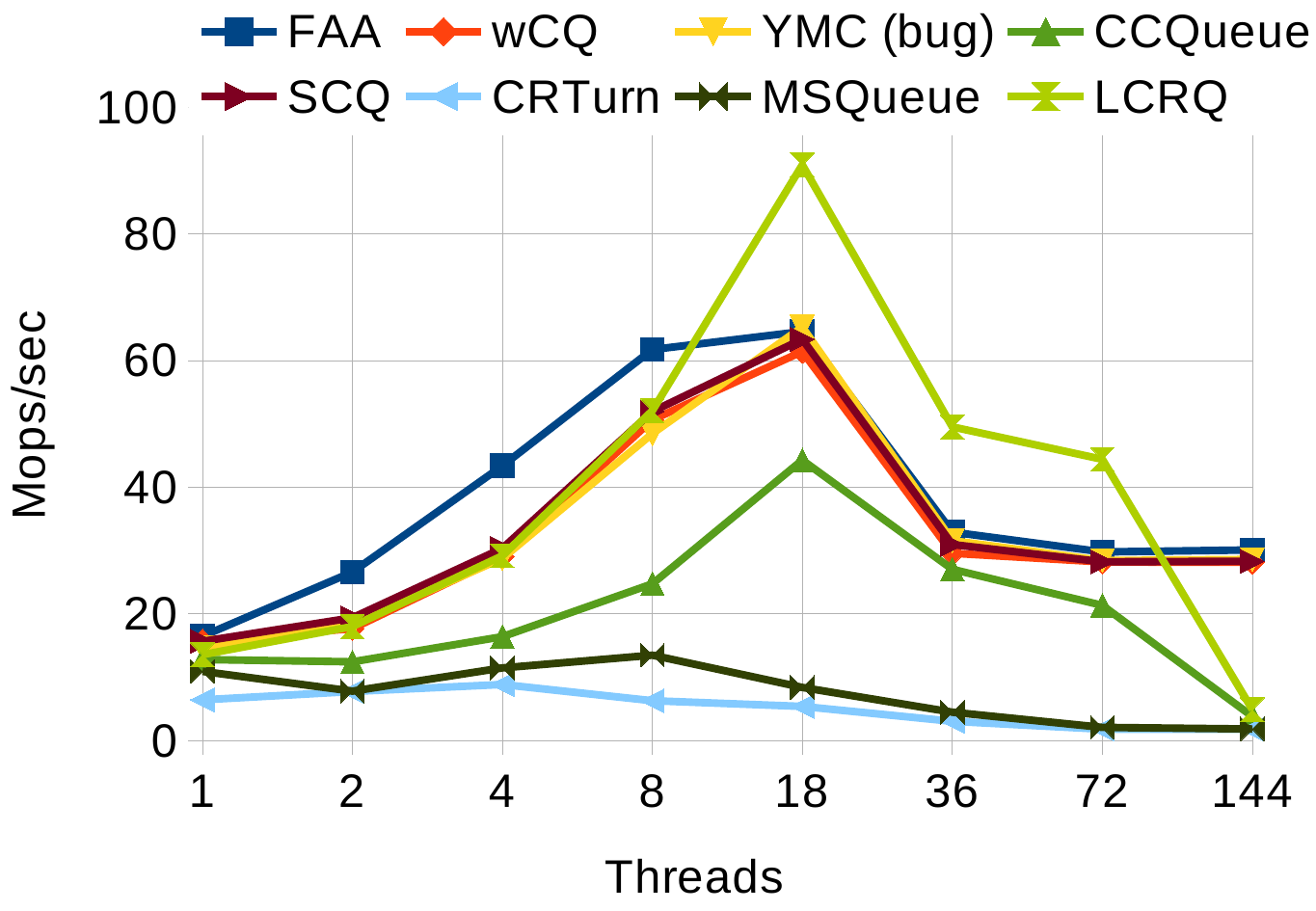}
\caption{Throughput}
\label{fig:halfhalfdel}
\end{subfigure}%
\vspace{-5pt}
\caption{Memory test, x86-64 (Intel Xeon) architecture.}
 \label{fig:memory}
\end{figure*}

We compare wCQ against state-of-the-art algorithms:
\begin{itemize}
\item MSQueue~\cite{Michael:1998:NAP:292022.292026}, a well-known Michael \& Scott's lock-free
queue which is not very performant.
\item LCRQ~\cite{Morrison:2013:FCQ:2442516.2442527}, a queue
that maintains a lock-free list of ring buffers. Ring buffers (CRQs)
are livelock-prone and cannot be used alone.
\item SCQ lock-free queue~\cite{nikolaev:LIPIcs:2019:11335}, a design which
is similar to CRQ but is more memory efficient and avoids livelocks in
ring buffers directly. wCQ is based on SCQ.
\item YMC (Yang \& Mellor-Crummey's) wait-free queue~\cite{Yang:2016:WQF:2851141.2851168}. YMC is flawed (see the discussion in~\cite{pedroWFQUEUEFULL}) in its memory reclamation approach which, strictly described, forfeits wait-freedom.
Since YMC uses F\&A, it is directly relevant in our comparison against wCQ.
\item CRTurn wait-free queue~\cite{pedroWFQUEUEFULL,pedroWFQUEUE}. This is
a truly wait-free queue but as MSQueue, it is not very performant.
\item CCQueue~\cite{Fatourou:2012:RCS:2145816.2145849} is a combining queue, which is not lock-free but still achieves relatively good performance.
\item FAA (fetch-and-add), which is not a true queue algorithm;
it simply atomically increments \textit{Head} and \textit{Tail} when
calling \textit{Dequeue} and \textit{Enqueue} respectively. FAA is only
shown to provide a theoretical performance ``upper bound'' for
F\&A-based queues.
\end{itemize}

SimQueue~\cite{SIMQUEUE} is another wait-free queue, but it
lacks memory reclamation and does not scale as well as the fastest queues. Consequently, SimQueue is not presented here.

For wCQ, we set MAX\_PATIENCE to 16 for \textit{Enqueue} and 64 for \textit{Dequeue}, which
results in taking the slow path relatively infrequently.

To avoid cluttering, we do not separately present a combination of
SCQ with MSQueue or wCQ with a slower wait-free unbounded queue
since: (1) they \textit{only} need that combination when
an unbounded queue is desired (which is unlike LCRQ or YMC, where an outer
layer is always required for progress guarantees) and (2) the
costs of these unbounded queues were measured and found to be
largely dominated by the underlying SCQ and wCQ implementations respectively.
Memory reclamation overheads for queues are generally not that significant,
especially when using more recent fast lock- and wait-free approaches~\cite{10.1145/3178487.3178488,10.1145/3087556.3087588,10.1145/3453483.3454090,10.1145/3293611.3331575,nikolaev_et_al:LIPIcs.DISC.2021.60,10.1145/3332466.3374540}. 
As in~\cite{nikolaev:LIPIcs:2019:11335, Yang:2016:WQF:2851141.2851168},
we use customized reclamation for YMC and hazard pointers
elsewhere (LCRQ, MSQueue, CRTurn). 

\begin{figure*}[ht]
\begin{subfigure}{.34\textwidth}
\includegraphics[width=\textwidth]{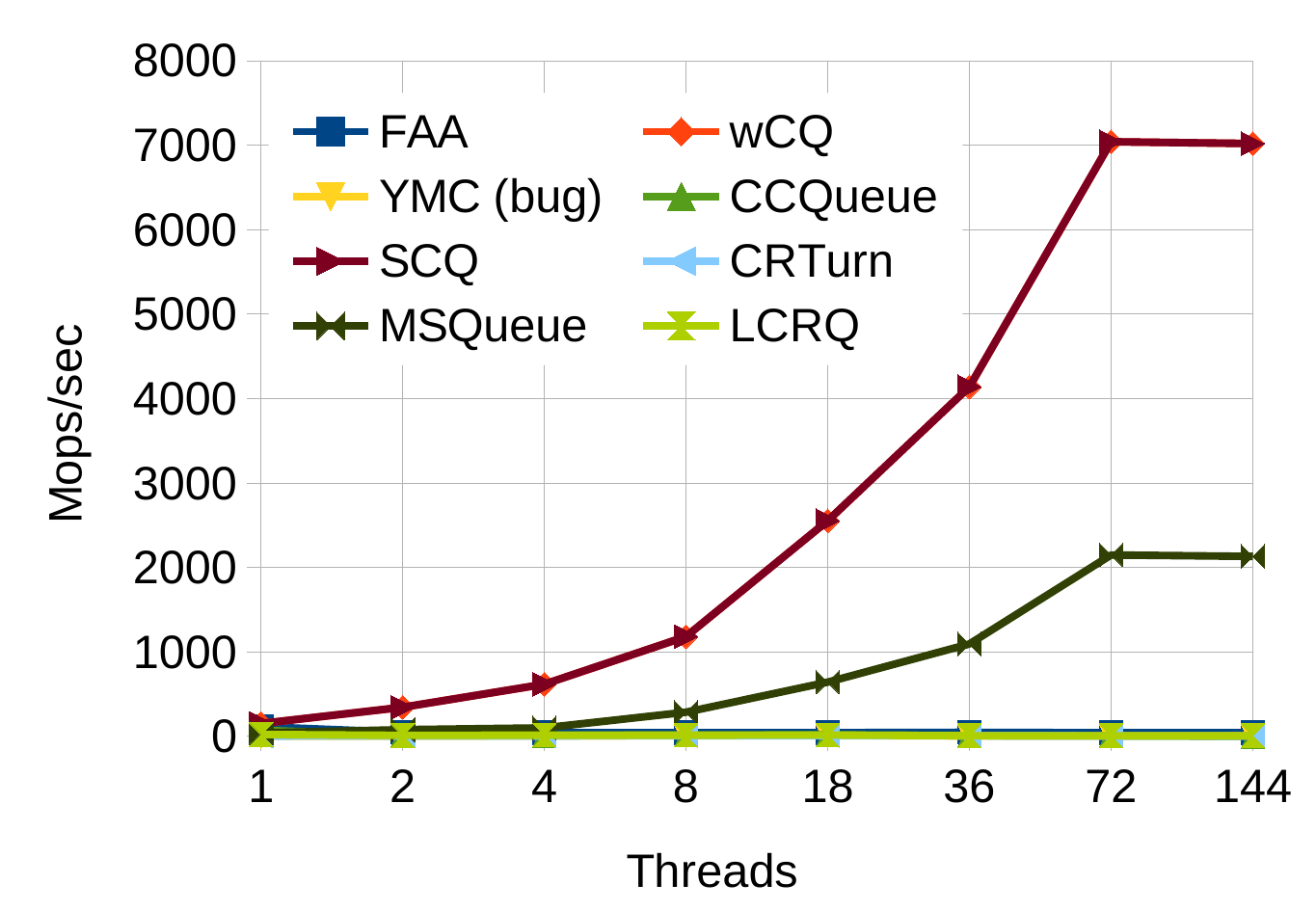}
\caption{Empty \textit{Dequeue} throughput}
\label{fig:emptyx86}
\end{subfigure}%
\hspace{-1em}
\begin{subfigure}{.34\textwidth}
\includegraphics[width=\textwidth]{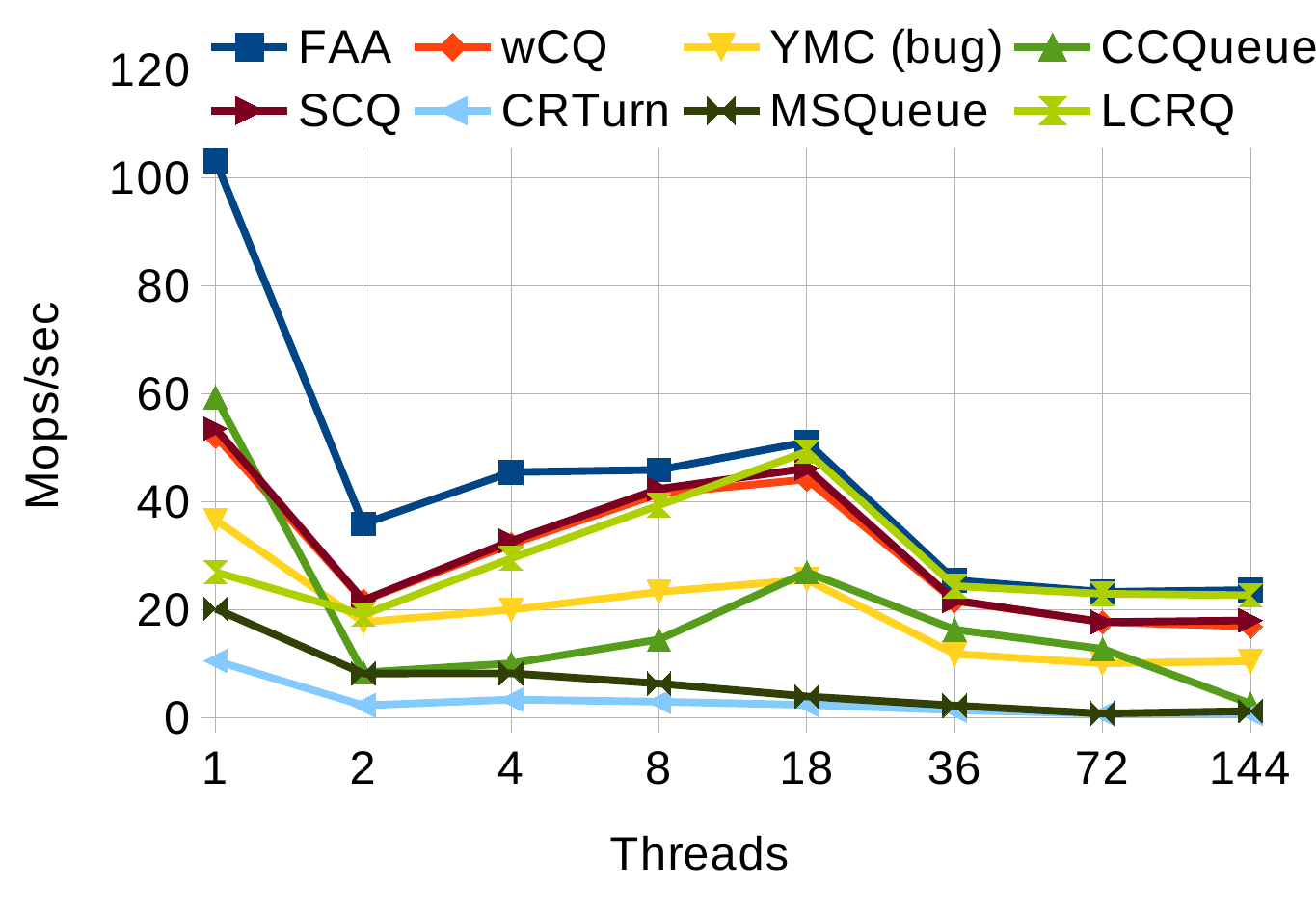}
\caption{Pairwise \textit{Enqueue}-\textit{Dequeue}}
\label{fig:bpairwise}
\end{subfigure}%
\hspace{-1em}
\begin{subfigure}{.34\textwidth}
\includegraphics[width=\textwidth]{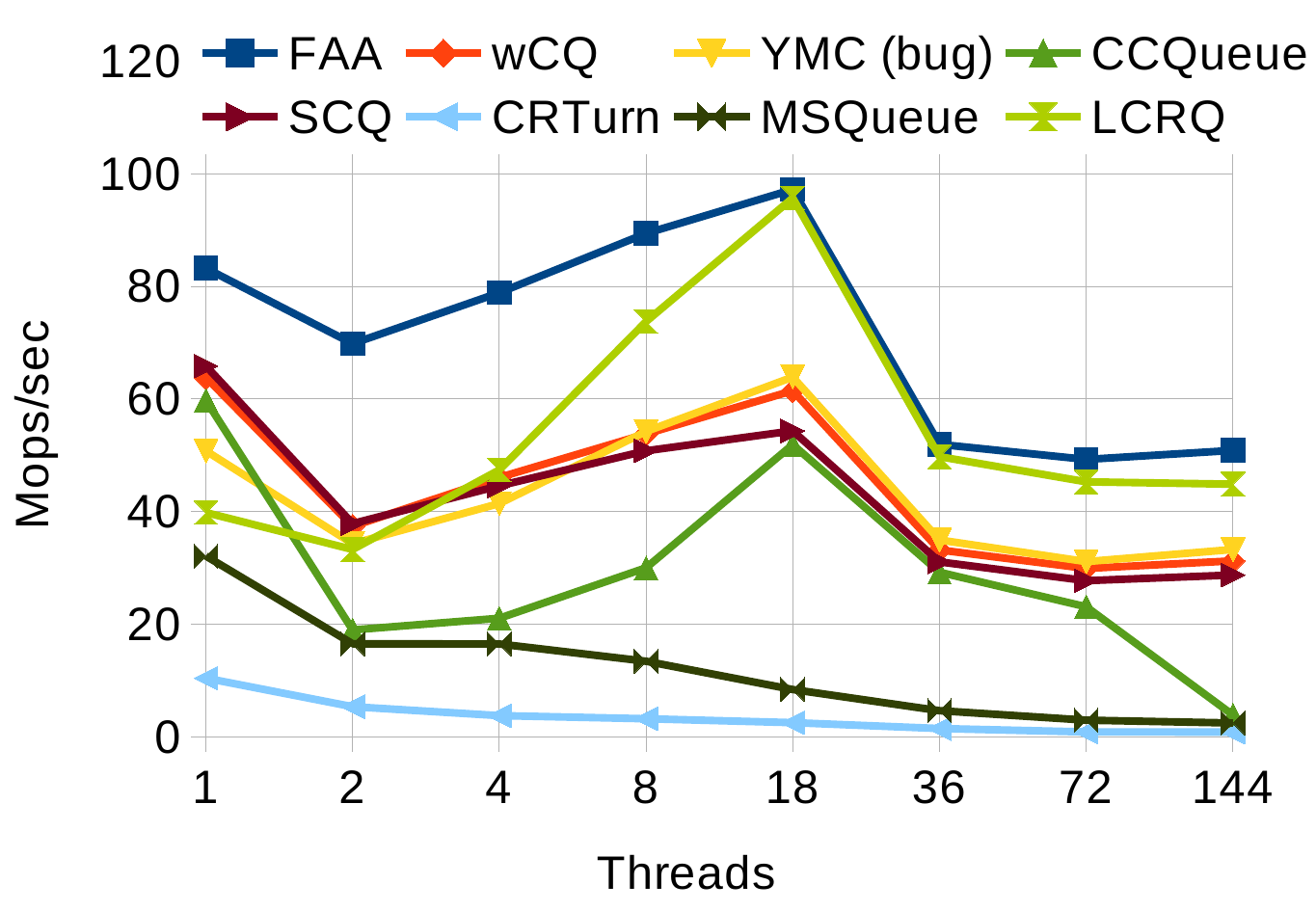}
\caption{50\%/50\% \textit{Enqueue}-\textit{Dequeue}}
\label{fig:bhalfhalf}
\end{subfigure}%
\vspace{-5pt}
\caption{x86-64 (Intel Xeon) architecture.}
\end{figure*}

\begin{figure*}[ht]
\begin{subfigure}{.34\textwidth}
\includegraphics[width=\textwidth]{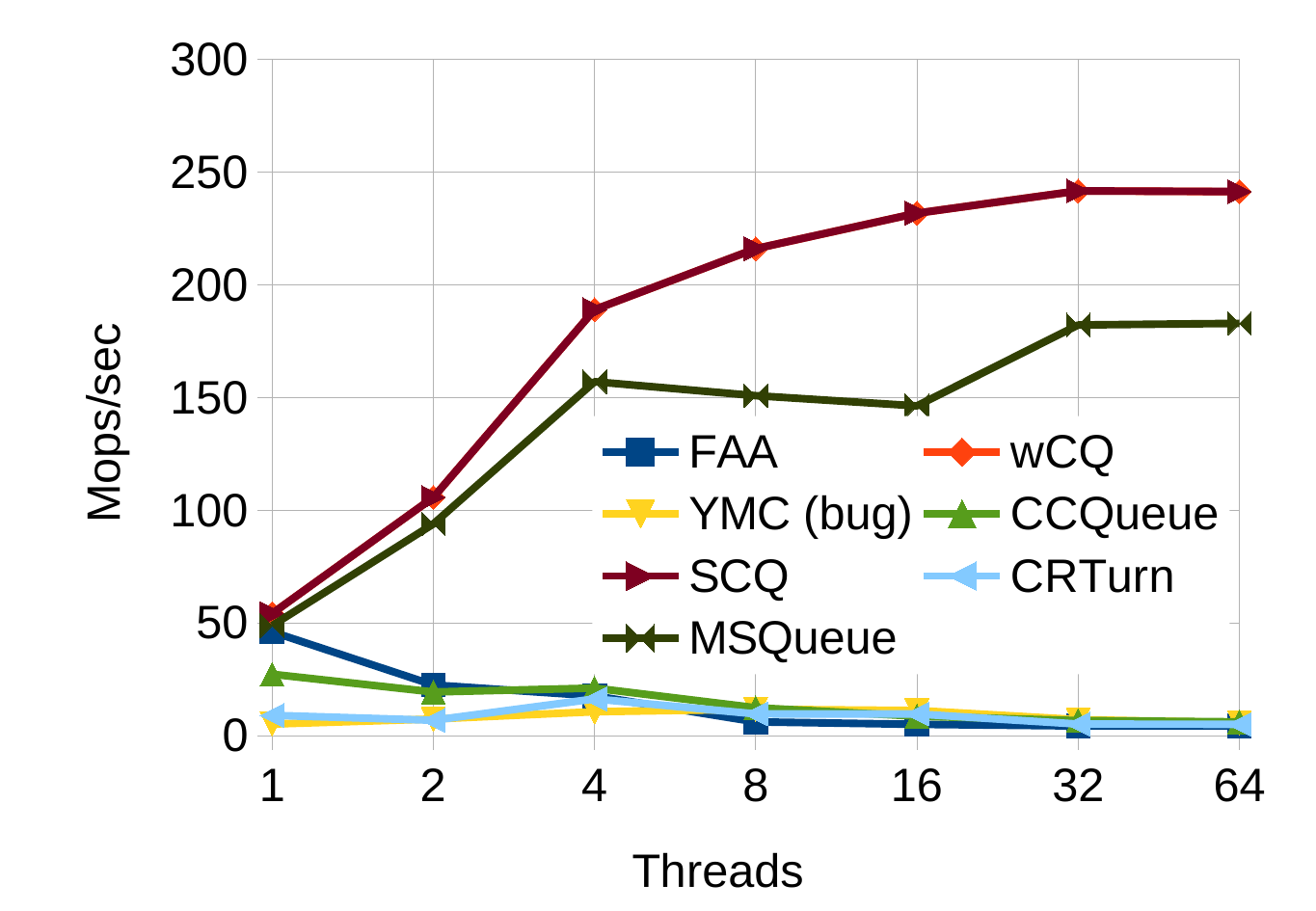}
\caption{Empty \textit{Dequeue} throughput}
\label{fig:emptyppc}
\end{subfigure}%
\hspace{-1em}
\begin{subfigure}{.34\textwidth}
\includegraphics[width=\textwidth]{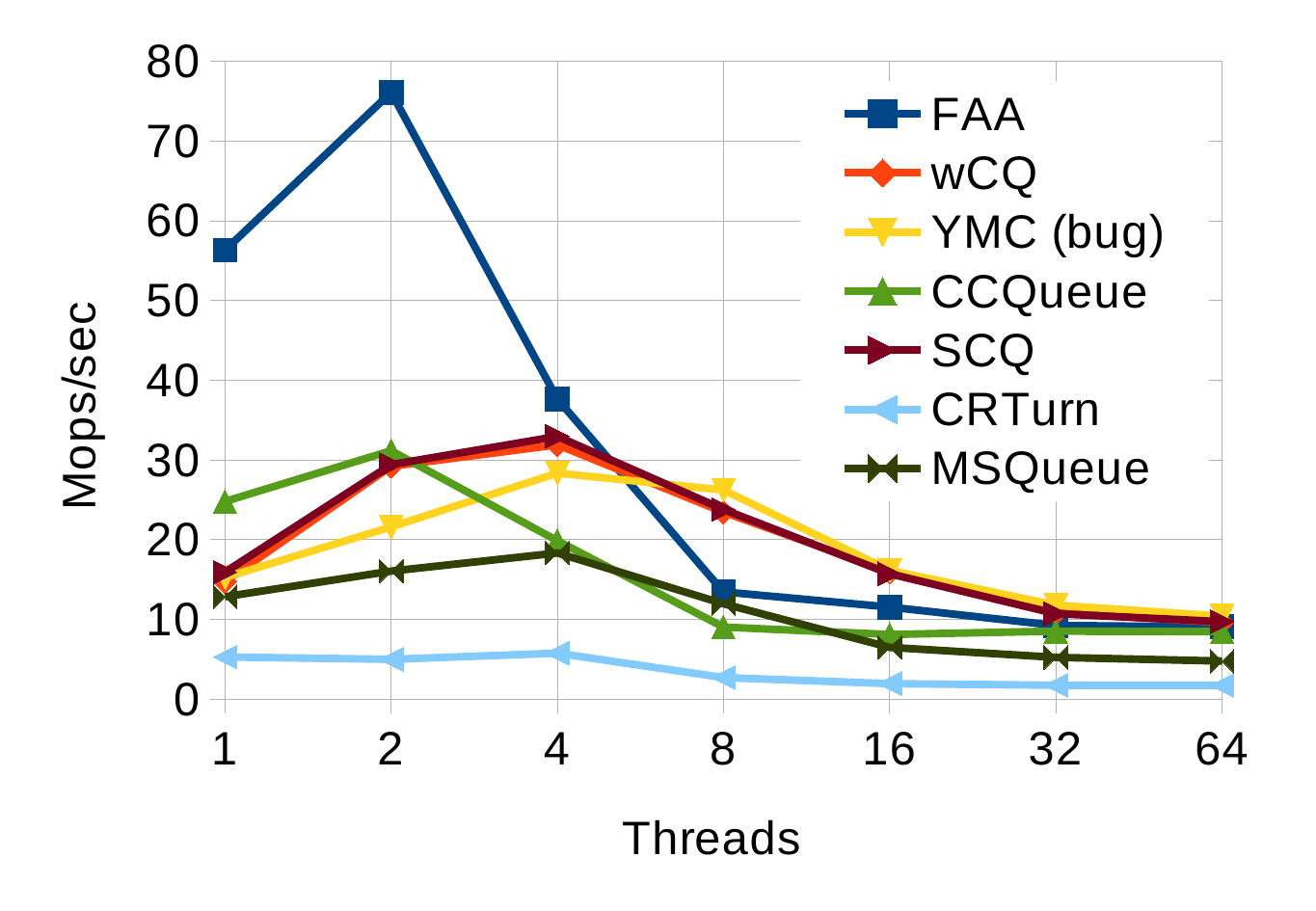}
\caption{Pairwise \textit{Enqueue}-\textit{Dequeue}}
\label{fig:apairwiseppc}
\end{subfigure}%
\hspace{-1em}
\begin{subfigure}{.34\textwidth}
\includegraphics[width=\textwidth]{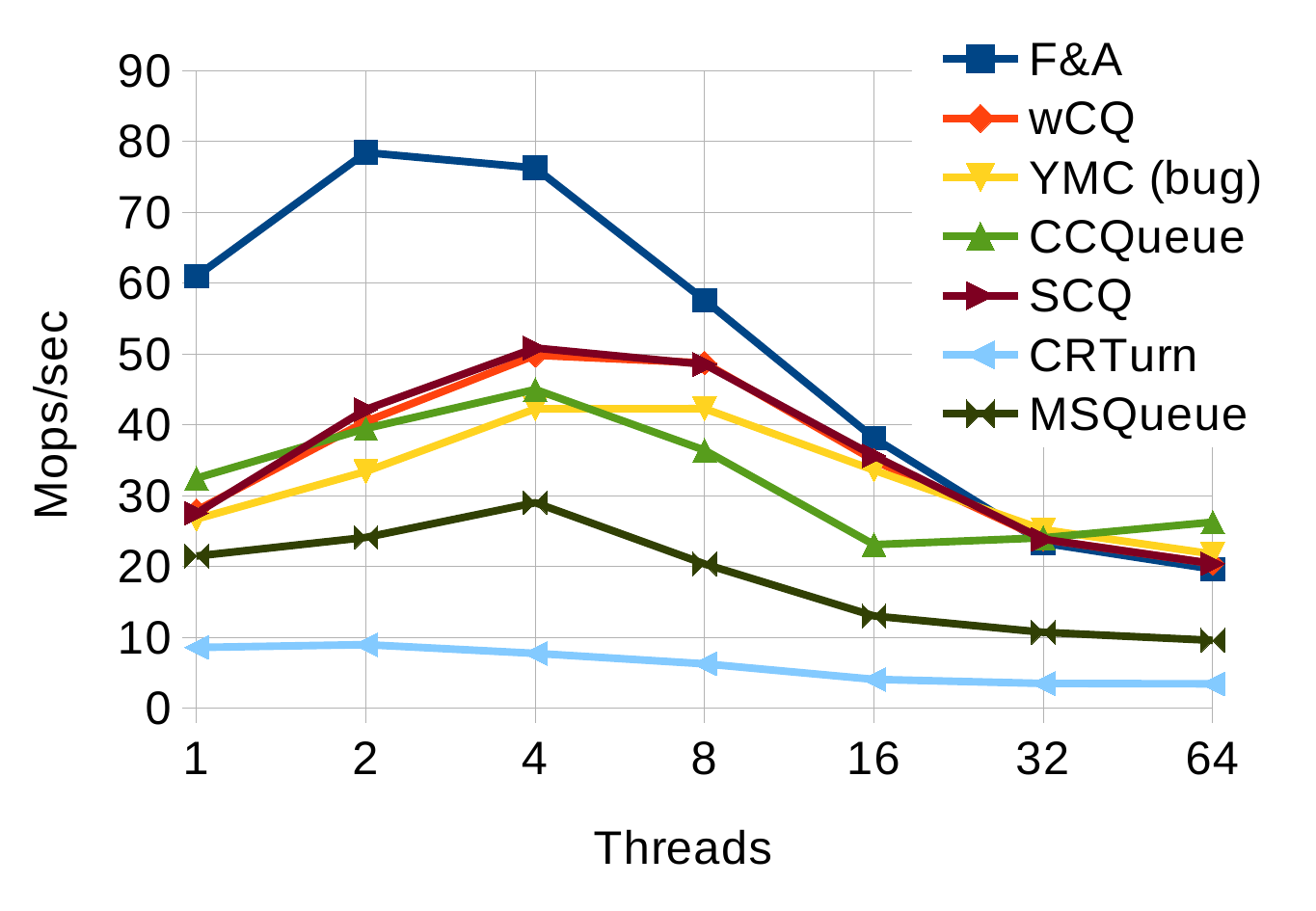}
\caption{50\%/50\% \textit{Enqueue}-\textit{Dequeue}}
\label{fig:ahalfhalfppc}
\end{subfigure}%
\vspace{-5pt}
\caption{PowerPC architecture.}
\end{figure*}

Since each queue has different properties and trade-offs, care must be taken
when assessing their performance. For example, LCRQ can achieve
higher performance in certain tests but it is not portable (e.g., cannot be implemented on PowerPC). Moreover, LCRQ is not as memory efficient as
other queues such as SCQ. Likewise, YMC may exhibit performance similar to SCQ
or wCQ, but its memory reclamation flaws need to be factored in as well.

We performed all tests on x86\_64 and PowerPC machines. The x86\_64 machine has
128GB of RAM and four Intel Xeon E7-8880~v3 (2.30GHz) processors, each with 18 cores with hyper-threading disabled. We used Ubuntu 18.04~LTS installation with gcc 8.4 (-O3 optimization).
wCQ for x86\_64 benefits from CAS2 and hardware-based F\&A.
The PowerPC machine has 64GB of RAM, 64 logical cores (each core has 8 threads), and runs at 3.0 Ghz (8MB~L3 cache). We also used Ubuntu 18.04~LTS installation with gcc 8.4 (-O3 optimization). 
wCQ for PowerPC does not benefit from native F\&A.
Since CAS2 is unavailable on PowerPC, wCQ is implemented via
LL/SC (see Section~\ref{sec:llsc}).
LCRQ requires true CAS2, and its results are only presented for x86\_64.
On both x86\_64 and PowerPC, we use
jemalloc~\cite{jemalloc} due to its better performance~\cite{allocators}.

The benchmark measures average throughput and protects again\-st
outliers. Each point is measured 10 times for
10,000,000 operations in a loop.
The coefficient of variation, as reported by the benchmark, is small
(< 0.01).
We use a relatively small ring buffer ($2^{16}$ entries) for wCQ and SCQ.
For all other queues, we use the default parameters from~\cite{Yang:2016:WQF:2851141.2851168} which appear to be optimal.
Since contention across sockets is expensive, x86-64's throughput peaks for 18 threads
(all 18 threads can fit just one physical CPU).
Likewise, PowerPC's throughput peaks for 4-8 threads.

Our setup for all tests  (except memory efficiency) is slightly
different from~\cite{Yang:2016:WQF:2851141.2851168}, where authors always
used tiny random delays between operations. This resulted in YMC's superior
(relative) performance compared to LCRQ. However, it is unclear if such
a setup reflects any specific practical scenario. Additionally, it
degrades overall absolute throughput, whereas raw performance unhindered by
any delays is often preferred. In our tests, contrary to~\cite{Yang:2016:WQF:2851141.2851168}, LCRQ is often on par or even
superior to YMC, as it is also the case in~\cite{nikolaev:LIPIcs:2019:11335}.

As in~\cite{nikolaev:LIPIcs:2019:11335}, we first measured memory efficiency
using standard malloc (this is merely to make sure that memory pages are
unmapped more frequently, otherwise no fundamental difference in trends exists with jemalloc). This test additionally places tiny random delays between
\textit{Dequeue} and \textit{Enqueue} operations, which we empirically
found to be helpful in amplifying memory efficiency artifacts.
Operations are also chosen randomly,
\textit{Enqueue} for one half of the time, and \textit{Dequeue} for the other
half of the time. LCRQ appears to provide higher overall throughput, but
its memory consumption grows very fast ($\approx380$MB when the number of
threads reaches 144). YMC's memory consumption also grows but slower ($\approx50$MB). Finally, just like SCQ, wCQ is superior in this test. It only needs $\approx1$MB for the ring buffer as well as a negligible amount of memory for per-thread states. The reason for LCRQ's high memory consumption is discussed in~\cite{nikolaev:LIPIcs:2019:11335}. The underlying ring buffers in these algorithms are livelock-prone and can prematurely be ``closed.'' Each ring buffer, for better performance, needs to have at least $2^{12}$ entries, resulting in memory waste in certain cases.

Since performance on empty queues is known to vary~\cite{nikolaev:LIPIcs:2019:11335,Yang:2016:WQF:2851141.2851168}, we also measured \textit{Dequeue} in a tight loop for an empty queue (Figures~\ref{fig:emptyx86}~and~\ref{fig:emptyppc}). wCQ and SCQ both have superior performance due to their use of \textit{Threshold} values. MSQueue also performs well, whereas other queues have inferior performance. In this test, FAA performs poorly since it still incurs cache
invalidations due to RMW operations.

We then measured pairwise \textit{Enqueue}-\textit{Dequeue} operations by
executing \textit{Enqueue} followed by \textit{Dequeue} in a tight
loop. wCQ, SCQ, and LCRQ have superior performance for x86-64 (Figure~\ref{fig:bpairwise}).
wCQ and SCQ both have superior performance for PowerPC (Figure~\ref{fig:apairwiseppc}). In general,
no significant performance difference is observed for wCQ vs. SCQ.
YMC and other queues have worse performance.

We repeated the same experiment, but this time we selected \textit{Enqueue} and \textit{Dequeue} randomly. We executed \textit{Enqueue} 50\% of the time and
\textit{Dequeue} 50\% of the time.
wCQ's, SCQ's, and YMC's performance is very similar for x86-64 (Figure~\ref{fig:bhalfhalf}).
wCQ outperforms SCQ slightly due to reduced cache contention since wCQ's entries are larger. LCRQ typically outperforms all algorithms but
can be vulnerable to increased memory consumption (Figure~\ref{fig:halfhalfmem} shows this scenario). Moreover, LCRQ is only lock-free.
The remaining queues have inferior performance. wCQ and SCQ outperform
YMC and other queues for PowerPC (Figure~\ref{fig:ahalfhalfppc}).

Overall, wCQ is the fastest wait-free queue. Its performance mostly matches the original SCQ algorithm, but wCQ enables wait-free progress. wCQ generally outperforms YMC, for which memory usage can be unbounded. LCRQ can sometimes yield better performance but it is only lock-free. LCRQ also sometimes suffers from high memory usage.

\section{Related Work}

The literature presents a number of lock- and wait-free queues.
Michael \& Scott's FIFO queue~\cite{Michael:1998:NAP:292022.292026} is a lock-free queue that keeps elements in a list of nodes. CAS is used to delete or insert new elements. Kogan \& Petrank's wait-free queue~\cite{kpWFQUEUE} targets languages with garbage collectors such as Java. CRTurn~\cite{pedroWFQUEUE} is a wait-free queue with built-in memory reclamation. Typically, none of these queues are high-performant. When feasible, certain batching lock-free approaches~\cite{10.1145/3210377.3210388} can be used to alleviate the cost.

Though SimQueue~\cite{SIMQUEUE} uses F\&A and outperforms Michael \& Scott's
FIFO queue by aggregating concurrent operations with one CAS, it is still not
that performant.

Ring buffers can be used for bounded queues. Typical ring buffers require CAS as in~\cite{10.1145/3332466.3374508,Tsigas:2001:SFS:378580.378611}. Unfortunately, CAS scales poorly as the contention grows.
Moreover, a number of approaches~\cite{Freudenthal:1991:PCF:106972.106998,Krizhanovsky:2013:LMM:2492102.2492106,vyakov} are either not linearizable or not lock-free despite claims to the contrary, as discussed in~\cite{Feldman:2015:WMM:2835260.2835264,ringdisapp,Morrison:2013:FCQ:2442516.2442527}. Thus, designing high-performant, lock-free ring buffers remained a challenging problem.

Recently, a number of fast lock-free queues have been developed, which are inspired by ring buffers. LCRQ~\cite{Morrison:2013:FCQ:2442516.2442527} implements high-performant, livelock-prone
ring buffers that use F\&A. To workaround livelocks, LCRQ links ``stalled'' ring buffers to Michael \& Scott's lock-free list of ring buffers. This, however,
may result in poor memory efficiency. SCQ~\cite{nikolaev:LIPIcs:2019:11335}, inspired by LCRQ, goes a step further by implementing a high-performant, lock-free ring buffer which can also be linked to Michael \& Scott's lock-free list of ring buffers to create more memory efficient unbounded queues. Our proposed wCQ algorithm, inspired by SCQ, takes another step to devise a fully wait-free ring buffer. A slower queue (e.g., CRTurn) can be used as an outer layer for wCQ to
implement unbounded queues. YMC~\cite{Yang:2016:WQF:2851141.2851168}, also partially inspired by LCRQ, attempted to create a wait-free queue. However, as discussed in~\cite{pedroWFQUEUEFULL}, YMC is flawed in its memory reclamation approach, and is therefore not truly wait-free. Finally, LOO~\cite{FIFOTPDS}, another recent queue inspired by LCRQ, implements a specialized memory reclamation approach but is only lock-free.

Garbage collectors can alleviate reclamation problems but are not
always practical, e.g., in C/C++. Furthermore, to our knowledge, there
is no wait-free garbage collector with \emph{limited} overheads and \emph{bounded} memory usage. FreeAccess~\cite{Cohen:2018:DSD:3288538.3276513} and OrcGC~\cite{OrcGC} are efficient but they are only lock-free.

Finally, memory-boundness is an important goal in general, as evidenced
by recent works. wCQ is close to the theoretical bound discussed in~\cite{MEMORYOPT}.

\section{Conclusion}

We presented wCQ, a fast wait-free FIFO queue.
wCQ is the \emph{first} high-performant wait-free queue for which memory usage
is bound\-ed. Prior approaches, such as YMC~\cite{Yang:2016:WQF:2851141.2851168}, also aimed
at high performance, but failed to guarantee bounded memory usage.
Strictly described, YMC is not wait-free as it is
blocking when memory is exhausted.

Similar to SCQ's original lock-free design, wCQ uses F\&A (fetch-and-add) for the
most contended hot spots of the algorithm:
\textit{Head} and \textit{Tail} pointers.
Although native F\&A is recommended, wCQ retains wait-freedom
even on architectures without F\&A, such as PowerPC or MIPS. wCQ requires
double-width CAS to implement slow-path procedures correctly but can also
be implemented on architectures with ordinary LL/SC (PowerPC or MIPS).
Thus, wCQ largely retains SCQ's portability across different architectures.

Though Kogan-Petrank's method can be used to create
wait-free queues with CAS~\cite{kpWFQUEUE}, wCQ addresses unique challenges since it avoids
dynamic allocation and uses F\&A. We hope that wCQ will spur
further research in creating \textit{better performing} wait-free data structures with F\&A.

Unbounded queues  can be created by linking wCQs together,
similarly to LCRQ or LSCQ, which use Michael \& Scott's lock-free queue as an outer layer.
The underlying algorithm need not be performant
since new circular queues are only allocated very infrequently. Assuming that
the underlying (non-performant) algorithm is truly wait-free with bounded memory
usage such as CRTurn~\cite{pedroWFQUEUEFULL, pedroWFQUEUE}, wCQ's complete benefits are retained even for unbounded queues.

Our slow\_F\&A idea can be adopted in other data structures that
rely on F\&A for improved performance.

\section*{Availability}
We provide wCQ's code at \url{https://github.com/rusnikola/wfqueue}.

An extended and up-to-date arXiv version of the paper is available at \url{https://arxiv.org/abs/2201.02179}.

\section*{Acknowledgments}

We would like to thank the anonymous reviewers for their invaluable feedback.

A preliminary version of the algorithm previously appeared as a poster at PPoPP '22~\cite{10.1145/3503221.3508440}.

This work is supported in part by AFOSR under grant FA9550-16-1-0371 and ONR under grants N00014-18-1-2022 and N00014-19-1-2493.

\bibliography{lockfree}


\begin{thebibliography}{45}


\ifx \showCODEN    \undefined \def \showCODEN     #1{\unskip}     \fi
\ifx \showDOI      \undefined \def \showDOI       #1{#1}\fi
\ifx \showISBNx    \undefined \def \showISBNx     #1{\unskip}     \fi
\ifx \showISBNxiii \undefined \def \showISBNxiii  #1{\unskip}     \fi
\ifx \showISSN     \undefined \def \showISSN      #1{\unskip}     \fi
\ifx \showLCCN     \undefined \def \showLCCN      #1{\unskip}     \fi
\ifx \shownote     \undefined \def \shownote      #1{#1}          \fi
\ifx \showarticletitle \undefined \def \showarticletitle #1{#1}   \fi
\ifx \showURL      \undefined \def \showURL       {\relax}        \fi
\providecommand\bibfield[2]{#2}
\providecommand\bibinfo[2]{#2}
\providecommand\natexlab[1]{#1}
\providecommand\showeprint[2][]{arXiv:#2}

\bibitem[Aksenov et~al\mbox{.}(2022)]%
        {MEMORYOPT}
\bibfield{author}{\bibinfo{person}{Vitaly Aksenov}, \bibinfo{person}{Nikita
  Koval}, {and} \bibinfo{person}{Petr Kuznetsov}.}
  \bibinfo{year}{2022}\natexlab{}.
\newblock \showarticletitle{{Memory-Optimality for Non-Blocking Containers}}.
\newblock
\urldef\tempurl%
\url{https://arxiv.org/abs/2104.15003}
\showURL{%
\tempurl}


\bibitem[{ARM}(2022)]%
        {arm:manual}
\bibfield{author}{\bibinfo{person}{{ARM}}.} \bibinfo{year}{2022}\natexlab{}.
\newblock \bibinfo{title}{{ARM Architecture Reference Manual}}.
\newblock \bibinfo{howpublished}{\url{http://developer.arm.com/}}.
\newblock


\bibitem[Cohen(2018)]%
        {Cohen:2018:DSD:3288538.3276513}
\bibfield{author}{\bibinfo{person}{Nachshon Cohen}.}
  \bibinfo{year}{2018}\natexlab{}.
\newblock \showarticletitle{{Every Data Structure Deserves Lock-free Memory
  Reclamation}}.
\newblock \bibinfo{journal}{\emph{Proc. ACM Program. Lang.}}
  \bibinfo{volume}{2}, \bibinfo{number}{OOPSLA}, Article
  \bibinfo{articleno}{143} (\bibinfo{date}{Oct.} \bibinfo{year}{2018}),
  \bibinfo{numpages}{24}~pages.
\newblock
\showISSN{2475-1421}
\urldef\tempurl%
\url{https://doi.org/10.1145/3276513}
\showDOI{\tempurl}


\bibitem[Correia et~al\mbox{.}(2021)]%
        {OrcGC}
\bibfield{author}{\bibinfo{person}{Andreia Correia}, \bibinfo{person}{Pedro
  Ramalhete}, {and} \bibinfo{person}{Pascal Felber}.}
  \bibinfo{year}{2021}\natexlab{}.
\newblock \showarticletitle{{OrcGC: Automatic Lock-Free Memory Reclamation}}.
  In \bibinfo{booktitle}{\emph{Proceedings of the 26th ACM SIGPLAN Symposium on
  Principles and Practice of Parallel Programming}}
  \emph{(\bibinfo{series}{PPoPP '21})}. \bibinfo{publisher}{ACM},
  \bibinfo{pages}{205--218}.
\newblock
\showISBNx{9781450382946}
\urldef\tempurl%
\url{https://doi.org/10.1145/3437801.3441596}
\showDOI{\tempurl}


\bibitem[{DPDK Developers}(2022)]%
        {DPDK}
\bibfield{author}{\bibinfo{person}{{DPDK Developers}}.}
  \bibinfo{year}{2022}\natexlab{}.
\newblock \showarticletitle{{Data Plane Development Kit (DPDK)}}.
\newblock
\newblock
\shownote{\url{https://dpdk.org/}}.


\bibitem[Evans(2006)]%
        {jemalloc}
\bibfield{author}{\bibinfo{person}{Jason Evans}.}
  \bibinfo{year}{2006}\natexlab{}.
\newblock \showarticletitle{{A scalable concurrent malloc(3) implementation for
  FreeBSD}}. In \bibinfo{booktitle}{\emph{Proceedings of the BSDCan Conference,
  Ottawa, Canada}}.
\newblock
\urldef\tempurl%
\url{https://www.bsdcan.org/2006/papers/jemalloc.pdf}
\showURL{%
\tempurl}


\bibitem[Fatourou and Kallimanis(2011)]%
        {SIMQUEUE}
\bibfield{author}{\bibinfo{person}{Panagiota Fatourou} {and}
  \bibinfo{person}{Nikolaos~D. Kallimanis}.} \bibinfo{year}{2011}\natexlab{}.
\newblock \showarticletitle{{A Highly-Efficient Wait-Free Universal
  Construction}}. In \bibinfo{booktitle}{\emph{Proceedings of the 23rd Annual
  ACM Symposium on Parallelism in Algorithms and Architectures}} (San Jose,
  California, USA) \emph{(\bibinfo{series}{SPAA '11})}.
  \bibinfo{publisher}{ACM}, \bibinfo{address}{New York, NY, USA},
  \bibinfo{pages}{325--334}.
\newblock
\showISBNx{9781450307437}
\urldef\tempurl%
\url{https://doi.org/10.1145/1989493.1989549}
\showDOI{\tempurl}


\bibitem[Fatourou and Kallimanis(2012)]%
        {Fatourou:2012:RCS:2145816.2145849}
\bibfield{author}{\bibinfo{person}{Panagiota Fatourou} {and}
  \bibinfo{person}{Nikolaos~D. Kallimanis}.} \bibinfo{year}{2012}\natexlab{}.
\newblock \showarticletitle{{Revisiting the Combining Synchronization
  Technique}}. In \bibinfo{booktitle}{\emph{Proceedings of the 17th ACM SIGPLAN
  Symposium on Principles and Practice of Parallel Programming}} (New Orleans,
  Louisiana, USA) \emph{(\bibinfo{series}{PPoPP '12})}.
  \bibinfo{publisher}{ACM}, \bibinfo{address}{New York, NY, USA},
  \bibinfo{pages}{257--266}.
\newblock
\showISBNx{978-1-4503-1160-1}
\urldef\tempurl%
\url{https://doi.org/10.1145/2145816.2145849}
\showDOI{\tempurl}


\bibitem[Feldman and Dechev(2015)]%
        {Feldman:2015:WMM:2835260.2835264}
\bibfield{author}{\bibinfo{person}{Steven Feldman} {and}
  \bibinfo{person}{Damian Dechev}.} \bibinfo{year}{2015}\natexlab{}.
\newblock \showarticletitle{{A Wait-free Multi-producer Multi-consumer Ring
  Buffer}}.
\newblock \bibinfo{journal}{\emph{ACM SIGAPP Applied Computing Review}}
  \bibinfo{volume}{15}, \bibinfo{number}{3} (\bibinfo{date}{Oct.}
  \bibinfo{year}{2015}), \bibinfo{pages}{59--71}.
\newblock
\showISSN{1559-6915}
\urldef\tempurl%
\url{https://doi.org/10.1145/2835260.2835264}
\showDOI{\tempurl}


\bibitem[Freudenthal and Gottlieb(1991)]%
        {Freudenthal:1991:PCF:106972.106998}
\bibfield{author}{\bibinfo{person}{Eric Freudenthal} {and}
  \bibinfo{person}{Allan Gottlieb}.} \bibinfo{year}{1991}\natexlab{}.
\newblock \showarticletitle{{Process Coordination with Fetch-and-increment}}.
  In \bibinfo{booktitle}{\emph{Proceedings of the 4th International Conference
  on Architectural Support for Programming Languages and Operating Systems}}
  (Santa Clara, California, USA) \emph{(\bibinfo{series}{ASPLOS IV})}.
  \bibinfo{pages}{260--268}.
\newblock
\showISBNx{0-89791-380-9}
\urldef\tempurl%
\url{https://doi.org/10.1145/106972.106998}
\showDOI{\tempurl}


\bibitem[Giersch and Nolte(2022)]%
        {FIFOTPDS}
\bibfield{author}{\bibinfo{person}{Oliver Giersch} {and} \bibinfo{person}{Jörg
  Nolte}.} \bibinfo{year}{2022}\natexlab{}.
\newblock \showarticletitle{{Fast and Portable Concurrent FIFO Queues With
  Deterministic Memory Reclamation}}.
\newblock \bibinfo{journal}{\emph{IEEE Trans. on Parallel and Distributed
  Systems}} \bibinfo{volume}{33}, \bibinfo{number}{3} (\bibinfo{year}{2022}),
  \bibinfo{pages}{604--616}.
\newblock
\urldef\tempurl%
\url{https://doi.org/10.1109/TPDS.2021.3097901}
\showDOI{\tempurl}


\bibitem[Hendler et~al\mbox{.}(2004)]%
        {Hendler:2004:SLS:1007912.1007944}
\bibfield{author}{\bibinfo{person}{Danny Hendler}, \bibinfo{person}{Nir
  Shavit}, {and} \bibinfo{person}{Lena Yerushalmi}.}
  \bibinfo{year}{2004}\natexlab{}.
\newblock \showarticletitle{{A Scalable Lock-free Stack Algorithm}}. In
  \bibinfo{booktitle}{\emph{Proceedings of the 16th ACM Symposium on
  Parallelism in Algorithms and Architectures}} (Barcelona, Spain)
  \emph{(\bibinfo{series}{SPAA '04})}. \bibinfo{publisher}{ACM},
  \bibinfo{address}{New York, NY, USA}, \bibinfo{pages}{206--215}.
\newblock
\showISBNx{1-58113-840-7}
\urldef\tempurl%
\url{https://doi.org/10.1145/1007912.1007944}
\showDOI{\tempurl}


\bibitem[Herlihy and Wing(1990)]%
        {10.1145/78969.78972}
\bibfield{author}{\bibinfo{person}{Maurice~P. Herlihy} {and}
  \bibinfo{person}{Jeannette~M. Wing}.} \bibinfo{year}{1990}\natexlab{}.
\newblock \showarticletitle{{Linearizability: A Correctness Condition for
  Concurrent Objects}}.
\newblock \bibinfo{journal}{\emph{ACM Trans. Program. Lang. Syst.}}
  \bibinfo{volume}{12}, \bibinfo{number}{3} (\bibinfo{date}{jul}
  \bibinfo{year}{1990}), \bibinfo{pages}{463--492}.
\newblock
\showISSN{0164-0925}
\urldef\tempurl%
\url{https://doi.org/10.1145/78969.78972}
\showDOI{\tempurl}


\bibitem[{IBM}(2005)]%
        {ppc:manual}
\bibfield{author}{\bibinfo{person}{{IBM}}.} \bibinfo{year}{2005}\natexlab{}.
\newblock \bibinfo{title}{{PowerPC Architecture Book, Version 2.02. Book I:
  PowerPC User Instruction Set Architecture}}.
\newblock \bibinfo{howpublished}{\url{http://www.ibm.com/developerworks/}}.
\newblock


\bibitem[{Intel}(2022)]%
        {intel:manual}
\bibfield{author}{\bibinfo{person}{{Intel}}.} \bibinfo{year}{2022}\natexlab{}.
\newblock \bibinfo{title}{{Intel 64 and IA-32 Architectures Developer's
  Manual}}.
\newblock \bibinfo{howpublished}{\url{http://www.intel.com/}}.
\newblock


\bibitem[Kappes and Anastasiadis(2021)]%
        {10.1145/3437801.3441583}
\bibfield{author}{\bibinfo{person}{Giorgos Kappes} {and}
  \bibinfo{person}{Stergios~V. Anastasiadis}.} \bibinfo{year}{2021}\natexlab{}.
\newblock \showarticletitle{{POSTER: A Lock-Free Relaxed Concurrent Queue for
  Fast Work Distribution}}. In \bibinfo{booktitle}{\emph{Proceedings of the
  26th ACM SIGPLAN Symposium on Principles and Practice of Parallel
  Programming}} (Virtual Event, Republic of Korea)
  \emph{(\bibinfo{series}{PPoPP '21})}. \bibinfo{publisher}{ACM},
  \bibinfo{address}{New York, NY, USA}, \bibinfo{pages}{454--456}.
\newblock
\showISBNx{9781450382946}
\urldef\tempurl%
\url{https://doi.org/10.1145/3437801.3441583}
\showDOI{\tempurl}


\bibitem[Kirsch et~al\mbox{.}(2013)]%
        {Kirsch:2013:FSL:2960356.2960376}
\bibfield{author}{\bibinfo{person}{Christoph~M. Kirsch},
  \bibinfo{person}{Michael Lippautz}, {and} \bibinfo{person}{Hannes Payer}.}
  \bibinfo{year}{2013}\natexlab{}.
\newblock \showarticletitle{{Fast and Scalable, Lock-Free k-FIFO Queues}}. In
  \bibinfo{booktitle}{\emph{Proceedings of the 12th International Conference on
  Parallel Computing Technologies - Volume 7979}}.
  \bibinfo{publisher}{Springer-Verlag}, \bibinfo{address}{Berlin, Heidelberg},
  \bibinfo{pages}{208--223}.
\newblock
\showISBNx{978-3-642-39957-2}
\urldef\tempurl%
\url{https://doi.org/10.1007/978-3-642-39958-9_18}
\showDOI{\tempurl}


\bibitem[Kogan and Petrank(2011)]%
        {kpWFQUEUE}
\bibfield{author}{\bibinfo{person}{Alex Kogan} {and} \bibinfo{person}{Erez
  Petrank}.} \bibinfo{year}{2011}\natexlab{}.
\newblock \showarticletitle{{Wait-free Queues with Multiple Enqueuers and
  Dequeuers}}. In \bibinfo{booktitle}{\emph{Proceedings of the 16th ACM
  Symposium on Principles and Practice of Parallel Programming}} (San Antonio,
  TX, USA) \emph{(\bibinfo{series}{PPoPP '11})}. \bibinfo{publisher}{ACM},
  \bibinfo{address}{New York, NY, USA}, \bibinfo{pages}{223--234}.
\newblock
\showISBNx{978-1-4503-0119-0}
\urldef\tempurl%
\url{https://doi.org/10.1145/1941553.1941585}
\showDOI{\tempurl}


\bibitem[Kogan and Petrank(2012)]%
        {Kogan:2012:MCF:2145816.2145835}
\bibfield{author}{\bibinfo{person}{Alex Kogan} {and} \bibinfo{person}{Erez
  Petrank}.} \bibinfo{year}{2012}\natexlab{}.
\newblock \showarticletitle{{A Methodology for Creating Fast Wait-free Data
  Structures}}. In \bibinfo{booktitle}{\emph{Proceedings of the 17th ACM
  SIGPLAN Symposium on Principles and Practice of Parallel Programming}} (New
  Orleans, Louisiana, USA) \emph{(\bibinfo{series}{PPoPP '12})}.
  \bibinfo{publisher}{ACM}, \bibinfo{address}{New York, NY, USA},
  \bibinfo{pages}{141--150}.
\newblock
\showISBNx{978-1-4503-1160-1}
\urldef\tempurl%
\url{https://doi.org/10.1145/2145816.2145835}
\showDOI{\tempurl}


\bibitem[Koval and Aksenov(2020)]%
        {10.1145/3332466.3374508}
\bibfield{author}{\bibinfo{person}{Nikita Koval} {and} \bibinfo{person}{Vitaly
  Aksenov}.} \bibinfo{year}{2020}\natexlab{}.
\newblock \showarticletitle{{POSTER: Restricted Memory-Friendly Lock-Free
  Bounded Queues}}. In \bibinfo{booktitle}{\emph{Proceedings of the 25th ACM
  SIGPLAN Symposium on Principles and Practice of Parallel Programming}} (San
  Diego, California) \emph{(\bibinfo{series}{PPoPP '20})}.
  \bibinfo{publisher}{ACM}, \bibinfo{address}{New York, NY, USA},
  \bibinfo{pages}{433--434}.
\newblock
\showISBNx{9781450368186}
\urldef\tempurl%
\url{https://doi.org/10.1145/3332466.3374508}
\showDOI{\tempurl}


\bibitem[Krizhanovsky(2013)]%
        {Krizhanovsky:2013:LMM:2492102.2492106}
\bibfield{author}{\bibinfo{person}{Alexander Krizhanovsky}.}
  \bibinfo{year}{2013}\natexlab{}.
\newblock \showarticletitle{{Lock-free Multi-producer Multi-consumer Queue on
  Ring Buffer}}.
\newblock \bibinfo{journal}{\emph{Linux J.}} \bibinfo{volume}{2013},
  \bibinfo{number}{228}, Article \bibinfo{articleno}{4} (\bibinfo{year}{2013}).
\newblock
\showISSN{1075-3583}


\bibitem[Lamport(1979)]%
        {Lamport:1979:MMC:1311099.1311750}
\bibfield{author}{\bibinfo{person}{Leslie Lamport}.}
  \bibinfo{year}{1979}\natexlab{}.
\newblock \showarticletitle{{How to Make a Multiprocessor Computer That
  Correctly Executes Multiprocess Programs}}.
\newblock \bibinfo{journal}{\emph{IEEE Trans. Comput.}} \bibinfo{volume}{28},
  \bibinfo{number}{9} (\bibinfo{date}{Sept.} \bibinfo{year}{1979}),
  \bibinfo{pages}{690--691}.
\newblock
\showISSN{0018-9340}


\bibitem[{Liblfds}(2022a)]%
        {liblfds}
\bibfield{author}{\bibinfo{person}{{Liblfds}}.}
  \bibinfo{year}{2022}\natexlab{a}.
\newblock \bibinfo{title}{{Lock-free Data Structure Library}}.
\newblock
\newblock
\newblock
\shownote{\url{https://www.liblfds.org}}.


\bibitem[{Liblfds}(2022b)]%
        {ringdisapp}
\bibfield{author}{\bibinfo{person}{{Liblfds}}.}
  \bibinfo{year}{2022}\natexlab{b}.
\newblock \bibinfo{title}{Ringbuffer disappointment 2016-04-29}.
\newblock
\newblock
\newblock
\shownote{\url{https://www.liblfds.org/slblog/index.html}}.


\bibitem[{Lockless Inc.}(2022)]%
        {allocators}
\bibfield{author}{\bibinfo{person}{{Lockless Inc.}}}
  \bibinfo{year}{2022}\natexlab{}.
\newblock \bibinfo{title}{{Memory Allocator Benchmarks}}.
\newblock \bibinfo{howpublished}{\url{https://locklessinc.com}}.
\newblock


\bibitem[Michael and Scott(1998)]%
        {Michael:1998:NAP:292022.292026}
\bibfield{author}{\bibinfo{person}{Maged~M. Michael} {and}
  \bibinfo{person}{Michael~L. Scott}.} \bibinfo{year}{1998}\natexlab{}.
\newblock \showarticletitle{{Nonblocking Algorithms and Preemption-Safe Locking
  on Multiprogrammed Shared Memory Multiprocessors}}.
\newblock \bibinfo{journal}{\emph{J. Parallel and Distrib. Comput.}}
  \bibinfo{volume}{51}, \bibinfo{number}{1} (\bibinfo{date}{May}
  \bibinfo{year}{1998}), \bibinfo{pages}{1--26}.
\newblock
\showISSN{0743-7315}
\urldef\tempurl%
\url{https://doi.org/10.1006/jpdc.1998.1446}
\showDOI{\tempurl}


\bibitem[Milman et~al\mbox{.}(2018)]%
        {10.1145/3210377.3210388}
\bibfield{author}{\bibinfo{person}{Gal Milman}, \bibinfo{person}{Alex Kogan},
  \bibinfo{person}{Yossi Lev}, \bibinfo{person}{Victor Luchangco}, {and}
  \bibinfo{person}{Erez Petrank}.} \bibinfo{year}{2018}\natexlab{}.
\newblock \showarticletitle{{BQ: A Lock-Free Queue with Batching}}. In
  \bibinfo{booktitle}{\emph{Proceedings of the 30th on Symposium on Parallelism
  in Algorithms and Architectures}} (Vienna, Austria)
  \emph{(\bibinfo{series}{SPAA '18})}. \bibinfo{publisher}{ACM},
  \bibinfo{address}{New York, NY, USA}, \bibinfo{pages}{99--109}.
\newblock
\showISBNx{9781450357999}
\urldef\tempurl%
\url{https://doi.org/10.1145/3210377.3210388}
\showDOI{\tempurl}


\bibitem[{MIPS}(2022)]%
        {mips:manual}
\bibfield{author}{\bibinfo{person}{{MIPS}}.} \bibinfo{year}{2022}\natexlab{}.
\newblock \bibinfo{title}{{MIPS32/MIPS64 Rev. 6.06}}.
\newblock
  \bibinfo{howpublished}{\url{http://www.mips.com/products/architectures/}}.
\newblock


\bibitem[Moir et~al\mbox{.}(2005)]%
        {Moir:2005:UEI:1073970.1074013}
\bibfield{author}{\bibinfo{person}{Mark Moir}, \bibinfo{person}{Daniel
  Nussbaum}, \bibinfo{person}{Ori Shalev}, {and} \bibinfo{person}{Nir Shavit}.}
  \bibinfo{year}{2005}\natexlab{}.
\newblock \showarticletitle{{Using Elimination to Implement Scalable and
  Lock-free FIFO Queues}}. In \bibinfo{booktitle}{\emph{Proceedings of the 17th
  ACM Symposium on Parallelism in Algorithms and Architectures}} (Las Vegas,
  Nevada, USA) \emph{(\bibinfo{series}{SPAA '05})}. \bibinfo{pages}{253--262}.
\newblock
\showISBNx{1-58113-986-1}
\urldef\tempurl%
\url{https://doi.org/10.1145/1073970.1074013}
\showDOI{\tempurl}


\bibitem[Morrison and Afek(2013)]%
        {Morrison:2013:FCQ:2442516.2442527}
\bibfield{author}{\bibinfo{person}{Adam Morrison} {and} \bibinfo{person}{Yehuda
  Afek}.} \bibinfo{year}{2013}\natexlab{}.
\newblock \showarticletitle{{Fast Concurrent Queues for x86 Processors}}. In
  \bibinfo{booktitle}{\emph{Proceedings of the 18th ACM SIGPLAN Symposium on
  Principles and Practice of Parallel Programming}} (Shenzhen, China)
  \emph{(\bibinfo{series}{PPoPP '13})}. \bibinfo{publisher}{ACM},
  \bibinfo{address}{New York, NY, USA}, \bibinfo{pages}{103--112}.
\newblock
\showISBNx{978-1-4503-1922-5}
\urldef\tempurl%
\url{https://doi.org/10.1145/2442516.2442527}
\showDOI{\tempurl}


\bibitem[Nikolaev(2019)]%
        {nikolaev:LIPIcs:2019:11335}
\bibfield{author}{\bibinfo{person}{Ruslan Nikolaev}.}
  \bibinfo{year}{2019}\natexlab{}.
\newblock \showarticletitle{{A Scalable, Portable, and Memory-Efficient
  Lock-Free FIFO Queue}}. In \bibinfo{booktitle}{\emph{33rd International
  Symposium on Distributed Computing (DISC 2019)}}
  \emph{(\bibinfo{series}{Leibniz International Proceedings in Informatics
  (LIPIcs)}, Vol.~\bibinfo{volume}{146})},
  \bibfield{editor}{\bibinfo{person}{Jukka Suomela}} (Ed.).
  \bibinfo{publisher}{Schloss Dagstuhl--Leibniz-Zentrum fuer Informatik},
  \bibinfo{address}{Dagstuhl, Germany}, \bibinfo{pages}{28:1--28:16}.
\newblock
\showISBNx{978-3-95977-126-9}
\showISSN{1868-8969}
\urldef\tempurl%
\url{https://doi.org/10.4230/LIPIcs.DISC.2019.28}
\showDOI{\tempurl}


\bibitem[Nikolaev and Ravindran(2019)]%
        {10.1145/3293611.3331575}
\bibfield{author}{\bibinfo{person}{Ruslan Nikolaev} {and}
  \bibinfo{person}{Binoy Ravindran}.} \bibinfo{year}{2019}\natexlab{}.
\newblock \showarticletitle{{Brief Announcement: Hyaline: Fast and Transparent
  Lock-Free Memory Reclamation}}. In \bibinfo{booktitle}{\emph{Proceedings of
  the 2019 ACM Symposium on Principles of Distributed Computing}} (Toronto ON,
  Canada) \emph{(\bibinfo{series}{PODC '19})}. \bibinfo{publisher}{ACM},
  \bibinfo{address}{New York, NY, USA}, \bibinfo{pages}{419--421}.
\newblock
\showISBNx{9781450362177}
\urldef\tempurl%
\url{https://doi.org/10.1145/3293611.3331575}
\showDOI{\tempurl}


\bibitem[Nikolaev and Ravindran(2020)]%
        {10.1145/3332466.3374540}
\bibfield{author}{\bibinfo{person}{Ruslan Nikolaev} {and}
  \bibinfo{person}{Binoy Ravindran}.} \bibinfo{year}{2020}\natexlab{}.
\newblock \showarticletitle{{Universal Wait-Free Memory Reclamation}}. In
  \bibinfo{booktitle}{\emph{Proceedings of the 25th ACM SIGPLAN Symposium on
  Principles and Practice of Parallel Programming}}. \bibinfo{publisher}{ACM},
  \bibinfo{address}{New York, NY, USA}, \bibinfo{pages}{130--143}.
\newblock
\showISBNx{9781450368186}
\urldef\tempurl%
\url{https://doi.org/10.1145/3332466.3374540}
\showDOI{\tempurl}


\bibitem[Nikolaev and Ravindran(2021a)]%
        {nikolaev_et_al:LIPIcs.DISC.2021.60}
\bibfield{author}{\bibinfo{person}{Ruslan Nikolaev} {and}
  \bibinfo{person}{Binoy Ravindran}.} \bibinfo{year}{2021}\natexlab{a}.
\newblock \showarticletitle{{Brief Announcement: Crystalline: Fast and Memory
  Efficient Wait-Free Reclamation}}. In \bibinfo{booktitle}{\emph{35th
  International Symposium on Distributed Computing (DISC 2021)}}
  \emph{(\bibinfo{series}{Leibniz International Proceedings in Informatics
  (LIPIcs)}, Vol.~\bibinfo{volume}{209})},
  \bibfield{editor}{\bibinfo{person}{Seth Gilbert}} (Ed.).
  \bibinfo{publisher}{Schloss Dagstuhl -- Leibniz-Zentrum f{\"u}r Informatik},
  \bibinfo{address}{Dagstuhl, Germany}, \bibinfo{pages}{60:1--60:4}.
\newblock
\showISBNx{978-3-95977-210-5}
\showISSN{1868-8969}
\urldef\tempurl%
\url{https://doi.org/10.4230/LIPIcs.DISC.2021.60}
\showDOI{\tempurl}


\bibitem[Nikolaev and Ravindran(2021b)]%
        {10.1145/3453483.3454090}
\bibfield{author}{\bibinfo{person}{Ruslan Nikolaev} {and}
  \bibinfo{person}{Binoy Ravindran}.} \bibinfo{year}{2021}\natexlab{b}.
\newblock \showarticletitle{{Snapshot-Free, Transparent, and Robust Memory
  Reclamation for Lock-Free Data Structures}}. In
  \bibinfo{booktitle}{\emph{Proceedings of the 42nd ACM SIGPLAN International
  Conference on Programming Language Design and Implementation}}.
  \bibinfo{publisher}{ACM}, \bibinfo{address}{New York, NY, USA},
  \bibinfo{pages}{987--1002}.
\newblock
\showISBNx{9781450383912}
\urldef\tempurl%
\url{https://doi.org/10.1145/3453483.3454090}
\showDOI{\tempurl}


\bibitem[Nikolaev and Ravindran(2022)]%
        {10.1145/3503221.3508440}
\bibfield{author}{\bibinfo{person}{Ruslan Nikolaev} {and}
  \bibinfo{person}{Binoy Ravindran}.} \bibinfo{year}{2022}\natexlab{}.
\newblock \showarticletitle{{POSTER: wCQ: A Fast Wait-Free Queue with Bounded
  Memory Usage}}. In \bibinfo{booktitle}{\emph{Proceedings of the 27th ACM
  SIGPLAN Symposium on Principles and Practice of Parallel Programming}}
  (Seoul, Republic of Korea) \emph{(\bibinfo{series}{PPoPP '22})}.
  \bibinfo{publisher}{ACM}, \bibinfo{address}{New York, NY, USA},
  \bibinfo{pages}{461--462}.
\newblock
\showISBNx{9781450392044}
\urldef\tempurl%
\url{https://doi.org/10.1145/3503221.3508440}
\showDOI{\tempurl}


\bibitem[Ramalhete and Correia(2016)]%
        {pedroWFQUEUEFULL}
\bibfield{author}{\bibinfo{person}{Pedro Ramalhete} {and}
  \bibinfo{person}{Andreia Correia}.} \bibinfo{year}{2016}\natexlab{}.
\newblock \showarticletitle{{A Wait-Free Queue with Wait-Free Memory
  Reclamation (Complete Paper)}}.
\newblock
\urldef\tempurl%
\url{https://github.com/pramalhe/ConcurrencyFreaks/blob/master/papers/crturnqueue-2016.pdf}
\showURL{%
\tempurl}


\bibitem[Ramalhete and Correia(2017a)]%
        {10.1145/3087556.3087588}
\bibfield{author}{\bibinfo{person}{Pedro Ramalhete} {and}
  \bibinfo{person}{Andreia Correia}.} \bibinfo{year}{2017}\natexlab{a}.
\newblock \showarticletitle{{Brief Announcement: Hazard Eras - Non-Blocking
  Memory Reclamation}}. In \bibinfo{booktitle}{\emph{Proceedings of the 29th
  ACM Symposium on Parallelism in Algorithms and Architectures}} (Washington,
  DC, USA) \emph{(\bibinfo{series}{SPAA '17})}. \bibinfo{publisher}{ACM},
  \bibinfo{address}{New York, NY, USA}, \bibinfo{pages}{367--369}.
\newblock
\showISBNx{9781450345934}
\urldef\tempurl%
\url{https://doi.org/10.1145/3087556.3087588}
\showDOI{\tempurl}


\bibitem[Ramalhete and Correia(2017b)]%
        {pedroWFQUEUE}
\bibfield{author}{\bibinfo{person}{Pedro Ramalhete} {and}
  \bibinfo{person}{Andreia Correia}.} \bibinfo{year}{2017}\natexlab{b}.
\newblock \showarticletitle{{POSTER: A Wait-Free Queue with Wait-Free Memory
  Reclamation}}. In \bibinfo{booktitle}{\emph{Proceedings of the 22nd ACM
  SIGPLAN Symposium on Principles and Practice of Parallel Programming}}
  (Austin, Texas, USA) \emph{(\bibinfo{series}{PPoPP '17})}.
  \bibinfo{publisher}{ACM}, \bibinfo{address}{New York, NY, USA},
  \bibinfo{pages}{453--454}.
\newblock
\showISBNx{978-1-4503-4493-7}
\urldef\tempurl%
\url{https://doi.org/10.1145/3018743.3019022}
\showDOI{\tempurl}


\bibitem[Sarkar et~al\mbox{.}(2012)]%
        {Sarkar:2012:SCP:2254064.2254102}
\bibfield{author}{\bibinfo{person}{Susmit Sarkar}, \bibinfo{person}{Kayvan
  Memarian}, \bibinfo{person}{Scott Owens}, \bibinfo{person}{Mark Batty},
  \bibinfo{person}{Peter Sewell}, \bibinfo{person}{Luc Maranget},
  \bibinfo{person}{Jade Alglave}, {and} \bibinfo{person}{Derek Williams}.}
  \bibinfo{year}{2012}\natexlab{}.
\newblock \showarticletitle{{Synchronising C/C++ and POWER}}. In
  \bibinfo{booktitle}{\emph{Proceedings of the 33rd ACM SIGPLAN Conference on
  Programming Language Design and Implementation}} (Beijing, China)
  \emph{(\bibinfo{series}{PLDI '12})}. \bibinfo{publisher}{ACM},
  \bibinfo{address}{New York, NY, USA}, \bibinfo{pages}{311--322}.
\newblock
\showISBNx{978-1-4503-1205-9}
\urldef\tempurl%
\url{https://doi.org/10.1145/2254064.2254102}
\showDOI{\tempurl}


\bibitem[{SPDK Developers}(2022)]%
        {SPDK}
\bibfield{author}{\bibinfo{person}{{SPDK Developers}}.}
  \bibinfo{year}{2022}\natexlab{}.
\newblock \showarticletitle{{Storage Performance Development Kit}}.
\newblock
\newblock
\shownote{\url{https://spdk.io/}}.


\bibitem[Tsigas and Zhang(2001)]%
        {Tsigas:2001:SFS:378580.378611}
\bibfield{author}{\bibinfo{person}{Philippas Tsigas} {and} \bibinfo{person}{Yi
  Zhang}.} \bibinfo{year}{2001}\natexlab{}.
\newblock \showarticletitle{{A Simple, Fast and Scalable Non-blocking
  Concurrent FIFO Queue for Shared Memory Multiprocessor Systems}}. In
  \bibinfo{booktitle}{\emph{Proceedings of the 13th ACM Symposium on Parallel
  Algorithms and Architectures}} (Crete Island, Greece)
  \emph{(\bibinfo{series}{SPAA '01})}. \bibinfo{pages}{134--143}.
\newblock
\showISBNx{1-58113-409-6}
\urldef\tempurl%
\url{https://doi.org/10.1145/378580.378611}
\showDOI{\tempurl}


\bibitem[Vyukov(2022)]%
        {vyakov}
\bibfield{author}{\bibinfo{person}{Dmitry Vyukov}.}
  \bibinfo{year}{2022}\natexlab{}.
\newblock \bibinfo{title}{{Bounded MPMC queue}}.
\newblock
\newblock
\newblock
\shownote{\url{http://www.1024cores.net/home/lock-free-algorithms/queues/bounded-mpmc-queue}}.


\bibitem[Wen et~al\mbox{.}(2018)]%
        {10.1145/3178487.3178488}
\bibfield{author}{\bibinfo{person}{Haosen Wen}, \bibinfo{person}{Joseph
  Izraelevitz}, \bibinfo{person}{Wentao Cai}, \bibinfo{person}{H.~Alan Beadle},
  {and} \bibinfo{person}{Michael~L. Scott}.} \bibinfo{year}{2018}\natexlab{}.
\newblock \showarticletitle{{Interval-Based Memory Reclamation}}. In
  \bibinfo{booktitle}{\emph{Proceedings of the 23rd ACM SIGPLAN Symposium on
  Principles and Practice of Parallel Programming}} (Vienna, Austria)
  \emph{(\bibinfo{series}{PPoPP '18})}. \bibinfo{publisher}{ACM},
  \bibinfo{address}{New York, NY, USA}, \bibinfo{pages}{1--13}.
\newblock
\showISBNx{9781450349826}
\urldef\tempurl%
\url{https://doi.org/10.1145/3178487.3178488}
\showDOI{\tempurl}


\bibitem[Yang and Mellor-Crummey(2016)]%
        {Yang:2016:WQF:2851141.2851168}
\bibfield{author}{\bibinfo{person}{Chaoran Yang} {and} \bibinfo{person}{John
  Mellor-Crummey}.} \bibinfo{year}{2016}\natexlab{}.
\newblock \showarticletitle{{A Wait-free Queue As Fast As Fetch-and-add}}. In
  \bibinfo{booktitle}{\emph{Proceedings of the 21st ACM SIGPLAN Symposium on
  Principles and Practice of Parallel Programming}} (Barcelona, Spain)
  \emph{(\bibinfo{series}{PPoPP '16})}. \bibinfo{publisher}{ACM},
  \bibinfo{address}{New York, NY, USA}, Article \bibinfo{articleno}{16},
  \bibinfo{numpages}{13}~pages.
\newblock
\showISBNx{978-1-4503-4092-2}
\urldef\tempurl%
\url{https://doi.org/10.1145/2851141.2851168}
\showDOI{\tempurl}


\end{thebibliography}

\appendix

\begin{figure*}
\begin{subfigure}{\columnwidth}
\begin{algorithm2e}[H]
\textbf{wCQ *} LHead = <empty \textbf{wCQ}>, LTail = LHead\;
\BlankLine
\Fn {\textbf{void} Enqueue\_Unbounded(\textbf{void *} p)} {
\textbf{wCQ *} ltail = hp.protectPtr(HPTail, Load(\&LTail))\;
\tcp{Enqueue\_Ptr() returns \textbf{false} if full or finalized}
\If {\upshape \Not ltail->next \AndOp ltail->Enqueue\_Ptr(p, finalize=True)} {
  hp.clear()\;
  \Return\;
}
\textbf{wCQ *} cq = alloc\_wCQ()\tcp*{Allocate wCQ}
cq->init\_wCQ(p)\tcp*{Initialize \& put p}
enqueuers[TID] = cq\tcp*{== Slow path (CRTurn) ==}
\For{\upshape i = 0 .. NUM\_THRDS-1}  {
  \If {\upshape enqueuers[TID] = \Null} {
    hp.clear()\;
    \Return\;
  }
  \textbf{wCQ *} ltail = hp.protectPtr(HPTail, Load(\&LTail))\;
  \lIf {\upshape ltail $\ne$ Load(\&LTail)} {\Continue}
  \If {\upshape enqueuers[ltail->enqTid] = ltail} {
    CAS(\&enqueuers[ltail->enqTid], ltail, \Null)\;
  }
  \For{\upshape j = 1 .. NUM\_THRDS}  {
    cq = enqueuers[(j + ltail->enqTid) \ModOp NUM\_THRDS]\;
    \lIf {\upshape cq = \Null} {\Continue}
    finalize\_wCQ(ltail)\tcp*{Duplicate finalize is OK since}
    CAS(\&ltail->next, \Null, cq)\tcp*{cq or another node follows}
    \Break\;
  }
  \textbf{wCQ *} lnext = Load(\&ltail->next)\;
  \If {\upshape lnext $\ne$ \Null} {
    finalize\_wCQ(ltail)\tcp*{Duplicate finalize is OK since}
    CAS(\&LTail, ltail, lnext)\tcp*{lnext or another node follows}
  }
}
enqueuers[TID] = \Null\;
hp.clear()\;
}

\Fn{\textbf{bool} dequeue\_rollback(\textbf{wCQ *} prReq, \textbf{wCQ *} myReq)} {
  deqself[TID] = prReq\;
  giveUp(myReq, TID)\;
  \If {\upshape deqhelp[TID] $\ne$ myReq} {
    deqself[TID] = myReq\;
    \Return False\;
  }
  hp.clear()\;
  \Return True\;
}
\end{algorithm2e}
\end{subfigure}%
\hspace{-1.5em}
\begin{subfigure}{\columnwidth}
\begin{algorithm2e}[H]
\setcounter{AlgoLine}{37}
\Fn{\textbf{void} finalize\_wCQ(\textbf{wCQ *} cq)} {
OR(\&cq->Tail, \{ .Value=0, .Finalize=1 \})\;
}

\Fn{\textbf{void *} Dequeue\_Unbounded()} {
\textbf{wCQ *} lhead = hp.protectPtr(HPHead, Load(\&LHead))\;
\tcp{skip\_last modifies the default behavior for Dequeue on}
\tcp{the last element in aq, as described in the text.}
\textbf{void *} p = lhead->Dequeue\_Ptr(skip\_last=True)\;
\If {\upshape p $\ne$ \Last} {
\If {\upshape p $\ne$ \Null $ $ \OrOp lhead->next = \Null} {
  hp.clear()\;
  \Return p\;
}
}
\textbf{wCQ *} prReq = deqself[TID]\tcp*{== Slow path (CRTurn) ==}
\textbf{wCQ *} myReq = deqhelp[TID]\;
deqself[TID] = myReq\;
\For{\upshape i = 0 .. NUM\_THRDS-1}  {
  \lIf {\upshape deqhelp[TID] != myReq} {\Break}
  \textbf{wCQ *} lhead = hp.protectPtr(HPHead, Load(\&LHead))\;
  \lIf {\upshape lhead $\ne$ Load(\&LHead)} {\Continue}
  \textbf{void *} p = lhead->Dequeue\_Ptr(skip\_last=True)\;
  \If {\upshape p $\ne$ \Last} {
  \If {\upshape p $\ne$ \Null $ $ \OrOp lhead->next = \Null} {
    \lIf {\upshape \Not dequeue\_rollback(prReq, myReq)} {\Break}
    \Return p\;
  }
  Store(\&lhead->aq.Threshold, $3n - 1$)\;
  p = lhead->Dequeue\_Ptr(skip\_last=True)\;
  \If {\upshape p $\ne$ \Last $ $ \AndOp p $\ne$ \Null} {
    \lIf {\upshape \Not dequeue\_rollback(prReq, myReq)} {\Break}
    \Return p\;
  }
  }
  \textbf{wCQ *} lnext = hp.protectPtr(HPNext, Load(\&lhead->next))\;
  \lIf {\upshape lhead $\ne$ Load(\&LHead)} {\Continue}
  \lIf {\upshape searchNext(lhead, lnext) $\ne$ NOIDX} {casDeqAndHead(lhead, lnext)}
}
\textbf{wCQ *} myCQ = deqhelp[TID]\;
\textbf{wCQ *} lhead = hp.protectPtr(HPHead, Load(\&LHead))\;
\If {\upshape lhead = Load(\&LHead) \AndOp myCQ = Load(\&lhead->next)} {
  CAS(\&LHead, lhead, myCQ)\;
}
hp.clear()\;
hp.retire(prReq)\;
\Return myCQ->Locate\_Last\_Ptr()\;
}
\end{algorithm2e}
\end{subfigure}%
\caption{Adapting CRTurn to an unbounded wCQ-based queue design (high-level methods).}
\label{alg:lwfcq}
\end{figure*}

\section{Appendix: Unbounded Queue}
\label{sec:unbounded}

LSCQ and LCRQ implement unbounded queues by using an outer
layer of M\&S lock-free queue which links ring buffers together.
Since operations on the outer layer are very rare, the cost
is dominated by ring buffer operations. wCQ can follow the same idea.

Although the outer layer does not have to be performant, it still must be wait-free with bounded memory usage. However, M\&S queue is only lock-free. The (non-performant) CRTurn wait-free queue~\cite{pedroWFQUEUEFULL, pedroWFQUEUE} does satisfy the aforementioned requirements. Moreover, CRTurn already implements wait-free memory reclamation by using hazard pointers in a special way. wCQ and CRTurn combined together would yield a fast queue with bounded memory usage.

Because CRTurn's design is non-trivial and is completely orthogonal to the wCQ presentation, its discussion is beyond the scope of this paper. We refer the reader to~\cite{pedroWFQUEUEFULL, pedroWFQUEUE} for more details about CRTurn. Below, we sketch expected changes to CRTurn (assuming prior knowledge of CRTurn) to link ring buffers rather than individual nodes.
In Figure~\ref{alg:lwfcq}, we present pseudocode (assuming that entries are pointers) with corresponding changes to enqueue and dequeue operations in CRTurn.
For convenience, we retain the same variable and function names as in~\cite{pedroWFQUEUEFULL} (e.g., \textit{giveUp} that is not shown here). Similar to~\cite{pedroWFQUEUEFULL}, we assume memory reclamation API based on hazard pointers (the \textit{hp} object).

The high-level idea is that we create a wait-free queue (list) of
wait-free
ring buffers, where \textit{LHead} and \textit{LTail} represent corresponding
head and tail of the list. \textit{Enqueue\_Unbounded} will first attempt
to insert an entry to the last ring buffers as long as \textit{LTail} is
already pointing to the last ring buffer. Otherwise, it allocates a new ring buffer and inserts a new element. It then follows CRTurn's procedure to insert
the ring buffer to the list. The only difference is that when helping to
insert the new ring buffer, threads will make sure that the previous ring buffer is finalized.

\textit{Dequeue\_Unbounded} will first attempt to fetch an element from the
first ring buffer. wCQ's Dequeue for \textbf{aq} needs to be modified to
detect the very last entry in a \textit{finalized} ring buffer. (Note that it can only be done for finalized ring buffers, where no subsequent entries can be inserted.) Instead of returning the true entry, \textit{Dequeue\_Ptr} returns a special
\textbf{last} value. This approach helps to retain CRTurn's wait-freedom
properties as every single ring buffer contains at least one entry.
Helper methods must also be modified accordingly.

\end{document}